\newcolumntype{P}[1]{>{\centering\arraybackslash}p{#1}}
\newtheorem{theorem}{Theorem}
\newtheorem{definition}[theorem]{Definition}
\newtheorem{proposition}[theorem]{Proposition}
\newtheorem{corollary}[theorem]{Corollary}
\theoremstyle{remark}
\begin{document}
%
\title{Co-Pulsing FDA Radar}

\author{Wanghan Lv, Kumar Vijay Mishra, \IEEEmembership{Senior Member, IEEE} and Shichao Chen
\thanks{W. L. and S. C. are with the School of Computer Science and Technology, Nanjing Tech University, Nanjing 211816, China, e-mail: \{lwanghan, 202110006919\}@njtech.edu.cn. }
\thanks{K. V. M. is with The University of Iowa, Iowa City, IA 52246 USA, e-mail: kumarvijay-mishra@uiowa.edu.}
}

\maketitle

\IEEEpeerreviewmaketitle

\begin{abstract}
Target localization based on frequency diverse array (FDA) radar has lately garnered significant research interest. A linear frequency offset (FO) across FDA antennas yields a range-angle dependent beampattern that allows for joint estimation of range and direction-of-arrival (DoA). Prior works on FDA largely focus on the one-dimensional linear array to estimate only azimuth angle and range while ignoring the elevation and Doppler velocity. However, in many applications, the latter two parameters are also essential for target localization. Further, there is also an interest in radar systems that employ fewer measurements in temporal, Doppler, or spatial signal domains. We address these multiple challenges by proposing a \textit{c}o-prime L-shaped FDA, wherein \textit{c}o-prime FOs are applied across the elements of L-shaped co-prime array and each element transmits at a non-uniform \textit{c}o-prime pulse repetition interval (C$^3$ or C-Cube). This \textit{co-pulsing} FDA yields significantly large degrees-of-freedom (DoFs) for target localization in the range-azimuth-elevation-Doppler domain while also reducing the time-on-target and transmit spectral usage. By exploiting these DoFs, we develop \textit{C-C}ube auto-pair\textit{ing} (CCing) algorithm, in which all the parameters are \textit{ipso facto} paired during a joint estimation. We show that C-Cube FDA requires at least $2\sqrt{Q+1}-1$ antenna elements and $2\sqrt{Q+1}-1$ pulses to guarantee perfect recovery of $Q$ targets as against $Q+1$ elements and $Q+1$ pulses required by both L-shaped uniform linear array and L-shaped linear FO FDA with uniform pulsing. We derive Cram\'er-Rao bounds (CRBs) for joint angle-range-Doppler estimation errors for C-Cube FDA and provide the conditions under which the CRBs exist. Numerical experiments with our CCing algorithm show great performance improvements in parameter recovery, wherein C-Cube radar achieves at least $15\%$ higher target hit-rate with shorter dwell time than its uniform counterparts. 
\end{abstract}

\begin{IEEEkeywords}
Co-prime pulsing, frequency diverse array, L-shaped array, parameter estimation, sparse recovery.
\end{IEEEkeywords}

\IEEEpeerreviewmaketitle

\section{Introduction}
During the past several decades, phased array antenna technology has progressed significantly \cite{RJMailloux2018,galati1994advanced,frank2008advanced} and found applications in diverse fields such as radar, sonar, ultrasound, and acoustics \cite{NIGiannoccaro2012,SYang2012}.
The ability of phased arrays to electronically steer a coherent beam toward boresight is useful for tracking weak targets and suppressing strong sidelobe interferences from other directions. Phased array antenna has only angle-dependent beampattern and, as a result, it is used to estimate only direction-of-arrival (DoA) \cite{MElmer2012}. To localize targets in both angle and range, beam-steering should be achieved across the signal bandwidth leading to a complicated waveform design. As an alternative, recently, a new framework of frequency diverse array (FDA) was proposed, wherein a small frequency offset (FO) to the carrier frequency is applied across the array elements \cite{PAntonik2006_1,PAntonik2006_2}, 
resulting in range and angle dependent beampattern. This has been shown to yield a joint estimation of target angle and range parameters \cite{PFSammartino2013,WQWang2014,WQWang2015}. In FDA radars, spatial (DoA) and range resolutions are fundamentally limited by array aperture and maximum frequency increment.

The classical FDA literature has largely focused on a one-dimensional (1-D) uniform linear array (ULA) with linearly increasing uniform FO across the array elements. The properties of 1-D FDA such as the periodicity of the beampattern in range, angle and time domains were introduced in \cite{MSecmen2007,JJHuang2009,THiggins2009}, where the coupling relationship of the beampattern and beam-steering were also derived. Later, for this uniform linear FDA, \cite{WQWang2014_2,JXu2015,RGui2018,FLiu2021} investigated joint DoA and range estimation algorithms. In \cite{WQWang2014_2}, a double-pulse method for range-angle localization was proposed by alternating the antenna between a phased array (zero offset) and FDA (non-zero offset) in subsequent pulses. This approach first estimates the target DoAs using the traditional phased array configuration and then localizes the targets in range domain using FDA. To estimate range and angle at the same time, an unambiguous approach for joint estimation was devised by combining multiple-input multiple-output (MIMO) configuration with FDA \cite{JXu2015}. This FDA-MIMO radar exploits degrees-of-freedom (DoFs) in the range-angle domains. Its estimation accuracy and computational complexity has been shown to improve in a bistatic configuration \cite{FLiu2021}. 
Later works have addressed the degraded beam-focusing ability of FDA-MIMO through approaches such as transmit subaperture FDA radar \cite{RGui2018}. 

Often an exceedingly large number of antennas are required to synthesize a given array aperture to unambiguously distinguish closely-spaced targets. The resulting unacceptably huge size, cost, weight, and area have led to the development of \textit{sparse} arrays, which leverage the presence of a limited number of targets in the scanned region. A uniform \cite{agrawal1972mutual} or random \cite{lo1964mathematical} removal of elements from a \textit{filled} array leads to grating lobes or increased sidelobe levels, thereby reducing the spatial resolution and directivity. However, these issues are mitigated through the use of more structured sparse designs such as co-prime arrays \cite{PPVaidyanathan2011,SQin2015,WZheng2021} which provide a closed form of sensor positions and offer enhanced DoFs for parameter estimation. 

In the context of FDA, introduction of the FOs requires additional bandwidth. Therefore, sparse FDA solutions focus on reducing both spectrum utilization and aperture without any serious degradation in localization performance. Some early FDA works suggested using logarithmic \cite{khan2014frequency}, non-uniform \cite{basit2017beam}, and random \cite{liu2016random} offsets to optimize the available bandwidth for filled FDAs. 
In \cite{SQin2017}, both co-prime arrays and co-prime FOs were introduced for an FDA radar and Bayesian compressive sensing (BCS) was used to jointly estimate angles and ranges. This approach required pre-defined spatial grids leading to a trade-off between gridding error and computational complexity. This \textit{co-prime FDA} was improved in \cite{RCao2021} through a doubly-Toeplitz-based estimation, which incorporated coarray interpolation and off-grid estimation techniques. To mitigate the effect of missing elements or \textit{holes} in the space-frequency coarray, \cite{LLiu2022} introduced a moving time-modulated co-prime FDA, wherein the majority of holes in the coarray positions and frequency offsets could be filled. Recently, co-prime FDA-MIMO has also been investigated to further enhance the accuracy and resolution performance through additional DoFs in polarization \cite{JWang2019} or unfolded structures \cite{CWang2020}. 
In \cite{sedighi2019optimum}, a sparse variant of FDA-MIMO with linear offsets was optimized for an optimal antenna placement. 
Nearly all of the aforementioned sparse geometries and FO designs have been investigated for only 1-D arrays. Some 2-D planar FDA arrays were considered in \cite{XLi2018,CWang2021} for retrieving both azimuth and elevation angles but these configurations are overly complex. 

    \begin{table*}
    \caption{Comparison with Prior Art}
    \label{tbl:summary}       
    \centering
    \begin{threeparttable}
    \begin{tabular}{p{4.1cm}P{1.5cm}p{2.5cm}p{1.5cm}P{1.2cm}P{2.1cm}P{2.1cm}}
    \hline\noalign{\smallskip}
    Array\tnote{a} & FO & Spectrum\tnote{b} & Beampattern\tnote{c} & PRI & Antennas\tnote{d} & Pulses\tnote{e}
    \\
    \noalign{\smallskip}
    \hline
    \noalign{\smallskip}
    L-shaped ULA (U-U) \cite{TKishigami2016} & None & $B$ & Angle & Uniform & $Q+1$ & $Q+1$\\
    L-shaped Co-prime array (C-U) \cite{AMElbir2020} & None & $B$ & Angle & Uniform & $2\sqrt{Q+1}-1$ & $Q+1$\\
    \hline
    \multicolumn{7}{l}{\textbf{This paper}:}\\
    L-shaped ULA (U-C)& None & $B$ & Angle & Co-prime & $Q+1$ & $2\sqrt{Q+1}-1$\\
    L-shaped Co-prime array (C-C)& None & $B$ & Angle & Co-prime & $2\sqrt{Q+1}-1$ & $2\sqrt{Q+1}-1$\\
    L-shaped FDA (U-Cube) & Linear & $B+(P_s-1)\Delta f$ & Angle-range & Uniform  & $Q+1$ &$Q+1$ \\
    L-shaped FDA (UUC)& Linear & $B+(P_s-1)\Delta f$ & Angle-range & Co-prime  & $Q+1$  &$2\sqrt{Q+1}-1$ \\
    L-shaped FDA (UCU)& Co-prime & $B+\xi_{P_s-1}\Delta f$ & Angle-range & Uniform & $Q+1$ & $Q+1$ \\
    L-shaped FDA (UCC)& Co-prime & $B+\xi_{P_s-1}\Delta f$ & Angle-range & Co-prime & $Q+1$ & $2\sqrt{Q+1}-1$\\
    L-shaped Co-prime FDA (CUU)& Linear & $B+(P_s-1)\Delta f$ & Angle-range & Uniform  & $Q+1$ & $Q+1$\\
    L-shaped Co-prime FDA (CUC) & Linear & $B+(P_s-1)\Delta f$ & Angle-range & Co-prime & $Q+1$ &$2\sqrt{Q+1}-1$ \\
    L-shaped Co-prime FDA (CCU) & Co-prime & $B+\xi_{P_s-1}\Delta f$ & Angle-range & Uniform & $2\sqrt{Q+1}-1$ & $Q+1$ \\

    L-shaped Co-prime FDA (C-Cube) & Co-prime & $B+\xi_{P_s-1}\Delta f$ & Angle-range & Co-prime & $2\sqrt{Q+1}-1$ & $2\sqrt{Q+1}-1$ \\
    \noalign{\smallskip}\hline\noalign{\smallskip}
    \end{tabular}
     \begin{tablenotes}[para]
     \item[a] The first, second, and third letters in the abbreviations denote the array structure (\textit{U}LA or \textit{c}o-prime), FO (\textit{u}niformly linear or \textit{c}o-prime), and PRI (\textit{u}niform or \textit{c}o-prime), respectively; `-' is used to indicate the absence of FOs.
     \item[b] Cumulative transmit bandwidth of all array elements, where $B$ is the bandwidth of transmit time-limited pulse, $P_s$ is the number of sensors along each axis, $\Delta f$ is the fundamental frequency increment, and $\xi_{P_s-1}$ for co-prime arrays denotes the maximum value in the co-prime set $\mathcal{S}_{s1} \cup \mathcal{S}_{s2}$.
     \item[c] Parameters that the antenna beampattern depends upon. 
     \item[d] Number of antennas in the same physical aperture, where $Q$ is the number of targets. 
     \item[e] Number of pulses in the same CPI. 
    \end{tablenotes}
    \end{threeparttable}
    \end{table*}
    
Sparsity may also be applied to the \textit{slow-time} domain. The Doppler resolution is determined by the number of transmit pulses. Hence, a large number of transmit pulses or a longer slow-time duration yields high Doppler precision but adversely affects the ability of the radar to track targets in other directions. Some recent studies address this by sparsely transmitting pulses randomly along slow-time at non-uniform pulse repetition frequency (PRF) \cite{na2018tendsur,Vijay_Random_Sparse_Step_Frequency_2020}. Since non-uniform random pulsing leads to high sidelobes in the Doppler spectrum \cite{Maier_Non_Uniform_PRI_1993}, very recently \cite{xu2021difference} suggested \textit{co-pulsing}, i.e. transmitting pulses at co-prime pulse repetition interval (PRI) for a single-antenna radar. As for FDA, in general, there is very limited literature on Doppler estimation. A joint range-angle-Doppler localization was proposed in \cite{xu2016adaptive} but the array was a non-sparse FDA-MIMO system. Sparse pulse transmission for FDA has remained unexamined so far.

Contrary to prior works, we address a multitude of these shortcomings via a simpler 2-D FDA structure in the form of an L-shaped array, which performs co-prime sampling in spatial, spectral, and Doppler domains. While L-shaped configuration has been previously investigated for ULA \cite{TKishigami2016} and co-prime array \cite{AMElbir2020} (both with uniform PRI), its application to FDA and co-prime pulsing remains unexplored. In this paper, we develop the theory and localization method for L-shaped \textit{c}o-prime array with \textit{c}o-prime FOs and \textit{c}o-prime pulsing (C$^3$ or C-Cube) that allows for joint estimation of 2-D DoAs, range, and Doppler velocity. We exploit the difference coarray, frequency difference equivalence and PRI difference equivalence to achieve a significantly increased number of DoFs in space-time-Doppler domain thereby saving resources in aperture, spectrum, and dwell time. Our \textit{C-C}ube auto-pair\textit{ing} parameter retrieval (CCing) algorithm is based on singular value decomposition (SVD) of the concatenated covariance matrix, wherein the elevation and azimuth (range and Doppler) are embodied in the left (right) eigenvectors. The auto-pairing property between left and right eigenvectors allows for automatic pairing of 2-D DoA with range and Doppler velocity. 

In the process, we also obtain results for several other co-pulsing L-shaped arrays: ULA, co-prime array, FDA with linear and co-prime FOs, and co-prime FDA with linear FOs. These are compared with their uniform pulsing counterparts in Table~\ref{tbl:summary}. Our theoretical performance guarantees show that, to perfectly recover parameters of $Q$ targets, C-Cube requires a minimum of $2\sqrt{Q+1}-1$ antenna elements and $2\sqrt{Q+1}-1$ pulses. In comparison, the uniform-pulsing L-shaped ULA and L-shaped linear FO FDA need at least $Q+1$ elements and $Q+1$ pulses. Note that L-shaped co-prime array has a similar requirement of pulses and antennas as C-Cube but the former yields only an angle-dependent beampattern. Note that arrays - FDAs and otherwise - with similar element spacing and pulsing pattern have identical guarantees regarding the number of elements and pulses. But these arrays may still differ in FOs and thereby have differing spectrum consumption. Overall, fewer pulses, antennas, and transmit frequencies in C-Cube compared to its ULA counterparts imply that the power and interceptibility of the former radar are also lower. 

The rest of the paper is organized as follows. In the next section, we formulate the signal model of L-shaped C-Cube radar. We develop our CCing algorithm in Section~\ref{Section_ESPRIT} and derive the recovery guarantees for various co-pulsing L-shaped arrays in Section~\ref{Section_PA}, where we also obtain Cram\'er-Rao error bounds (CRBs) for joint angle-range-Doppler estimation using C-Cube FDA. We discuss other related sparse configurations and alternatives in Section~\ref{sec:adv}. We validate our methods through numerical experiments in Section~\ref{Section_simulation} and conclude in Section~\ref{Section_conclusion}.

Throughout this paper, we reserve boldface lowercase, boldface uppercase uppercase, and calligraphic letters for vectors, matrices, and index sets, respectively. We denote the transpose, conjugate, and Hermitian by $(\cdot)^T$, $(\cdot)^*$ and $(\cdot)^H$, respectively; $\otimes$, $\odot$ and $\diamond$ denote the Kronecker, Khatri-Rao and Hadamard products, respectively; and $\mathrm{vec}(\cdot)$ is the vectorization operator that turns a matrix into a vector by stacking all columns on top of the another. The notation $\lfloor \cdot \rfloor$ indicates the greatest integer smaller than or equal to the argument; $\angle(.)$ denotes the phase of its argument; $\mathrm{diag}(\cdot)$ denotes a diagonal matrix uses the elements of input vector as its diagonal elements; $\mathrm{E}[\cdot]$ is the statistical expectation function; $\log(.)$ represents natural logarithm of a number; $\Vert \cdot \Vert_{\mathcal{F}}$ denotes the Frobenius norm; $\mathbf{I}_L$ is the $L\times L$ identity matrix; and $\mathbf{\Pi}^{\bot}_{\bm{W}}= \mathbf{I}_L- \mathbf{W}(\mathbf{W}^H \mathbf{W})^{-1}\mathbf{W}^H$ denotes the orthogonal projection into the null space of $\mathbf{W}^H$. 

\section{Signal Model}  \label{Section_SM}
Consider an L-shaped co-prime array with $2P_s-1$ sensors, consisting of two orthogonally-placed linear co-prime arrays such that there are $P_s$ sensors each along the $x$- and $z$-axes, including a common reference sensor at the origin (Fig.~\ref{L-shaped}). Denote $\bm{\xi}=[\xi_0d,\xi_1d\ldots,\xi_{P_s-1}d]^T$ as the positions of the array sensors along $x$-axis, 
where $d$ is regarded as the fundamental spatial (inter-element) spacing  set to $d=\lambda_b/2=c/(2f_b)$, $f_b$ is the base carrier frequency, $c$ is the speed of light, $ \xi_m \in \mathcal{S}_{s1}=\{M_s i,0\leq i \leq N_s-1 , i \in \mathbb{N} \}\cup  \mathcal{S}_{s2}=\{N_s j,1\leq j \leq 2M_s-1,j \in \mathbb{N}  \} $, $M_s$ and $N_s$ are co-prime integers and $M_s < N_s$. Without loss of generality, $0=\xi_0 < \xi_1< \ldots < \xi_{P_s-1}$. It follows that $P_s= N_s+2M_s-1$. 

\subsection{Transmit signal}
Each element transmits a signal with an incrementally offset carrier frequency. That is, a continuous-wave signal transmitted from the $m$-th, $0\leq m \leq P_s-1$, element of $x$-axis is 
\begin{eqnarray}
s_{\xi_m}(t)=A_{\xi_m} e^{\mathrm{j}2\pi f_{\xi_m}t},~~ 0\leq t\leq T_p,~ m=0,1,\ldots,P_s-1,
\end{eqnarray}
where $A_{\xi_m}$ is amplitude, $f_{\xi_m}=f_b+ \xi_m \Delta f$ is carrier frequency of the $m$-th element, $\Delta f$ is fundamental frequency increment, $T_p=1/B$ is radar pulse duration and $B$ is its bandwidth. 

Similarly, along the $z$-axis, the positions of the array sensor are also denoted as $\bm{\xi}=[\xi_0d,\xi_1d\ldots,\xi_{P_s-1}d]^T$. The transmit signal from the $m$-th, $0\leq m \leq P_s-1$, element of $z$-axis is 
\begin{eqnarray}
s_{\xi_{-m}}(t)=A_{\xi_{-m}} e^{\mathrm{j}2\pi f_{\xi_{-m}}t},~~ 0\leq t\leq T_p,~ m=0,1,\ldots,P_s-1,
\end{eqnarray}
where $A_{\xi_{-m}}$ is the amplitude. $f_{\xi_{-m}} = f_b  + \xi_{-m} \Delta f= f_b  - \xi_{m} \Delta f$ and $\xi_{-m} \triangleq -\xi_{m}$. Without loss of generality, set $A_{\xi_m}=A_{\xi_{-m}}=1$. For the sake of simplicity, we denote the element index of transmit L-shaped array as $m, -(P_s-1)\leq m \leq (P_s-1)$ where the non-negative values of $m$ in the range $0\leq m \leq (P_s-1)$ refer to the elements along the $x$-axis and the negative values $-(P_s-1)\leq m < 0$ refer to the elements along the $z$-axis.  

A total of $K =N_t+2M_t-1$ pulses are transmitted during the coherent processing interval (CPI). The $k$-th, $0\leq k \leq K-1$, pulse starts at time instant $\eta_k T$ where $\eta_k \in \mathcal{S}_{t1}=\{M_t i,0\leq i \leq N_t-1 , i \in \mathbb{N} \}\cup  \mathcal{S}_{t2}=\{N_t j,1\leq j \leq 2M_t-1,j \in \mathbb{N}  \} $ and $T$ is the fundamental pulse repetition interval (PRI) for the case of uniform pulsing. Assume $0=\eta_0 < \eta_1 < \ldots <\eta_{K-1}$. The $k$-th, $0\leq k \leq K-1$, echo is sampled at the rate $1/T_p$ in fast-time $t_s=\eta_k T+l_r T_p$, where $l_r=0,1,\ldots,L_r-1=\lfloor T/T_p \rfloor$. 

\begin{figure}[t]
\includegraphics[width=1.0\columnwidth]{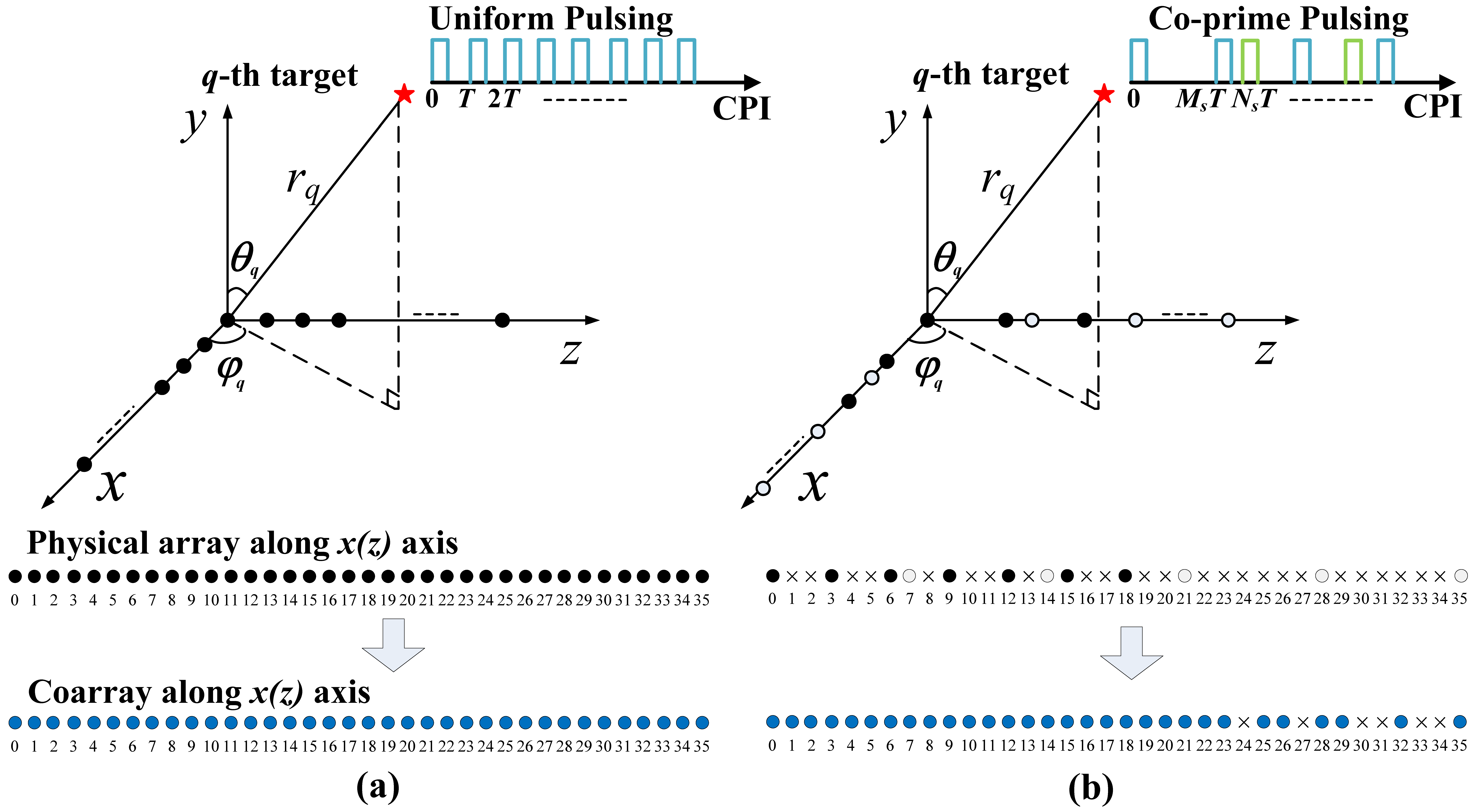}
\caption{Illustration of L-shaped FDAs with identical physical aperture and CPI duration. (a) Uniform L-shaped array with uniform FO and uniform PRI. (b) Co-prime L-shaped array with co-prime FO and co-prime PRI. In (b), along each axis, the sensor positions are given by the set $ \mathcal{S}_{s1} \cup \mathcal{S}_{s2}$ where $\mathcal{S}_{s1}$ corresponds to the black circles and $\mathcal{S}_{s2}$ corresponds to the grey circles. The holes in the coarray are denoted by '$\times$'. The transmit pulse sequence has been illustrated for a CPI corresponding to $8$ uniform PRIs.}
\label{L-shaped}
\end{figure}

\subsection{Operating conditions}
Assume that the radar scenario consists of $Q$ far-field point-targets. The transmit pulses are reflected back by the $Q$ targets and these echoes are captured by the radar receiver. Our goal is to recover the unknown parameter set $\bm{\gamma}_q=\{\theta_q, \varphi_q,r_q,\nu_q\}_{q=1}^{Q}$ where $\theta_q$ is the elevation angle, $\varphi_q$ is the azimuth angle, $r_q$ is the range and $\nu_q$ is the velocity of the $q$-th target. We make following assumptions about the target and radar parameters:
\begin{description}
    \item[A1] ``Narrowband platform'': Assume that the L-shaped FDA radar works in a narrowband environment, so the maximum increment across the L-shaped array satisfies $2 \xi_{P_s-1} \Delta f\ll f_b$. 
       
    \item[A2] ``Constant reflectivities'': The radar cross-section of each target is described by a Swerling-I model \cite{skolnik2008radar}, resulting in reflection coefficients being constant across pulses.
  
    \item[A3] ``Unambiguous DoA'': To ensure the array structure deprived of ambiguity, we have
\begin{align}                  \label{eq:SM3}
    \sin\theta_q \sin\varphi_q \neq \sin\theta_p \sin\varphi_p, &~~ \sin\theta_q \cos\varphi_q \neq \sin\theta_p \cos\varphi_p        \notag \\
   \text{for}~~~ 1\leq p & \neq  q \leq Q.
\end{align}
  
    \item[A4]  ``No range or Doppler ambiguities'': The range-Doppler pairs $\{(r_q,\nu_q)\}_{q=1}^Q$ lie in the radar’s unambiguous region of delay-Doppler plane $[0,R_{\max}]\times [0,\nu_{\max}]$. $R_{\mathrm{max}}=\frac{cT}{2}$ is the maximum unambiguous range and $\nu_{\mathrm{max}}=\frac{c}{f_bT}$ is the maximum unambiguous velocity, i.e., the time delays are no longer than the PRI and Doppler frequencies are up to the PRF. 

    \item[A5] ``Constant delays'': The modulation in frequency arising from a moving target appears as a frequency shift in the received signal. We consider this shift to be small over a CPI under the condition $\nu_q \ll c/(2B \eta_{K-1} T)$ so that the delay is approximated to be constant. In this case, the Doppler shifts induced are small, allowing for the piecewise-constant approximation: $\nu_q t \approx \nu_q \eta_k T, \text{for} ~t \in [\eta_k T, \eta_{k+1}T]$. 
  
    \item[A6] ``Constant Doppler shifts'': The velocity change of a target over a CPI is small compared with the velocity resolution where $\frac{d\nu_q}{dt} \ll \frac{c}{2f_b (\eta_{K-1} T)^2}$ is satisfied.
  
    \item[A7] ``Slow tangential velocities'': The angle change of a target over a CPI is small compared with the angle resolution, that is, $\frac{d \theta_q}{dt}\eta_{K-1} T \ll \frac{C_0 \lambda_b}{\xi_{P_s-1} d} \Rightarrow \frac{d \theta_q}{dt} \ll \frac{2C_0}{\xi_{P_s-1} \eta_{K-1}T}$ and $\frac{d \varphi_q}{dt}\eta_{K-1} T \ll \frac{C_0 \lambda_b}{\xi_{P_s-1} d} \Rightarrow \frac{d \varphi_q}{dt} \ll \frac{2C_0}{\xi_{P_s-1} \eta_{K-1}T}$  where $C_0$ is a positive constant. This allows for constant DoAs during the CPI.

\end{description}

\subsection{Received signal}
Consider the linear array along $x$-axis. The two-way time delay of the $q$-th signal received by the $n$-th, $0\leq n\leq P_s-1$ element at the $k$-th, $0 \leq k \leq K-1$, pulse is
$\tau_{n,k}^x(\bm{\gamma}_q)  =
 \frac{ 2(r_q+ \nu_q \eta_k T) +\xi_n d \sin\theta_q \sin\varphi_q   }{c} 
 =\frac{ 2r_q+ 2\nu_q \eta_k T +\xi_n d \sin\theta_q \sin\varphi_q   }{c}$. 
Then, the received signal of the $q$-th target corresponding to the $m$-th ($-(P_s-1) \leq m \leq P_s-1$) transmit element, $n$-th ($0\leq n\leq P_s-1$ ) receive element, and $k$-th ($0\leq k \leq K-1$) pulse is
\begin{eqnarray}                                   \label{eq:UF3}
x_{k,n,m}(t,\bm{\gamma}_q)= \rho_q(t) e^{\mathrm{j}2\pi f_{\xi_m} (t- \tau_{n,k}^x(\bm{\gamma}_q))} ,
\end{eqnarray}
where $\rho_q(t),q=1,\ldots,Q$, are complex scattering coefficients of the targets, modeled as uncorrelated random variables with $\mathrm{E}[\rho_q^*\rho_p]=\sigma_q^2\delta_{q,p}$. Our focus is on the DoF enhancement at the receive array. Hence,  for simplicity, assume that the information about transmit array is embodied in $\rho_q(t),q=1,\ldots,Q$ \cite{SQin2017}.

After demodulation and applying band-pass filtering, 
the baseband signal corresponding to (\ref{eq:UF3}) is\par\noindent\small
\begin{flalign}                     \label{eq:SM6}
&x_{k,n,m}(t_s,\bm{\gamma}_q)= \rho_q(t_s) e^{-\mathrm{j}2\pi f_{\xi_m}  \tau_{n,k}^x(\bm{\gamma}_q) }  \nonumber\\
&= \rho_q(t_s) e^{ \frac{-j4\pi f_b r_q}{c} }  e^{-\mathrm{j}2\pi \frac{(f_b+\xi_m \Delta f) d\sin\theta_q \sin\varphi_q}{c}\xi_n} e^{-\mathrm{j}2\pi \frac{2\Delta f r_q}{c}   \xi_m }     \nonumber \\
&~~~ \times e^{-\mathrm{j}2\pi \frac{2v_q T(f_b+\xi_m \Delta f)}{c} \eta_k } ,
\end{flalign}\normalsize

Note that since the frequency increment is negligible compared to the base carrier frequency, the signal spectrum spreading effect arising from the frequency increment in spatial and Doppler frequency domains is not significant \cite{JXu2015_STAP}. We approximate the phases in the second and fourth terms after the second equality in (\ref{eq:SM6}) as
    \begin{align}
        &-\mathrm{j}2\pi \frac{(f_b+\xi_m \Delta f) d\sin\theta_q \sin\varphi_q}{c}\xi_n    \notag \\
        &\hspace{5mm}=  -\mathrm{j}2\pi \frac{f_b d\sin\theta_q \sin\varphi_q}{c}\xi_n   -\mathrm{j}2\pi \frac{\xi_m \Delta f d\sin\theta_q \sin\varphi_q}{c}\xi_n    \notag \\
        &\hspace{5mm}\approx  -\mathrm{j}2\pi \frac{ d\sin\theta_q \sin\varphi_q}{\lambda_b}\xi_n,     
    \end{align}    
        and
    \begin{align}    
     -j2\pi \frac{2v_q T(f_b+\xi_m \Delta f)}{c} \eta_k                      
     &= -j2\pi \frac{2v_q T f_b}{c} \eta_k -  j2\pi \frac{2v_q T \xi_m \Delta f}{c} \eta_k                            \notag \\
     & \approx -j2\pi \frac{2v_q T }{\lambda_b} \eta_k
    \end{align}
    Thus, (\ref{eq:SM6}) becomes
    \begin{flalign}
    &x_{k,n,m}(t_s,\bm{\gamma}_q)      \nonumber \\
    &  \approx \rho_q(t_s) e^{ \frac{-j4\pi f_b r_q}{c} }  e^{-\mathrm{j}2\pi \frac{ d\sin\theta_q \sin\varphi_q}{\lambda_b}\xi_n} e^{-\mathrm{j}2\pi \frac{2\Delta f r_q}{c}   \xi_m } e^{-\mathrm{j}2\pi \frac{2v_q T}{\lambda_b} \eta_k }                \nonumber \\
    &=\widetilde{\rho}_q(t_s)   e^{-\mathrm{j}2\pi \frac{\sin\theta_q \sin\varphi_q}{2} \xi_n } e^{-\mathrm{j}2\pi \frac{2\Delta f r_q}{c}  \xi_m } e^{ -j2\pi \frac{2v_q T}{\lambda_b}\eta_k  }   
    \end{flalign}
where the term $\mathrm{exp}\left\{ \frac{-j4\pi f_b r_q}{c} \right\}$ is a constant that can be absorbed into the scattering coefficients such that $\widetilde{\rho}_q(t)=\rho_q(t)  \mathrm{exp}\left\{ \frac{-j4\pi f_b r_q}{c} \right\}$.

For $Q$ far-field targets, the received signal is the superposition of echoes from all targets, i.e,
\begin{align}                        \label{eq:SM7}
x_{k,n,m}(t_s)&= \sum_{q=1}^Q x_{m,n,k}(t_s,\bm{\gamma}_q)   + n^x_{k,n,m}(t_s)                              \notag \\
&= \sum_{q=1}^Q   \widetilde{\rho}_q(t_s)   e^{-\mathrm{j}2\pi \frac{\sin\theta_q \sin\varphi_q}{2} \xi_n } e^{-\mathrm{j}2\pi \frac{2\Delta f r_q}{c}  \xi_m }  e^{-\mathrm{j}2\pi \frac{2v_q T}{\lambda_b} \eta_k  }                 \notag \\
&~~  + n^x_{k,n,m}(t_s)  .
\end{align}
where $ n^x_{k,n,m}(t_s)$ is the additive uncorrelated zero mean Gaussian white noise sequence with variance $\sigma_n^2$. Stacking $x_{k,n,m}(t_s)$ for all $-(P_s-1)\leq m \leq P_s-1, 0\leq n \leq P_s-1$, and $0\leq k \leq K-1$ yields an $KP_s(2P_s-1) \times 1$ vector,
\begin{eqnarray}                           \label{eq:SM8}
\mathbf{x}(t_s) &=& \sum_{q=1}^Q \widetilde{\rho}_q (t_s) \mathbf{c}(r_q) \otimes\mathbf{b}(\nu_q) \otimes \mathbf{a}_x(\theta_q,\varphi_q) + \mathbf{n}^x(t_s) \notag  \\
&=&  (\mathbf{C} \odot \mathbf{B} \odot \mathbf{A}_x) \bm{\rho} (t_s) + \mathbf{n}^x(t_s),
\end{eqnarray}
where
$\mathbf{a}_x(\theta_q,\varphi_q)=\left[1,e^{-\mathrm{j}\pi\xi_1 \sin\theta_q \sin\varphi_q},\dots, e^{-\mathrm{j}\pi\xi_{P_s-1}\sin\theta_q \sin\varphi_q} \right]^T 
\in \mathbb{C}^{P_s\times 1}$, 
$\mathbf{c}(r_q) = \left[e^{\mathrm{j}4\pi \Delta f r_q \xi_{P_s-1}/c} ,
\ldots, e^{\mathrm{j}4\pi \Delta f r_q \xi_{1}/c},1, e^{-\mathrm{j}4\pi \Delta f r_q \xi_{1}/c},
 \notag \right.  \\ 
\left. \ldots, e^{-\mathrm{j}4\pi \Delta f r_q \xi_{P_s-1}/c}  \right]^T
\in \mathbb{C}^{(2P_s-1)\times 1}$, 
and
$\mathbf{b}(\nu_q)=\left[1,e^{-\mathrm{j}4\pi \nu_q T \eta_1 /\lambda_b},\ldots, e^{-\mathrm{j}4\pi \nu_q T \eta_{K-1}/\lambda_b} \right]^T \in \mathbb{C}^{K \times 1}$, 
are the steering vectors corresponding to $(\theta_q, \varphi_q)$, $r_q$, and $\nu_q$, respectively.
In addition, $\mathbf{A}_x = [\mathbf{a}_x(\theta_1,\varphi_1),\ldots,\mathbf{a}_x(\theta_Q,\varphi_Q)]$, $\mathbf{C} = [\mathbf{c}(r_1),\ldots,\mathbf{c}(r_Q)]$, $\mathbf{B} = [\mathbf{b}(\nu_1),\ldots,\mathbf{b}(\nu_Q)]$ and $\bm{\rho}(t_s) = [\rho_1(t_s),\ldots,\rho_Q(t_s)]^T$. It follows from (\ref{eq:SM8}) that the total number of lags of physical space-time-frequency array equals the product of lags in each dimension. For example, the structure in Fig. \ref{L-shaped}b yields 864 lags when only non-negative region is considered.

\subsection{Contiguous space-time-frequency coarray}
The $KP_s(2P_s-1)\times KP_s(2P_s-1)$ covariance matrix of data matrix $\mathbf{x}(t_s)$ is 
\begin{align}                               \label{eq:SM15}
\mathbf{R_x} &= \mathrm{E}[\mathbf{x}(t_s)\mathbf{x}^H(t_s)]                              \notag \\
&= (\mathbf{C} \odot \mathbf{B} \odot \mathbf{A}_x) \mathbf{R}_{\bm{\rho}} (\mathbf{C} \odot \mathbf{B} \odot \mathbf{A}_x)^H+ \sigma_n^2 \mathbf{I}_{KP_s(2P_s-1)},
\end{align}
where $\mathbf{R}_{\bm{\rho}} = \mathrm{E}[\bm{\rho}(t_s)\bm{\rho}^H(t_s)]= \mathrm{diag}([\sigma_1^2,\ldots,\sigma_Q^2])$. Vectorizing the matrix $\mathbf{R_{x}}$ produces $(KP_s(2P_s-1))^2 \times 1$ vector
\begin{align}                             \label{eq:SM20}                             
\mathbf{r_x} &= \mathrm{vec}(\mathbf{R_{x}})                          \notag \\
&=  [(\mathbf{C} \odot \mathbf{B} \odot \mathbf{A}_x)^* \odot (\mathbf{C} \odot \mathbf{B} \odot \mathbf{A}_x)] \mathbf{r}_{\bm{\rho}} + \sigma_n^2 \mathrm{vec}(\mathbf{I}_{KP_s(2P_s-1)})                                                           \notag \\
&=  \mathbf{K}[(\mathbf{C}^*\odot \mathbf{C})\odot (\mathbf{B}^*\odot \mathbf{B}) \odot (\mathbf{A}_x^* \odot \mathbf{A}_x)] \mathbf{r}_{\bm{\rho}}                                                \notag \\
&~~ +\sigma_n^2 \mathrm{vec}(\mathbf{I}_{KP_s(2P_s-1)}),
\end{align}
where $\mathbf{r}_{\bm{\rho}}=[\sigma_1^2,\ldots,\sigma_Q^2]^T$ is a vector containing the diagonal elements of the matrix $\mathbf{R_{\bm{\rho}}}$; $\mathbf{K}\in \mathbb{C}^{K^2P_s^2(2P_s-1)^2 \times K^2P_s^2(2P_s-1)^2}$ is a permutation matrix that has exactly one entry of value $1$ in each row and each column and $0$s elsewhere; the $(i,j)$-th entry of $\mathbf{A}^*_x \odot \mathbf{A}_x$ is $\{e^{-\mathrm{j}\pi (\xi_i-\xi_j)\sin\theta_q \sin\varphi_q},0 \leq i,j\leq P_s-1, 1 \leq q \leq Q\}$; the $(i,j)$-th entry of $\mathbf{C}^* \odot \mathbf{C}$ is $\{e^{-\mathrm{j}4\pi (\xi_i-\xi_j)\Delta f r_q/c}, 0 \leq i,j\leq P_s-1, 1 \leq q \leq Q\}$; and the $(i,j)$-th entry of $\mathbf{B}^* \odot \mathbf{B}$ is $\{e^{-\mathrm{j}4\pi \nu_q T (\eta_i-\eta_j) /\lambda_b}, 0 \leq i,j\leq K-1, 1 \leq q \leq Q\}$. 

\begin{figure}[t]
\includegraphics[width=1.0\columnwidth]{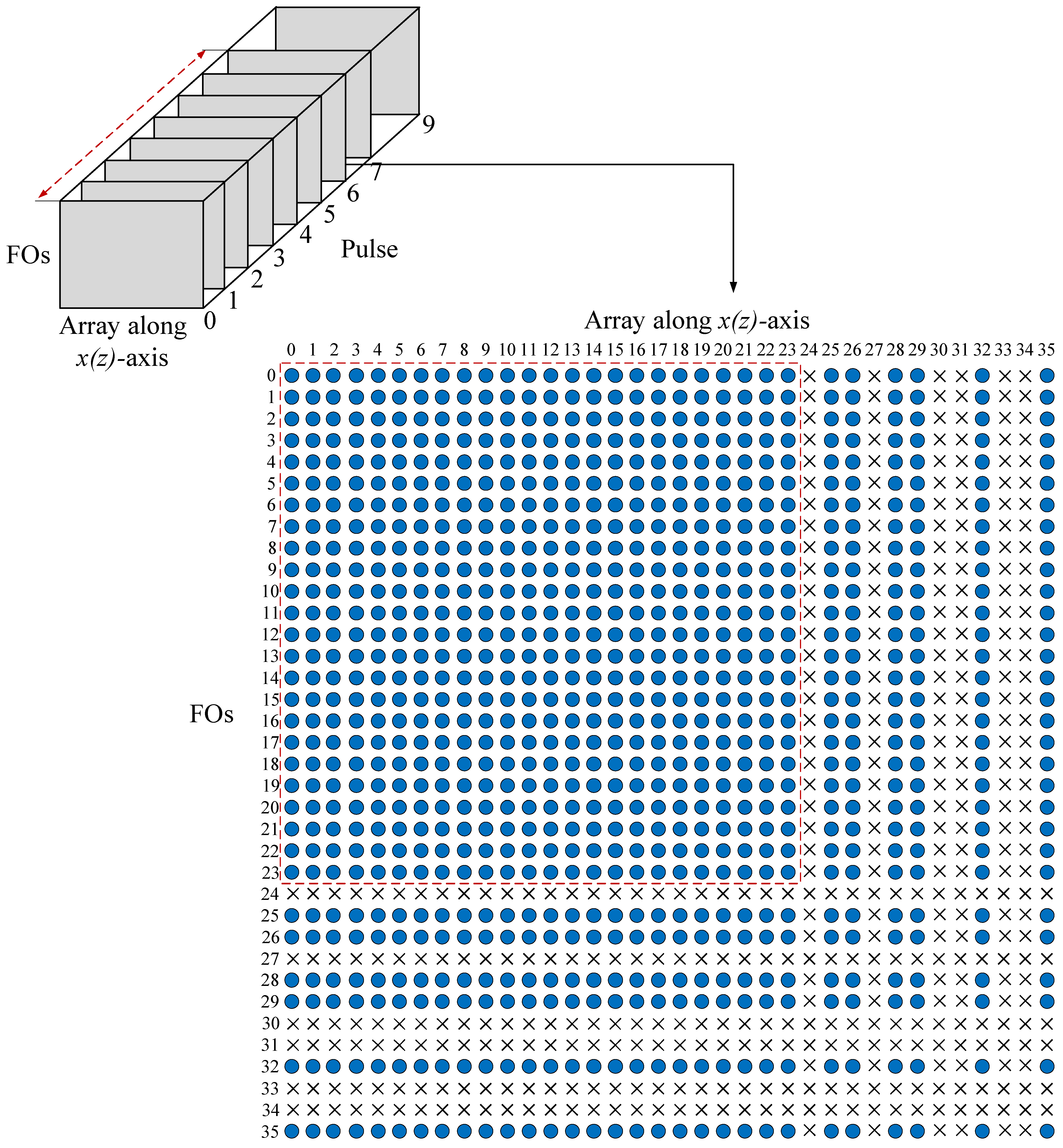}
\caption{Space-time-frequency coarray along the non-negative axes with $M_s=3, N_s=7$ and $M_t=2, N_t=3$. The red dashed line denotes the contiguous area of coarray.}
\label{Space_time_frequency_coarray}
\end{figure}

The C-Cube structure (Fig. \ref{L-shaped}b) yields space-time-frequency difference coarray along the positive $x(z)$-axis as shown in Fig. \ref{Space_time_frequency_coarray}; the negative side is obtained by mirroring along each dimension. The number of non-negative lags increases from 864 to 8100. As is the case with 1-D co-prime array \cite{EBouDaher2015}, holes also exist in the space-time-frequency difference coarray from (\ref{eq:SM20}). Our retrieval algorithm is based on the framework of subspace-based method where the rotational invariance property is exploited. Hence, we are only concerned with the contiguous space-time-frequency coarray, namely the sub-coarray between the first positive hole and the first negative hole. So, the holes in difference coarray mainly affects the aperture of the contiguous space-time-frequency coarray. The larger aperture the contiguous space-time-frequency coarray has, more DOFs and better performance it yields. The literature suggests appropriate hole filling methods \cite{EBouDaher2015,ZMao2022} that may be used for structures mentioned in this paper. This may yield more efficient coarray structures. 

Following \cite{EBouDaher2015}, a contiguous space-time-frequency coarray is produced where $\mathbf{A}^*_x \odot \mathbf{A}_x$ has the contiguous elements from $e^{-\mathrm{j}\pi (M_sN_s+M_s-1)\sin\theta_q \sin\varphi_q}$ to $e^{\mathrm{j}\pi (M_sN_s+M_s-1)\sin\theta_q \sin\varphi_q}$, $\mathbf{C}^* \odot \mathbf{C}$ has the contiguous elements from $e^{-\mathrm{j}4\pi (M_sN_s+M_s-1)\Delta f r_q/c}$ to $e^{\mathrm{j}4\pi (M_sN_s+M_s-1)\Delta f r_q/c}$, and $\mathbf{B}^* \odot \mathbf{B}$ has the contiguous elements from $e^{-\mathrm{j}4\pi \nu_q T (M_tN_t +M_t-1) /\lambda_b}$ to $e^{\mathrm{j}4\pi \nu_q T (M_tN_t +M_t-1) /\lambda_b}$. For the sake of simplicity, assume $L_s= M_sN_s+M_s-1, L_t=M_tN_t+M_t-1$. Picking up these continuous entries from $\mathbf{r_x}$ yields a new vector
\begin{eqnarray}            \label{eq:SM21}
\widetilde{\mathbf{r}}_x &=& (\widetilde{\mathbf{C}} \odot \widetilde{\mathbf{B}} \odot \widetilde{\mathbf{A}}_x) \mathbf{r}_{\bm{\rho}} + \sigma_n^2 \acute{\mathbf{e}},     
\end{eqnarray}
where
\begin{align}           \label{eq:SM22}
\widetilde{\mathbf{A}}_x &=
\left(
\begin{smallmatrix}
e^{-\mathrm{j}\pi L_s\sin\theta_1 \sin\varphi_1} & \cdots & e^{-\mathrm{j}\pi L_s\sin\theta_Q \sin\varphi_Q}  \\
\vdots                         & \ddots &   \vdots                        \\
e^{\mathrm{j}\pi L_s\sin\theta_1 \sin\varphi_1} & \cdots & e^{\mathrm{j}\pi L_s\sin\theta_Q \sin\varphi_Q}
\end{smallmatrix} \right) \in \mathbb{C}^{(2L_s+1)\times Q},
\end{align}
\begin{eqnarray}
\widetilde{\mathbf{C}}=
\left(
\begin{smallmatrix}
e^{-\mathrm{j}4\pi L_s\Delta f r_1/c} & \cdots & e^{-\mathrm{j}4\pi L_s \Delta f r_Q/c}  \\
\vdots                         & \ddots &   \vdots                        \\
e^{\mathrm{j}4\pi L_s \Delta f r_1/c} & \cdots & e^{\mathrm{j}4\pi L_s \Delta f r_Q/c}
\end{smallmatrix} \right) \in \mathbb{C}^{(2L_s+1)\times Q},
\end{eqnarray}
\begin{eqnarray}
\widetilde{\mathbf{B}}=
\left(
\begin{smallmatrix}
e^{-\mathrm{j}4\pi \nu_1 T L_t /\lambda_b} & \cdots & e^{-\mathrm{j}4\pi \nu_Q T L_t /\lambda_b}  \\
\vdots                         & \ddots &   \vdots                        \\
e^{\mathrm{j}4\pi \nu_1 T L_t /\lambda_b} & \cdots &  e^{\mathrm{j}4\pi \nu_Q T L_t /\lambda_b}
\end{smallmatrix} \right) \in \mathbb{C}^{(2L_t+1)\times Q}.
\end{eqnarray}
and $\acute{\mathbf{e}}$ is an all-zero vector except a 1 in the middle position.

Similarly, for the $z$-axis, the received signal model in matrix form follows from (\ref{eq:SM8}) as
\begin{eqnarray}             \label{eq:SM24}
\mathbf{z}(t_s) &=& \sum_{q=1}^Q \widetilde{\rho}_q (t_s) \mathbf{c}(r_q) \otimes \mathbf{b}(\nu_q) \otimes \mathbf{a}_z(\theta_q,\varphi_q) + \mathbf{n}^z(t_s) \notag  \\
&=&  (\mathbf{C} \odot \mathbf{B} \odot\mathbf{A}_z) \bm{\rho} (t_s) + \mathbf{n}^z(t_s),
\end{eqnarray}
where 
$\mathbf{a}_z(\theta_q,\varphi_q)=\left[1,e^{-\mathrm{j}\pi \xi_1\sin\theta_q \cos\varphi_q},\dots, e^{-\mathrm{j}\pi\xi_{P_s-1}\sin\theta_q \cos\varphi_q} \right]^T  \in \mathbb{C}^{P_s \times 1}$ 
and $\mathbf{A}_z=[\mathbf{a}_z(\theta_1,\varphi_1),\dots,\mathbf{a}_z(\theta_Q,\varphi_Q)]$. Repeating the computations from (\ref{eq:SM15}) to (\ref{eq:SM21}) gives the contiguous space-time-frequency coarray along the $z$-axis as
\begin{eqnarray}            \label{eq:SM25}
\widetilde{\mathbf{r}}_z &=& (\widetilde{\mathbf{C}} \odot \widetilde{\mathbf{B}} \odot \widetilde{\mathbf{A}}_z) \mathbf{r}_{\bm{\rho}} + \sigma_n^2 \acute{\mathbf{e}},    
\end{eqnarray}
where
\begin{eqnarray}
\widetilde{\mathbf{A}}_z &=
\left(
\begin{smallmatrix}
e^{-\mathrm{j}\pi L_s\sin\theta_1 \cos\varphi_1} & \cdots & e^{-\mathrm{j}\pi L_s\sin\theta_Q \cos\varphi_Q}  \\
\vdots                         & \ddots &   \vdots                        \\
e^{\mathrm{j}\pi L_s\sin\theta_1 \cos\varphi_1} & \cdots & e^{\mathrm{j}\pi L_s\sin\theta_Q \cos\varphi_Q}
\end{smallmatrix} \right) \in \mathbb{C}^{(2L_s+1)\times Q}.
\end{eqnarray}

It is clear from (\ref{eq:SM21}) and (\ref{eq:SM25}) that our model exploits the virtual manifold matrices, namely $\widetilde{\mathbf{A}}_x$, $\widetilde{\mathbf{A}}_z$, $\widetilde{\mathbf{C}}$, and $\widetilde{\mathbf{B}}$. From the DoF analysis of contiguous coarray in 1-D \cite{EBouDaher2015,AMElbir2020}, we conclude the following about DoFs in our case: If all targets have unique values for all parameters, i.e. no two targets have same azimuth angles, elevation angles, ranges, and Doppler velocities, then the DoFs reach up to $\mathcal{O}(\min\{2L_s,2L_t\})$. If there exist pairs of targets for which one or more parameters are identical, then the overall DoFs are the multiplication of the DoFs from each domain, reaching up to $\mathcal{O}(16L_s^3L_t)$. 

For single array received signals, as in (\ref{eq:SM21}) and (\ref{eq:SM25}), subspace and tensor-based multi-dimensional parameter recovery techniques have been reported in previous works. For example, MUltiple SIgnal Classification (MUSIC) was adopted in \cite{LLiu2018} for joint frequency and DoA estimation for undersampled narrowband signals. This was extended to wideband scenarios in \cite{FWang2018,ZZhang2021}, where a \textit{can}onical \textit{decomp}osition \textit{par}allel \textit{fac}tor (CANDECOMP/PARAFAC or CP) tensor decomposition method jointly recovered DoA, carrier frequency, and power spectrum. In \cite{na2018tendsur}, both tensor-completion and tensor orthogonal matched pursuit (OMP) were used to recover range, Doppler velocity, and DoA with undersampled signals. However, the aforementioned algorithms cannot be directly generalized to an L-shaped array, which requires an additional pairing procedure. 

The literature suggests a few recovery methods for L-shaped arrays, such as subspace algorithm with 2-D spatial smoothing \cite{ZPeng2020}, connection-matrices method \cite{ZLi2021} and sparse reconstruction for 1-bit DoA estimation \cite{CLi2021}. However, nearly all of these works focus on passive sensing and estimate only DoA while leaving out the retrieval of range and Doppler velocity. To this end, we now introduce an auto-pairing multi-parameter retrieval which addresses these issues when no two targets have any identical parameter values. The proposed algorithm also works when one or more of parameters of targets are the same but this is at the cost of some DoFs in the angle-range-Doppler domain. This auto-pairing is possible because, unlike prior works, our L-shaped array is also an FDA, wherein the beampattern is dependent on both range and DoA. Hence, despite separate estimation of these parameters, the FDA properties allow us to pair the parameters uniquely.

\section{Auto-Pairing Parameter Retrieval}  \label{Section_ESPRIT}
We focus on the auto-paring procedure of unknown parameters $\bm{\gamma}_q=\{\theta_q, \varphi_q, r_q,\nu_q \}_{q=1}^Q$. So, the variance of noise $\sigma_n^2$ is assumed to be known \textit{a priori} that is absorbed into $\widetilde{\mathbf{r}}_x$ in (\ref{eq:SM21}) and $\widetilde{\mathbf{r}}_z$ in (\ref{eq:SM25}), respectively. Then, by rearranging the elements of $\widetilde{\mathbf{r}}_x$ and $\widetilde{\mathbf{r}}_z$ into matrices $\mathbf{X}\in \mathbb{C}^{(2L_s+1)\times (2L_s+1)(2L_t+1)}$ and $\mathbf{Z}\in \mathbb{C}^{(2L_s+1)\times (2L_s+1)(2L_t+1)}$, we have
$\mathbf{X} = \widetilde{\mathbf{A}}_x \mathbf{R}_{\bm{\rho}} (\widetilde{\mathbf{C}}\odot \widetilde{\mathbf{B}})^T$  
and
$\mathbf{Z} = \widetilde{\mathbf{A}}_z \mathbf{R}_{\bm{\rho}} (\widetilde{\mathbf{C}}\odot \widetilde{\mathbf{B}})^T$,
where $\widetilde{\mathbf{r}}_x =\mathrm{vec}(\mathbf{X})$ and $\widetilde{\mathbf{r}}_z =\mathrm{vec}(\mathbf{Z})$. Define the concatenated covariance matrix
\begin{eqnarray}                      \label{eq:RA2}
\mathbf{R}_{XZ} = \left[
\begin{matrix}
\mathbf{X}  \\
\mathbf{Z}
\end{matrix} \right]
= \left[\begin{matrix}
 \widetilde{\mathbf{A}}_x  \\
 \widetilde{\mathbf{A}}_z
\end{matrix} \right] \mathbf{R}_{\bm{\rho}} (\widetilde{\mathbf{C}}\odot \widetilde{\mathbf{B}})^T  \in \mathbb{C}^{(4L_s+2)\times (2L_s+1)(2L_t+1)}.
\end{eqnarray}

The singular value decomposition (SVD) of $\mathbf{R}_{XZ}$ yields
\begin{eqnarray}                    \label{eq:RA3}
\mathbf{R}_{XZ} = [\mathbf{U}_1~ \mathbf{U}_2]
\left[\begin{matrix}
 \mathbf{\Lambda} & \mathbf{0}  \\
 \mathbf{0}  &  \mathbf{0}
\end{matrix} \right]    [\mathbf{V}_1~  \mathbf{V}_2]^H = \mathbf{U}_1 \mathbf{\Lambda} \mathbf{V}_1^H,
\end{eqnarray}
where the columns of the matrix $[\mathbf{U}_1~ \mathbf{U}_2]$ are the left singular vectors of $\mathbf{R}_{XZ}$ with $\mathbf{U}_1$ containing the vectors corresponding to the first $Q$ singular values, $\mathbf{\Lambda}$ is a $Q\times Q$ diagonal matrix with the $Q$ non-zero singular values of $\mathbf{R}_{XZ}$, and $[\mathbf{V}_1~  \mathbf{V}_2]$ contains the right singular vectors of $\mathbf{R}_{XZ}$ with $\mathbf{V}_1$ containing the vectors corresponding to the first $Q$ singular values. Assume that there exist two invertible $Q \times Q$ matrices $\mathbf{T}_L$ and $\mathbf{T}_R$ such that
\begin{align}                    \label{eq:RA5}
\mathbf{U}_1 =
\left[\begin{matrix}
 \mathbf{U}_{11}  \\
 \mathbf{U}_{12}
\end{matrix} \right]
&=\left[\begin{matrix}
 \widetilde{\mathbf{A}}_x  \\
 \widetilde{\mathbf{A}}_z
\end{matrix} \right] \mathbf{T}_L \in \mathbb{C}^{(4L_s+2)\times Q} 
\end{align}
\begin{align}                    \label{eq:RA5_1}
\mathbf{V}_1^H&= \mathbf{T}_R (\widetilde{\mathbf{C}}\odot \widetilde{\mathbf{B}})^{T}\in \mathbb{C}^{Q \times(2L_s+1)(2L_t+1)},
\end{align}
where $\mathbf{U}_{11}$ and $\mathbf{U}_{12}$ denote the first and last $2L_s+1$ rows of $\mathbf{U}_1$, respectively. Proposition \ref{Pro_TL} provides sufficient conditions for the existence of $\mathbf{T}_L$ and $\mathbf{T}_R$. 

\begin{proposition}             \label{Pro_TL}
Consider an L-shaped co-prime array with $P_s=N_s+2M_s-1$ sensors along each axis and $d$ be the fundamental spatial spacing. Each sensor transmits a total of $K=N_t+2M_t-1$ pulses at co-prime intervals with $T$ as the fundamental PRI. There are $Q$ far-field targets. If 
\begin{description}
\item[C1] $d\leq \frac{c}{2f_b}$,\vspace{4pt}
\item[C2] $T\leq  \frac{c}{2f_b \nu_{\mathrm{max}}}$,\vspace{4pt}
\item[C3] $\Delta f \leq \frac{c}{2R_{\mathrm{max}}}$,
\item[C4] $L_s \triangleq M_sN_s+M_s-1 > \frac{Q-1}{2}$,\vspace{4pt}
\item[C5] $L_t \triangleq M_tN_t+M_t-1 > \frac{Q-1}{2}$,
\end{description}
then there exist invertible $Q\times Q$ matrices $\mathbf{T}_L$ and $\mathbf{T}_R$ such that (\ref{eq:RA5}) and (\ref{eq:RA5_1}) hold.
\end{proposition}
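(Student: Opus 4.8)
The plan is to reduce the claim to three full-rank facts: that $\mathbf{R}_{\bm{\rho}}$ is invertible, that the stacked manifold $\bigl[\widetilde{\mathbf{A}}_x^T~\widetilde{\mathbf{A}}_z^T\bigr]^T$ has full column rank $Q$, and that the Khatri-Rao factor $\widetilde{\mathbf{C}}\odot\widetilde{\mathbf{B}}$ has full column rank $Q$. Once these hold, the factorization in (\ref{eq:RA2}) composes a full-column-rank, an invertible, and a full-row-rank factor, so $\mathrm{rank}(\mathbf{R}_{XZ})=Q$ with $\mathrm{col}(\mathbf{R}_{XZ})=\mathrm{col}\bigl([\widetilde{\mathbf{A}}_x^T~\widetilde{\mathbf{A}}_z^T]^T\bigr)$ and $\mathrm{row}(\mathbf{R}_{XZ})=\mathrm{row}\bigl((\widetilde{\mathbf{C}}\odot\widetilde{\mathbf{B}})^T\bigr)$; the matrices $\mathbf{T}_L,\mathbf{T}_R$ then emerge as the change-of-basis between the orthonormal SVD bases and these manifolds.

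First I would translate the physical conditions into distinctness of the Vandermonde generators. Each column of $\widetilde{\mathbf{A}}_x$ is a Vandermonde vector generated by $e^{\mathrm{j}\pi\sin\theta_q\sin\varphi_q}$; condition \textbf{C1} confines the spatial frequency to a single unambiguous period, so assumption \textbf{A3} forces these $Q$ generators to be pairwise distinct (and likewise for $\widetilde{\mathbf{A}}_z$ through $\sin\theta_q\cos\varphi_q$). Identically, \textbf{C3} with \textbf{A4} keeps $2\Delta f\, r_q/c$ inside $[0,1]$, making the range generators $e^{\mathrm{j}4\pi\Delta f\, r_q/c}$ of $\widetilde{\mathbf{C}}$ distinct, and \textbf{C2} with \textbf{A4} keeps $2\nu_q T f_b/c$ inside $[0,1]$, making the Doppler generators of $\widetilde{\mathbf{B}}$ distinct. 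With distinct generators, conditions \textbf{C4} ($2L_s+1\ge Q$) and \textbf{C5} ($2L_t+1\ge Q$) make each of $\widetilde{\mathbf{A}}_x,\widetilde{\mathbf{A}}_z,\widetilde{\mathbf{C}},\widetilde{\mathbf{B}}$ of full column rank $Q$, since selecting $Q$ consecutive rows yields a square Vandermonde block whose determinant $\prod_{i<j}(z_j-z_i)$ is nonzero.

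The two composite full-rank facts would then follow. The stacked matrix $[\widetilde{\mathbf{A}}_x^T~\widetilde{\mathbf{A}}_z^T]^T$ has full column rank because its null space is the intersection of the (already trivial) null spaces of its blocks. For the Khatri-Rao factor I would use the reshaping identity: if $(\widetilde{\mathbf{C}}\odot\widetilde{\mathbf{B}})\mathbf{c}=\mathbf{0}$, then $\widetilde{\mathbf{B}}\,\mathrm{diag}(\mathbf{c})\,\widetilde{\mathbf{C}}^T=\mathbf{0}$; since $\widetilde{\mathbf{C}}$ has full column rank (so $\widetilde{\mathbf{C}}^T$ has full row rank) and $\widetilde{\mathbf{B}}$ has no zero column, this forces $\mathrm{diag}(\mathbf{c})=\mathbf{0}$, i.e. $\mathbf{c}=\mathbf{0}$, giving full column rank $Q$. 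Finally, since $\mathbf{R}_{\bm{\rho}}=\mathrm{diag}(\sigma_1^2,\dots,\sigma_Q^2)$ is invertible, the column- and row-space equalities above hold; as $\mathbf{U}_1$ is an orthonormal basis of $\mathrm{col}(\mathbf{R}_{XZ})=\mathrm{col}\bigl([\widetilde{\mathbf{A}}_x^T~\widetilde{\mathbf{A}}_z^T]^T\bigr)$, expressing its columns in the full-column-rank basis $[\widetilde{\mathbf{A}}_x^T~\widetilde{\mathbf{A}}_z^T]^T$ yields a unique $Q\times Q$ coefficient matrix $\mathbf{T}_L$, invertible because both sides have rank $Q$; the identical argument on the row space produces the invertible $\mathbf{T}_R$ with $\mathbf{V}_1^H=\mathbf{T}_R(\widetilde{\mathbf{C}}\odot\widetilde{\mathbf{B}})^T$.

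I expect the main obstacle to be the Khatri-Rao rank step together with the precise no-aliasing translation of \textbf{C1}--\textbf{C3}: the periodic generators coincide at the endpoints of their ranges, so one must verify that the unambiguity conditions exclude these boundary collisions and keep every Vandermonde determinant nonzero. The change-of-basis argument in the last step is then routine linear algebra once the rank count is secured.
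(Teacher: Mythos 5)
Your proposal is correct and follows essentially the same route as the paper's own proof: establish full column rank of the Vandermonde factors $\widetilde{\mathbf{A}}_x$, $\widetilde{\mathbf{A}}_z$, $\widetilde{\mathbf{C}}$, $\widetilde{\mathbf{B}}$ from \textbf{C1}--\textbf{C5} together with the unambiguity assumptions, deduce full column rank of $\widetilde{\mathbf{C}}\odot\widetilde{\mathbf{B}}$ and invertibility of $\mathbf{R}_{\bm{\rho}}$, and then identify the dominant SVD subspaces of $\mathbf{R}_{XZ}$ with the column space of $[\widetilde{\mathbf{A}}_x^T~\widetilde{\mathbf{A}}_z^T]^T$ and the row space of $(\widetilde{\mathbf{C}}\odot\widetilde{\mathbf{B}})^T$ to obtain the invertible matrices $\mathbf{T}_L$ and $\mathbf{T}_R$. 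The only local differences are that the paper certifies the Khatri--Rao rank by citing the Kruskal-rank inequality $\mathrm{Krank}(\widetilde{\mathbf{C}}\odot\widetilde{\mathbf{B}})\geq\min\left(Q,\mathrm{Krank}(\widetilde{\mathbf{C}})+\mathrm{Krank}(\widetilde{\mathbf{B}})-1\right)$ where you give an elementary vec-reshaping argument, and it derives the subspace equalities by showing that $\mathbf{U}_2$ and $\mathbf{V}_2$ annihilate the respective manifold factors rather than by your direct rank/column-space identification; both justifications are sound, and both proofs share the same implicit reliance on pairwise-distinct generators.
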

\begin{IEEEproof}
See Appendix~\ref{app:proof_prop2}.
\end{IEEEproof}

In order to recover all four parameters, namely azimuth, elevation, range and Doppler from $\widetilde{\mathbf{r}}_x$ and $\widetilde{\mathbf{r}}_z$, sparse reconstruction methods may be employed. Among prior works, the joint estimation of signal parameters via rotational invariant techniques (ESPRIT) in \cite{SSIoushua2017} examines only the left singular vectors of correlation matrix to extract 2-D information in a passive array. An analogous algorithm for 4-D parameter estimation entails excessive computational load. This is also true for various 4-D tensor-based sparse recovery methods proposed for a single-array sensing in \cite{na2018tendsur}. In contrast, our proposed CCing algorithm exploits the fact that the 2-D DoA and range-Doppler information are embodied in the left and right singular vectors of correlation matrix, respectively, thus avoiding a 4-D search with high computational complexity.

\subsection{2-D DoA estimation}
To obtain the rotational invariant structure of the array to estimate the unknown parameters, denote the first (last) $2L_s$ rows of $\widetilde{\mathbf{A}}_x$ and $\widetilde{\mathbf{A}}_z$ as $\widetilde{\mathbf{A}}_{x1}$ ($\widetilde{\mathbf{A}}_{x2}$) and $\widetilde{\mathbf{A}}_{z1}$ ($\widetilde{\mathbf{A}}_{z2}$), respectively. Note that
\begin{eqnarray}                     \label{eq:RA6}
\widetilde{\mathbf{A}}_{x2}= \widetilde{\mathbf{A}}_{x1} \mathbf{\Phi},~ ~ \widetilde{\mathbf{A}}_{z2}= \widetilde{\mathbf{A}}_{z1} \mathbf{\Psi},
\end{eqnarray}
where 
$\mathbf{\Phi} \triangleq \mathrm{diag}\left[e^{\mathrm{j}\pi\sin\theta_1 \sin\varphi_1},\dots,e^{\mathrm{j}\pi\sin\theta_Q \sin\varphi_Q}\right]$ and
$\mathbf{\Psi} \triangleq \mathrm{diag}\left[e^{\mathrm{j}\pi\sin\theta_1 \cos\varphi_1},\dots,e^{\mathrm{j}\pi\sin\theta_Q \cos\varphi_Q}\right]$.

Denote $\mathbf{U}_{111}$ ($\mathbf{U}_{112}$) as the first (last) $2L_s$ rows of $\mathbf{U}_{11}$ and $\mathbf{U}_{121}$ ($\mathbf{U}_{122}$) as the first (last) $2L_s$ rows of $\mathbf{U}_{12}$. Based on (\ref{eq:RA5}) and (\ref{eq:RA6}), we have
$\mathbf{U}_{111} =\widetilde{\mathbf{A}}_{x1} \mathbf{T}_L$,  
$\mathbf{U}_{112} = \widetilde{\mathbf{A}}_{x2} \mathbf{T}_L  = \widetilde{\mathbf{A}}_{x1} \mathbf{\Phi} \mathbf{T}_L  = \mathbf{U}_{111} \mathbf{T}_L^{-1} \mathbf{\Phi} \mathbf{T}_L$, 
$\mathbf{U}_{121} =\widetilde{\mathbf{A}}_{z1} \mathbf{T}_L$, and
$\mathbf{U}_{122} = \widetilde{\mathbf{A}}_{z2} \mathbf{T}_L  = \widetilde{\mathbf{A}}_{z1} \mathbf{\Psi} \mathbf{T}_L  = \mathbf{U}_{121} \mathbf{T}_L^{-1} \mathbf{\Psi} \mathbf{T}_L$.
Then, 
\begin{eqnarray}            \label{eq:RA7_1}
\mathbf{U}_{111}^{\dag} \mathbf{U}_{112} = \mathbf{T}_L^{-1} \mathbf{\Phi} \mathbf{T}_L
\end{eqnarray}
\begin{eqnarray}            \label{eq:RA7_2}
\mathbf{U}_{121}^{\dag} \mathbf{U}_{122} = \mathbf{T}_L^{-1} \mathbf{\Psi} \mathbf{T}_L
\end{eqnarray}
Therefore, one could obtain $\mathbf{\Phi}$ and $\mathbf{T}_L$ using the eigenvalue decomposition of $\mathbf{U}_{111}^{\dag} \mathbf{U}_{112}$, up to a permutation. Denote by $\widehat{\mathbf{\Phi}}$ and $\widehat{\mathbf{T}}_L$ the resulting matrices. We then compute $\widehat{\mathbf{\Psi}}$ as
\begin{eqnarray}                   \label{eq:RA9}
\widehat{\mathbf{\Psi}} = \widehat{\mathbf{T}}_L (\mathbf{U}_{121}^{\dag} \mathbf{U}_{122}) \mathbf{T}_L^{-1}.
\end{eqnarray}
Once $\mathbf{\Psi}$ and $\mathbf{\Phi}$ are found, the elevation and azimuth angle are
\begin{eqnarray}                   \label{eq:RA10}
\widehat{\varphi}_q = \arctan \left(\frac{\angle \Phi_{qq}}{\angle \Psi_{qq}}    \right),~~~ \widehat{\theta}_q= \arcsin \left(\frac{\angle \Phi_{qq}}{\pi \sin \varphi_q} \right).
\end{eqnarray}
Clearly, the parameters $\widehat{\varphi}_q $ and $\widehat{\theta}_q$ are paired automatically.

\subsection{Joint range-Doppler estimation}
Next, we estimate the range $r_q$ and velocity $\nu_q$. Using (\ref{eq:RA5}), denote the first $2L_s$ rows of $\widetilde{\mathbf{C}}$ and the first $2L_t$ rows of $\widetilde{\mathbf{B}}$ by $\widetilde{\mathbf{C}}_1$ and $\widetilde{\mathbf{B}}_1$, respectively. The last $2L_s$ rows of $\widetilde{\mathbf{C}}$ and the last $2L_t$ rows of $\widetilde{\mathbf{B}}$ are similarly denoted by $\widetilde{\mathbf{C}}_2$ and $\widetilde{\mathbf{B}}_2$, respectively. Note the relationships 
$\widetilde{\mathbf{B}}_2 = \widetilde{\mathbf{B}}_1  \mathbf{\Gamma}$ and  $\widetilde{\mathbf{C}}_2 = \widetilde{\mathbf{C}}_1  \mathbf{\Omega}$, 
where 
$\mathbf{\Gamma} = \mathrm{diag}[e^{\mathrm{j}4\pi \nu_1 T/\lambda_b},\ldots,e^{\mathrm{j}4\pi \nu_Q T/\lambda_b}]$ 
and 
$\mathbf{\Omega} = \mathrm{diag}[e^{\mathrm{j}4\pi \Delta f r_1/c},\ldots,e^{\mathrm{j}4\pi \Delta f r_Q/c}]$.

Define
\begin{align}             \label{eq:RA11}
\mathbf{V}_{B_1}^H \triangleq \mathbf{T}_R (\widetilde{\mathbf{C}}\odot \widetilde{\mathbf{B}}_1)^T,
\end{align}
\begin{align}             \label{eq:RA12}
\mathbf{V}_{B_2}^H \triangleq \mathbf{T}_R (\widetilde{\mathbf{C}}\odot \widetilde{\mathbf{B}}_2)^T &= \mathbf{T}_R[\widetilde{\mathbf{C}}\odot (\widetilde{\mathbf{B}}_1 \mathbf{\Gamma})]^T = \mathbf{T}_R \mathbf{\Gamma}^T(\widetilde{\mathbf{C}}\odot \widetilde{\mathbf{B}}_1)^T \nonumber\\
&= \mathbf{T}_R \mathbf{\Gamma} \mathbf{T}_R^{-1}\mathbf{V}_{B_1}^H, 
\end{align}
\begin{align}              \label{eq:RA13}
\mathbf{V}_{C_1}^H \triangleq \mathbf{T}_R (\widetilde{\mathbf{C}}_1 \odot \widetilde{\mathbf{B}})^T,
\end{align}
and
\begin{align}                             \label{eq:RA14}
\mathbf{V}_{C_2}^H \triangleq \mathbf{T}_R (\widetilde{\mathbf{C}}_2 \odot \widetilde{\mathbf{B}})^T&= \mathbf{T}_R[(\widetilde{\mathbf{C}}_1 \mathbf{\Omega}) \odot \widetilde{\mathbf{B}}]^T = \mathbf{T}_R \mathbf{\Omega}^T(\widetilde{\mathbf{C}}_1\odot \widetilde{\mathbf{B}})^T \nonumber\\
&= \mathbf{T}_R \mathbf{\Omega} \mathbf{T}_R^{-1}\mathbf{V}_{C_1}^H.
\end{align}
Then, based on (\ref{eq:RA12}) and (\ref{eq:RA14}), we have
\begin{eqnarray}                \label{eq:RA15}
\mathbf{V}_{B_2}^H (\mathbf{V}_{B_1}^H)^{\dag}= \mathbf{T}_R\mathbf{\Gamma} \mathbf{T}_R^{-1},
\end{eqnarray}
\begin{eqnarray}                \label{eq:RA16}
\mathbf{V}_{C_2}^H (\mathbf{V}_{C_1}^H)^{\dag}= \mathbf{T}_R\mathbf{\Omega} \mathbf{T}_R^{-1}.
\end{eqnarray}
Performing an eigenvalue decomposition of $\mathbf{V}_{B_2}^H (\mathbf{V}_{B_1}^H)^{\dag}$ produces the resulting matrices $\widehat{\mathbf{T}}_R$ and $\widehat{\mathbf{\Gamma}}$. Compute $\widehat{\mathbf{\Omega}}$ as 
$\widehat{\mathbf{\Omega}} = \widehat{\mathbf{T}}_R^{-1} (\mathbf{V}_{C_2}^H (\mathbf{V}_{C_1}^H)^{\dag}) \widehat{\mathbf{T}}_R$. 
Once $\mathbf{\Gamma}$ and $\mathbf{\Omega}$ are found, the velocity and range are obtained as
\begin{eqnarray}                         \label{eq:RA23}
\widehat{\nu}_q = \frac{\angle \Gamma_{qq}}{4\pi T/\lambda_b}   ,~~~ \widehat{r}_q=  \frac{\angle \Omega_{qq}}{4\pi \Delta f/c}.
\end{eqnarray}
Clearly, the parameters $\widehat{\nu}_q$ and $\widehat{r}_q$ are paired automatically. 

\subsection{Pairing of 2-D DoA with range-Doppler}
The estimates $(\widehat{\theta}_q, \widehat{\varphi}_q), q=1,2,\ldots,Q$ in (\ref{eq:RA10}) are not yet paired with $(\widehat{\nu}_i,\widehat{r}_i),i=1,2,\ldots,Q$ in (\ref{eq:RA23}). This is because $\mathbf{T}_R$ in the above computations does not show dependence on $\mathbf{T}_L$. We now prove that it is possible to obtain $\mathbf{T}_R$ from $\mathbf{T}_L$ thereby leading to an \textit{ipso facto} pairing between $(\widehat{\theta}_q, \widehat{\varphi}_q), q=1,2,\ldots,Q$ and $(\widehat{\nu}_i,\widehat{r}_i),i=1,2,\ldots,Q$.

By invoking (\ref{eq:RA2}), (\ref{eq:RA3}) and (\ref{eq:RA5}), we have
\begin{align}
\mathbf{R}_{XZ}=
 \left[\begin{matrix}
 \widetilde{\mathbf{A}}_x  \\
 \widetilde{\mathbf{A}}_z
\end{matrix} \right] \mathbf{R}_{\bm{\rho}} (\widetilde{\mathbf{C}}\odot \widetilde{\mathbf{B}})^T  
= \mathbf{U}_1 \mathbf{\Lambda} \mathbf{V}_1^H = \left[\begin{matrix}
 \widetilde{\mathbf{A}}_x  \\
 \widetilde{\mathbf{A}}_z
\end{matrix} \right] \mathbf{T}_L \mathbf{\Lambda} \mathbf{T}_R (\widetilde{\mathbf{C}}\odot \widetilde{\mathbf{B}})^T.
\end{align}
It follows that
\begin{eqnarray}                        \label{eq:RA26}
\mathbf{T}_L \mathbf{\Lambda} \mathbf{T}_R = \mathbf{R}_{\bm{\rho}},
\end{eqnarray}
which implies that
\begin{eqnarray}
\mathbf{T}_R =\mathbf{\Lambda}^{-1}\mathbf{T}_L^{-1}\mathbf{R}_{\bm{\rho}}.
\end{eqnarray}
Substituting (\ref{eq:RA26}) into (\ref{eq:RA15}) and (\ref{eq:RA16}) yields
\begin{align}                        \label{eq:RA27}
\mathbf{V}_{B_2}^H (\mathbf{V}_{B_1}^H)^{\dag}= \mathbf{T}_R\mathbf{\Gamma} \mathbf{T}_R^{-1}&= \mathbf{\Lambda}^{-1}\mathbf{T}_L^{-1}\mathbf{R}_{\bm{\rho}} \mathbf{\Gamma} \mathbf{R}_{\bm{\rho}}^{-1} \mathbf{T}_L\mathbf{\Lambda} \nonumber\\
&= \mathbf{\Lambda}^{-1}\mathbf{T}_L^{-1}\widetilde{\mathbf{\Gamma}}  \mathbf{T}_L\mathbf{\Lambda},
\end{align}
and
\begin{align}                        \label{eq:RA28}
\mathbf{V}_{C_2}^H (\mathbf{V}_{C_1}^H)^{\dag}= \mathbf{T}_R\mathbf{\Omega} \mathbf{T}_R^{-1} &= \mathbf{\Lambda}^{-1}\mathbf{T}_L^{-1}\mathbf{R}_{\bm{\rho}} \mathbf{\Omega} \mathbf{R}_{\bm{\rho}}^{-1} \mathbf{T}_L\mathbf{\Lambda}\nonumber\\
&=\mathbf{\Lambda}^{-1}\mathbf{T}_L^{-1} \widetilde{\mathbf{\Omega}}\mathbf{T}_L\mathbf{\Lambda},
\end{align}
where $\widetilde{\mathbf{\Gamma}} \triangleq \mathbf{R}_{\bm{\rho}} \mathbf{\Gamma} \mathbf{R}_{\bm{\rho}}^{-1}$ and $\widetilde{\mathbf{\Omega}} \triangleq \mathbf{R}_{\bm{\rho}} \mathbf{\Omega} \mathbf{R}_{\bm{\rho}}^{-1}$. Because $\mathbf{R}_{\bm{\rho}}$ is a real-valued diagonal matrix, we have $\angle{\widetilde{\Gamma}_{qq}}=\angle{\Gamma}_{qq}$ and $\angle{\widetilde{\Omega}_{ii}}=\angle{\Omega}_{ii}$. Rewriting (\ref{eq:RA27}) and (\ref{eq:RA28}) yields
\begin{eqnarray}                 \label{eq:RA31}
\widetilde{\mathbf{\Gamma}}=\mathbf{T}_L\mathbf{\Lambda} \mathbf{V}_{B_2}^H (\mathbf{V}_{B_1}^H)^{\dag}\mathbf{\Lambda}^{-1}\mathbf{T}_L^{-1},
\end{eqnarray}
\begin{eqnarray}                \label{eq:RA32}
\widetilde{\mathbf{\Omega}} =\mathbf{T}_L\mathbf{\Lambda}\mathbf{V}_{C_2}^H (\mathbf{V}_{C_1}^H)^{\dag}\mathbf{\Lambda}^{-1}\mathbf{T}_L^{-1}.
\end{eqnarray}
The velocity and range which corresponds to the estimated elevation angle and azimuth angle $(\widehat{\theta}_q,\widehat{\varphi}_q)$ in (\ref{eq:RA10}) are now computed as
\begin{eqnarray}
\widehat{\nu}_q = \frac{\angle \widehat{\widetilde{{\Gamma}}}_{qq}}{4\pi T/\lambda_b}   ,~~~ \widehat{r}_q=  \frac{\angle \widehat{\widetilde{{\Omega}}}_{qq}}{4\pi \Delta f/c}.
\end{eqnarray}

Algorithm~\ref{alg:c3_recovery} summarizes these steps of our CCing algorithm.

\begin{algorithm}[H]
	\caption{\textit{C-C}ube auto-pair\textit{ing} parameter retrieval (CCing) 
	}
	\label{alg:c3_recovery}
	\begin{algorithmic}[1]
		\Statex \textbf{Input:} Received signal $\widetilde{\mathbf{r}}_x$ ($\widetilde{\mathbf{r}}_z$) in the space-time-frequency coarray domain
		\Statex \textbf{Output:} $\widehat{r}_q$, $\widehat{\nu}_q$, $\widehat{\theta}_q$, $\widehat{\varphi}_q$, $\forall q$
		\State Construct $\mathbf{X}, \mathbf{Z} \in \mathbb{C}^{(2L_s+1)\times (2L_s+1)(2L_t+1)}$ such that  $\widetilde{\mathbf{r}}_x=\mathrm{vec}(\mathbf{X})$, $\widetilde{\mathbf{r}}_z=\mathrm{vec}(\mathbf{Z})$
		
		\State $\mathbf{R}_{XZ} \leftarrow [\mathbf{X}^T ~ \mathbf{Z}^T]^T$. 
		
		\State $\mathbf{U}_1$ ($\mathbf{V}_1$) $\leftarrow$ $Q$ columns from $\mathbf{U}$ ($\mathbf{V}$) corresponding to the largest $Q$ singular values of $\mathbf{R}_{XZ}= \mathbf{U \Lambda} \mathbf{V}^H$ (via SVD). 
		\Statex $\mathbf{U}_{11}$ ($\mathbf{U}_{12}$) $\leftarrow$ the first (last) $2L_s+1$ rows of $\mathbf{U}_1$.
		
		\Statex
		\Statex $\triangleright$ \textbf{2-D DoA Estimation:}
		\State $\mathbf{U}_{111}$ ($\mathbf{U}_{112}$) $\leftarrow$ the first (last) $2L_s$ rows of $\mathbf{U}_{11}$. 
		\Statex $\mathbf{U}_{121}$ ($\mathbf{U}_{122}$) $\leftarrow$ the first (last) $2L_s$ rows of $\mathbf{U}_{12}$.
		
		\State $\mathbf{\Psi} \leftarrow \mathbf{T}_L (\mathbf{U}_{121}^{\dag} \mathbf{U}_{122}) \mathbf{T}_L^{-1}$, where $\mathbf{U}_{111}^{\dag} \mathbf{U}_{112} = \mathbf{T}_L^{-1} \mathbf{\Phi} \mathbf{T}_L$ (via eigenvalue decomposition). 
		
		\State $\widehat{\varphi}_q = \arctan \left(\frac{\angle \Phi_{qq}}{\angle \Psi_{qq}}    \right),~~~ \widehat{\theta}_q= \arcsin \left(\frac{\angle \Phi_{qq}}{\pi \sin \varphi_q} \right),\; q=1,2,\ldots,Q$.
		
		\Statex
		\Statex  $\triangleright$ \textbf{Range-Doppler Estimation and Pairing with 2-D DoA:}
		
		\State Construct $\mathbf{V}_{B_1}^H$, $\mathbf{V}_{B_2}^H$, $\mathbf{V}_{C_1}^H$ and $\mathbf{V}_{C_2}^H$ from $\mathbf{V}_1^H$ via (\ref{eq:RA11})- (\ref{eq:RA14})
		
		\State Compute $\widetilde{\mathbf{\Gamma}}$, $\widetilde{\mathbf{\Omega}}$ via (\ref{eq:RA31}) and (\ref{eq:RA32}). $\triangleright$ Estimation based on $\mathbf{T}_L$, which embodies the permutation order of 2-D DoA.
		
		\State $\widehat{\nu}_q = \frac{\angle \widetilde{{\Gamma}}_{qq}}{4\pi T/\lambda_b},~ \widehat{r}_q=  \frac{\angle \widetilde{{\Omega}}_{qq}}{4\pi \Delta f/c},\; q=1,2,\ldots,Q$. 
	\end{algorithmic}
\end{algorithm}


\section{Performance Analyses}                     
\label{Section_PA}
The error in joint estimation of angle-range-Doppler of C-Cube FDA is characterized by lower error bounds such as CRB. 
From (\ref{eq:SM8}) and (\ref{eq:SM24}), we rearrange the entries of vector $\left[ \mathbf{x}^T(t_s)~ \mathbf{z}^T(t_s) \right]^T$, leading to $2KP_s(2P_s-1) \times 1$ vector
\begin{align}
\mathbf{y}(t_s) &\triangleq \left(\mathbf{C} \odot \mathbf{B} \odot 
    \left[  \begin{matrix}
             \mathbf{A}_x \\
             \mathbf{A}_z  
             \end{matrix} 
    \right]\right) \bm{\rho} (t_s) + \mathbf{n}(t_s)   \notag \\
&=(\mathbf{C} \odot \mathbf{B} \odot \mathbf{A}_{xz})  \bm{\rho} (t_s) +\mathbf{n}(t_s),
\end{align}
where $\mathbf{A}_{xz}=[ \mathbf{A}_x^T ~~  \mathbf{A}_z^T]^T$ and $t_s=1, \ldots, L_r$. Denote the unknown parameters as $\bm{\gamma}=[\bm{\theta}^T~ \bm{\varphi}^T ~ \bm{r}^T~ \bm{\nu}^T]^T$, where $\bm{\theta}=[\theta_1,\ldots,\theta_Q]^T$, $\bm{\varphi}=[\varphi_1,\ldots,\varphi_Q]^T$, $\bm{r}=[r_1,\ldots,r_Q]^T$, and $\bm{\nu}=[\nu_1,\ldots,\nu_Q]^T$. The data vectors $\{\mathbf{y}(t_s)\}_{t_s=1}^{L_r}$ are i. i. d Gaussian random variables, distributed as
\begin{align}               \label{eq:PA_CRB2}
\mathbf{y}(t_s) \sim \mathcal{N}\left(0, \mathbf{R}(\bm{\gamma})  \right),
\end{align}
where $ \mathbf{R}(\bm{\gamma}) \triangleq (\mathbf{C} \odot \mathbf{B} \odot \mathbf{A}_{xz})\mathbf{R}_{\bm{\rho}}(\mathbf{C} \odot \mathbf{B} \odot \mathbf{A}_{xz})^H + \sigma_n^2 \mathbf{I}_{2KP_s(2P_s-1)}$. In the sequel, we assume signal variances $\{\sigma_q^2 \}_{q=1}^Q$ and noise variance $\sigma_n^2$ which are defined in (\ref{eq:SM15}) and (\ref{eq:SM7}), respectively, are known \textit{a priori}.

Then, the entries of the Fisher information matrix (FIM) $\mathbf{J}(\bm{\gamma})$ \cite{MWang2017} are 
\begin{align}              \label{eq:PA_CRB4}
\frac{1}{L_r}[\mathbf{J}(\bm{\gamma})]_{m,n} &= \mathrm{vec}^H\left( \frac{\partial \mathbf{R}(\bm{\gamma})}{\partial \gamma_m} \right) \mathbf{W} \mathrm{vec} \left( \frac{\partial \mathbf{R}(\bm{\gamma})}{\partial \gamma_n} \right)       \notag \\
&= \left( \frac{\partial \mathrm{vec}(\mathbf{R}(\bm{\gamma}))}{\partial \gamma_m}   \right)^H  \mathbf{W} \left(  \frac{\partial \mathrm{vec}(\mathbf{R}(\bm{\gamma}))}{\partial \gamma_n}     \right),
\end{align}
where $\mathbf{W}=  \mathbf{R}^{-T}(\bm{\gamma}) \otimes \mathbf{R}^{-1}(\bm{\gamma})$ and $\gamma_n$ is the $n$-th entry of $\bm{\gamma}$, $1\leq n \leq 4Q$. 
The CRB of the $n$-th unknown parameter $\gamma_n$ is 
\begin{align}                     \label{eq:PA_CRB5}
\mathrm{CRB}(\gamma_n) = \left[ \mathbf{J}^{-1}(\bm{\gamma}) \right]_{n,n}.
\end{align}

Vectorizing $\mathbf{R}(\bm{\gamma})$ leads to
\begin{align}
\mathbf{r}_{xz} &= \mathrm{vec}(\mathbf{R}(\bm{\gamma}))                             \notag \\
&= \mathbf{K}_{xz} \left[ (\mathbf{C}^*\odot \mathbf{C})\odot (\mathbf{B}^*\odot \mathbf{B}) \odot (\mathbf{A}_{xz}^* \odot \mathbf{A}_{xz})    \right] \mathbf{r}_{\bm{\rho}}                                   \notag \\
&~~ + \sigma_n^2 \mathrm{vec}(\mathbf{I}_{2KP_s(2P_s-1)}),
\end{align}
where $\mathbf{K}_{xz} \in \mathbb{C}^{4K^2 P_s^2(2P_s-1)^2 \times 4K^2 P_s^2(2P_s-1)^2}$ is a known permutation matrix. 

Define 
\begin{align}
\mathbf{V}_{\bm{\theta}} & \triangleq \frac{\partial \mathbf{r}_{xz}}{ \partial \bm{\theta}^T}         \notag \\
&= \mathbf{K}_{xz} \left[ (\mathbf{C}^*\odot \mathbf{C}) \odot (\mathbf{B}^*\odot \mathbf{B}) \odot \frac{\partial (\mathbf{A}_{xz}^* \odot \mathbf{A}_{xz})}{\partial \bm{\theta}^T}    \right] \mathbf{R}_{\bm{\rho}}                            \notag \\
&= \mathbf{K}_{xz} \left\{  (\mathbf{C}^*\odot \mathbf{C}) \odot (\mathbf{B}^*\odot \mathbf{B}) \odot \left[ (\mathbf{A}_{xz}^* \odot \mathbf{A}_{xz})            \right. \right.                          \notag    \\
&~~\left.\left.   \diamond \frac{\ln(\mathbf{A}_{xz}^* \odot \mathbf{A}_{xz})}{\mathrm{diag}(\tan(\bm{\theta}^T))}          \right] \right\}  \mathbf{R}_{\bm{\rho}},
\end{align}
\begin{align}                      \label{eq:PA_CRB8}
\mathbf{V}_{\bm{\varphi}} & \triangleq \frac{\partial \mathbf{r}_{xz}}{ \partial \bm{\varphi}^T}         \notag \\
&= \mathbf{K}_{xz} \left[ (\mathbf{C}^*\odot \mathbf{C}) \odot (\mathbf{B}^*\odot \mathbf{B}) \odot \frac{\partial (\mathbf{A}_{xz}^* \odot \mathbf{A}_{xz})}{\partial \bm{\varphi}^T}    \right] \mathbf{R}_{\bm{\rho}}                            \notag \\
&= \mathbf{K}_{xz} \left[ (\mathbf{C}^*\odot \mathbf{C}) \odot (\mathbf{B}^*\odot \mathbf{B}) \odot   \left(\frac{\partial \mathbf{A}^*_{xz}(\bm{\varphi})}{\partial \bm{\varphi}^T}  \odot \mathbf{A}_{xz}  \right. \right.                                \notag \\
&~~ \left.\left. +   \mathbf{A}_{xz}^* \odot    \frac{\partial \mathbf{A}_{xz}(\bm{\varphi})}{\partial \bm{\varphi}^T} \right)        \right],
\end{align}
\begin{align}
\mathbf{V}_{\bm{r}} & \triangleq  \frac{\partial \mathbf{r}_{xz}}{ \partial \bm{r}^T}         \notag \\
&= \mathbf{K}_{xz}  \left[ \frac{\partial (\mathbf{C}^*\odot \mathbf{C})}{\partial \bm{r}^T} \odot (\mathbf{B}^*\odot \mathbf{B}) \odot (\mathbf{A}_{xz}^* \odot \mathbf{A}_{xz})  \right] \mathbf{R}_{\bm{\rho}}                                 \notag \\
&=  \mathbf{K}_{xz} \left[ (\mathbf{C}^*\odot \mathbf{C}) \diamond \frac{\ln(\mathbf{C}^*\odot \mathbf{C})}{ \mathrm{diag}(\bm{r})}   \odot (\mathbf{B}^*\odot \mathbf{B})  \right. \notag \\
& ~~ \left.\odot (\mathbf{A}_{xz}^* \odot \mathbf{A}_{xz}) \right]  \mathbf{R}_{\bm{\rho}} ,  
\end{align}
and
\begin{align}
\mathbf{V}_{\bm{\nu}} & \triangleq \frac{\partial \mathbf{r}_{xz}}{ \partial \bm{\nu}^T}         \notag \\
&=\mathbf{K}_{xz}  \left[ (\mathbf{C}^*\odot \mathbf{C}) \odot  \frac{\partial (\mathbf{B}^*\odot \mathbf{B})}{\partial \bm{\nu}^T}  \odot (\mathbf{A}_{xz}^* \odot \mathbf{A}_{xz})  \right] \mathbf{R}_{\bm{\rho}}                                 \notag \\
&=\mathbf{K}_{xz}  \left\{ (\mathbf{C}^*\odot \mathbf{C}) \odot  \left[(\mathbf{B}^*\odot \mathbf{B}) \diamond \frac{\ln(\mathbf{B}^*\odot \mathbf{B})}{\mathrm{diag}(\bm{\nu})} \right]  \right.         \notag \\
&~~ \left. \odot (\mathbf{A}_{xz}^* \odot \mathbf{A}_{xz}) \right\}       \mathbf{R}_{\bm{\rho}},                        \notag \\ 
\end{align}
where $ \frac{\partial \mathbf{A}_{xz}(\bm{\varphi})}{\partial \bm{\varphi}^T}$ in (\ref{eq:PA_CRB8}) is
\begin{align}
 \frac{\partial \mathbf{A}_{xz}(\bm{\varphi})}{\partial \bm{\varphi}^T}= 
 \left[  \begin{matrix}
             \frac{\partial \mathbf{A}_{x}(\bm{\varphi})}{\partial \bm{\varphi}^T} \\
             \frac{\partial \mathbf{A}_{z}(\bm{\varphi})}{\partial \bm{\varphi}^T}
             \end{matrix} 
    \right]
=
\left[  \begin{matrix}
              \frac{\mathbf{A}_x \diamond \ln (\mathbf{A}_x)}{\mathrm{diag}(\tan(\bm{\varphi}^T))} \\
              \mathbf{A}_z \diamond \ln (\mathbf{A}_z) \mathrm{diag}(\tan(\bm{\varphi}^T))
             \end{matrix} 
    \right].
\end{align}

\begin{theorem}                                            \label{Theo_CRB}
Given a set of $Q$ targets with unknown parameters $\bm{\gamma}=[\bm{\theta}^T~ \bm{\varphi}^T ~ \bm{r}^T~ \bm{\nu}^T]^T \in \mathbb{C}^{4Q\times 1}$ and the received signal model of co-pulsing FDA radar as in (\ref{eq:PA_CRB2}), define the derivatives of $\mathbf{r}_{xz}$ with respect to 2D-DoA and range-doppler as $\mathbf{D}_L \triangleq [\mathbf{V}_{\bm{\theta}} ~ \mathbf{V}_{\bm{\varphi}}]$ and $\mathbf{D}_R \triangleq  [\mathbf{V}_{\bm{r}} ~ \mathbf{V}_{\bm{\nu}}]$, respectively. Then, the CRBs of $\bm{\theta}$, $\bm{\varphi}$, $\bm{r}$ and $ \bm{\nu}$ exist and have the forms
\begin{align}                      \label{eq:PA_CRB13}
\mathrm{CRB}(\bm{\theta}) &=\frac{1}{L_r} \left\{ \mathbf{V}_{\bm{\theta}}^H \mathbf{W}^{1/2} \mathbf{\Pi}_{\mathbf{W}^{1/2}\mathbf{D}_R}^{\bot}  \left(\mathbf{\Pi}_{\mathbf{\Pi}_{\mathbf{W}^{1/2}\mathbf{D}_R}^{\bot}  \mathbf{W}^{1/2}\mathbf{V}_{\bm{\varphi}}}^{\bot}\right) \right. \notag  \\
&~~ \times \left. \mathbf{\Pi}_{\mathbf{W}^{1/2}\mathbf{D}_R}^{\bot}  \mathbf{W}^{1/2}\mathbf{V}_{\bm{\theta}} \right\}^{-1},
\end{align}
\begin{align}
\mathrm{CRB}(\bm{\varphi})  &= \frac{1}{L_r} \left\{ \mathbf{V}_{\bm{\varphi}}^H \mathbf{W}^{1/2} \mathbf{\Pi}_{\mathbf{W}^{1/2}\mathbf{D}_R}^{\bot}  \left(\mathbf{\Pi}_{\mathbf{\Pi}_{\mathbf{W}^{1/2}\mathbf{D}_R}^{\bot}  \mathbf{W}^{1/2}\mathbf{V}_{\bm{\theta}}}^{\bot}\right) \right. \notag  \\
&~~ \times \left. \mathbf{\Pi}_{\mathbf{W}^{1/2}\mathbf{D}_R}^{\bot}  \mathbf{W}^{1/2}\mathbf{V}_{\bm{\varphi}} \right\}^{-1},
\end{align}
\begin{align}
\mathrm{CRB}(\bm{r})  &=\frac{1}{L_r} \left\{ \mathbf{V}_{\bm{r}}^H \mathbf{W}^{1/2} \mathbf{\Pi}_{\mathbf{W}^{1/2}\mathbf{D}_L}^{\bot}  \left(\mathbf{\Pi}_{\mathbf{\Pi}_{\mathbf{W}^{1/2}\mathbf{D}_L}^{\bot}  \mathbf{W}^{1/2}\mathbf{V}_{\bm{\nu}}}^{\bot}\right) \right. \notag  \\
&~~ \times \left. \mathbf{\Pi}_{\mathbf{W}^{1/2}\mathbf{D}_L}^{\bot}  \mathbf{W}^{1/2}\mathbf{V}_{\bm{r}} \right\}^{-1},
\end{align}
and
\begin{align}                            \label{eq:PA_CRB16}
\mathrm{CRB}(\bm{\nu})&=\frac{1}{L_r} \left\{ \mathbf{V}_{\bm{\nu}}^H \mathbf{W}^{1/2} \mathbf{\Pi}_{\mathbf{W}^{1/2}\mathbf{D}_L}^{\bot}  \left(\mathbf{\Pi}_{\mathbf{\Pi}_{\mathbf{W}^{1/2}\mathbf{D}_L}^{\bot}  \mathbf{W}^{1/2}\mathbf{V}_{\bm{r}}}^{\bot}\right) \right. \notag  \\
&~~ \times \left. \mathbf{\Pi}_{\mathbf{W}^{1/2}\mathbf{D}_L}^{\bot}  \mathbf{W}^{1/2}\mathbf{V}_{\bm{\nu}} \right\}^{-1},
\end{align}
if and only if $[\mathbf{D}_L~ \mathbf{D}_R]$ has the full column rank, namely $\mathrm{rank}([\mathbf{D}_L~ \mathbf{D}_R]) =4Q$.
\end{theorem}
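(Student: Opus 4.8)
The plan is to recognize the Fisher information matrix (FIM) as a Gram matrix and then extract the per-parameter blocks of its inverse through two nested Schur complements. Collecting the four derivative blocks into $\mathbf{D} \triangleq [\mathbf{D}_L ~ \mathbf{D}_R] = [\mathbf{V}_{\bm\theta} ~ \mathbf{V}_{\bm\varphi} ~ \mathbf{V}_{\bm r} ~ \mathbf{V}_{\bm\nu}]$, formula (\ref{eq:PA_CRB4}) reads compactly as $\tfrac{1}{L_r}\mathbf{J}(\bm\gamma) = \mathbf{D}^H \mathbf{W} \mathbf{D}$. Since $\mathbf{R}(\bm\gamma)$ is Hermitian positive definite, so is $\mathbf{W} = \mathbf{R}^{-T} \otimes \mathbf{R}^{-1}$, and it admits an invertible Hermitian square root $\mathbf{W}^{1/2}$. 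Hence $\tfrac{1}{L_r}\mathbf{J}(\bm\gamma) = (\mathbf{W}^{1/2}\mathbf{D})^H (\mathbf{W}^{1/2}\mathbf{D})$ is the Gram matrix of $\mathbf{W}^{1/2}\mathbf{D}$. A Gram matrix is invertible exactly when its generating matrix has full column rank; because $\mathbf{W}^{1/2}$ is nonsingular, this is equivalent to $\mathrm{rank}([\mathbf{D}_L~\mathbf{D}_R]) = 4Q$. This establishes the ``if and only if'' claim: the FIM, and therefore every CRB in (\ref{eq:PA_CRB13})--(\ref{eq:PA_CRB16}), exists precisely under the full-rank hypothesis.

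For the explicit block forms, I would compute the relevant diagonal blocks of $\mathbf{J}^{-1}$ by block inversion. Writing $\tilde{\mathbf{D}}_L \triangleq \mathbf{W}^{1/2}\mathbf{D}_L$ and $\tilde{\mathbf{D}}_R \triangleq \mathbf{W}^{1/2}\mathbf{D}_R$, I partition $\tfrac{1}{L_r}\mathbf{J}$ into the $2\times 2$ block matrix with diagonal blocks $\tilde{\mathbf{D}}_L^H\tilde{\mathbf{D}}_L$, $\tilde{\mathbf{D}}_R^H\tilde{\mathbf{D}}_R$ and off-diagonal $\tilde{\mathbf{D}}_L^H\tilde{\mathbf{D}}_R$. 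Full column rank of $[\mathbf{D}_L~\mathbf{D}_R]$ guarantees $\tilde{\mathbf{D}}_R^H\tilde{\mathbf{D}}_R$ is invertible, so the Schur complement exists and the DoA (left) block of the inverse equals $L_r^{-1}(\tilde{\mathbf{D}}_L^H \mathbf{\Pi}^{\bot}_{\tilde{\mathbf{D}}_R}\tilde{\mathbf{D}}_L)^{-1}$, with $\mathbf{\Pi}^{\bot}_{\tilde{\mathbf{D}}_R}$ exactly the orthogonal projector defined in the preamble. Symmetrically, the range--Doppler (right) block is $L_r^{-1}(\tilde{\mathbf{D}}_R^H \mathbf{\Pi}^{\bot}_{\tilde{\mathbf{D}}_L}\tilde{\mathbf{D}}_R)^{-1}$.

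Next I would resolve each $2Q \times 2Q$ block into its $Q \times Q$ single-parameter diagonals with a second Schur complement. Setting $\mathbf{G} \triangleq \mathbf{\Pi}^{\bot}_{\tilde{\mathbf{D}}_R}$ and $\tilde{\mathbf{V}}_{\bm\theta} \triangleq \mathbf{W}^{1/2}\mathbf{V}_{\bm\theta}$, $\tilde{\mathbf{V}}_{\bm\varphi} \triangleq \mathbf{W}^{1/2}\mathbf{V}_{\bm\varphi}$, the matrix $\tilde{\mathbf{D}}_L^H\mathbf{G}\tilde{\mathbf{D}}_L$ is itself a $2\times 2$ block matrix, and its Schur complement yields
\begin{equation*}
\mathrm{CRB}(\bm\theta) = \tfrac{1}{L_r}\Big( \tilde{\mathbf{V}}_{\bm\theta}^H\mathbf{G}\tilde{\mathbf{V}}_{\bm\theta} - \tilde{\mathbf{V}}_{\bm\theta}^H\mathbf{G}\tilde{\mathbf{V}}_{\bm\varphi}(\tilde{\mathbf{V}}_{\bm\varphi}^H\mathbf{G}\tilde{\mathbf{V}}_{\bm\varphi})^{-1}\tilde{\mathbf{V}}_{\bm\varphi}^H\mathbf{G}\tilde{\mathbf{V}}_{\bm\theta}\Big)^{-1}.
\end{equation*}
The crucial simplification is to fold this into a single projection: because $\mathbf{G}$ is an orthogonal projector, $\mathbf{G}=\mathbf{G}^H=\mathbf{G}^2$, so $(\mathbf{G}\tilde{\mathbf{V}}_{\bm\varphi})^H(\mathbf{G}\tilde{\mathbf{V}}_{\bm\varphi}) = \tilde{\mathbf{V}}_{\bm\varphi}^H\mathbf{G}\tilde{\mathbf{V}}_{\bm\varphi}$, and a direct expansion of $\tilde{\mathbf{V}}_{\bm\theta}^H\mathbf{G}\,\mathbf{\Pi}^{\bot}_{\mathbf{G}\tilde{\mathbf{V}}_{\bm\varphi}}\,\mathbf{G}\tilde{\mathbf{V}}_{\bm\theta}$, using $\mathbf{G}^2=\mathbf{G}$ twice, reproduces exactly the parenthesized quantity above. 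This gives the stated form (\ref{eq:PA_CRB13}); the formulas for $\bm\varphi$, $\bm r$, $\bm\nu$ follow by interchanging $\bm\theta\leftrightarrow\bm\varphi$ within $\mathbf{D}_L$ and, for the range--Doppler pair, by swapping the roles of $\mathbf{D}_L$ and $\mathbf{D}_R$ and of $\bm r \leftrightarrow \bm\nu$.

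The main obstacle is bookkeeping rather than depth: I must verify that every Schur complement in the cascade is legitimate, i.e. that each inner Gram matrix ($\tilde{\mathbf{D}}_R^H\tilde{\mathbf{D}}_R$ at the first level and $\tilde{\mathbf{V}}_{\bm\varphi}^H\mathbf{G}\tilde{\mathbf{V}}_{\bm\varphi}$ at the second) is invertible. Both reduce to full column rank of appropriate column-subsets of $\mathbf{W}^{1/2}[\mathbf{D}_L~\mathbf{D}_R]$, inherited from the hypothesis $\mathrm{rank}([\mathbf{D}_L~\mathbf{D}_R]) = 4Q$ together with nonsingularity of $\mathbf{W}^{1/2}$; in particular $\mathbf{G}\tilde{\mathbf{V}}_{\bm\varphi}$ has full column rank because $[\tilde{\mathbf{D}}_R~\tilde{\mathbf{V}}_{\bm\varphi}]$ does. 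The only genuinely non-mechanical point is the projector-idempotency identity that collapses the nested Schur complement into the compact ``projection-of-a-projection'' expression matching (\ref{eq:PA_CRB13})--(\ref{eq:PA_CRB16}).
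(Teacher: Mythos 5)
Your proposal is correct and follows essentially the same route as the paper's proof: both whiten the FIM as $(\mathbf{W}^{1/2}\mathbf{D})^H(\mathbf{W}^{1/2}\mathbf{D})$, apply the matrix inversion lemma (a first Schur complement) to isolate the DoA and range--Doppler blocks, then a second Schur complement to extract each $Q\times Q$ diagonal block, and establish the existence condition from $\mathbf{W}\succ 0$ together with rank preservation of $\mathbf{D}^H\mathbf{W}\mathbf{D}$. The only difference is one of rigor, in your favor: you explicitly verify the projector-idempotency identity that collapses the inner Schur complement into the nested-projection form, and you check invertibility of the inner Gram blocks ($\tilde{\mathbf{D}}_R^H\tilde{\mathbf{D}}_R$ and $\tilde{\mathbf{V}}_{\bm\varphi}^H\mathbf{G}\tilde{\mathbf{V}}_{\bm\varphi}$) from the full-rank hypothesis, both of which the paper asserts without proof.
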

\begin{proof}
See Appendix~\ref{app:proof_theo_CRB}.
\end{proof}

It follows from the closed form of CRBs in (\ref{eq:PA_CRB13})-(\ref{eq:PA_CRB16}) that the azimuth angle, elevation angle, range, and Doppler velocity interact with each other. This implies that the coupling among these parameters has an effect on the estimation performance.

In addition, we also derive the guarantees for recovering targets. In particular, we discuss conditions on the number of antennas and pulses required to retrieve unknown parameters of $Q$ far-field targets based on the properties of the matrices $\mathbf{T}_L$ and $\mathbf{T}_R$ obtained from the concatenated covariance matrix $\mathbf{R}_{XZ}$ of measurements. This general result then leads to similar guarantees for other L-shaped co-pulsing FDAs and non-FDAs mentioned in Table~\ref{tbl:summary}. 
The following Theorem~\ref{Theo_rec} provides lower bounds on the number of antennas and pulses required by the C-CUBE radar to guarantee perfect recovery of the unknown parameter set $\bm{\gamma}_q=\{\theta_q, \varphi_q,r_q,\nu_q\}_{q=1}^{Q}$ from $\mathbf{R}_{XZ}$. 

\begin{theorem}                \label{Theo_rec}
  (C-Cube:  L-shaped co-prime array with co-prime FO and co-prime pulsing) Consider an L-shaped co-prime array with $P_s=N_s+2M_s-1$ sensors along each axis and co-prime FOs. Each sensor transmits at co-prime PRI with a total of $K=N_t+2M_t-1$ pulses in a CPI. The fundamental spatial spacing and fundamental PRI are $d$  and $T$, respectively. If \upshape \textbf{C1}-\textbf{C5} \itshape hold, then 
the unknown parameter set $\bm{\gamma}_q=\{\theta_q, \varphi_q,r_q,\nu_q\}_{q=1}^{Q}$ of $Q$ far-field targets are perfectly recovered from $\mathbf{R}_{XZ}$ with the lower bounds on the number of physical sensor elements and the number of transmit pulses as, respectively,
\begin{align}
    P_s > 2\sqrt{Q+1}-2 \textrm{\upshape{ and }} K > 2\sqrt{Q+1} -2.
\end{align}
\end{theorem}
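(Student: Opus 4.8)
The plan is to reduce the recovery guarantee to the contiguous-aperture conditions \textbf{C4} and \textbf{C5} of Proposition~\ref{Pro_TL}, and then convert those conditions into bounds on $P_s$ and $K$ by maximizing the attainable coarray aperture over the co-prime parameters. Proposition~\ref{Pro_TL} already certifies that \textbf{C1}--\textbf{C5} suffice for the existence of the invertible matrices $\mathbf{T}_L$ and $\mathbf{T}_R$, and the CCing algorithm of Section~\ref{Section_ESPRIT} then recovers every $\bm{\gamma}_q$ exactly. Since \textbf{C1}--\textbf{C3} only fix the unambiguous sampling scales in range, angle, and Doppler, the binding constraints on the element and pulse counts are precisely $L_s > (Q-1)/2$ and $L_t > (Q-1)/2$. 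Hence I would characterize the minimal admissible $P_s$ (and, by symmetry, $K$) as the smallest value for which the \emph{largest} achievable aperture $L_s$ first exceeds the threshold $(Q-1)/2$.

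For the spatial dimension, I would eliminate $N_s$ through the identity $P_s = N_s + 2M_s - 1$, writing $N_s = P_s - 2M_s + 1$, so that
\begin{align}
L_s = M_s N_s + M_s - 1 = M_s P_s - 2M_s^2 + 2M_s - 1.
\end{align}
Regarding $M_s$ as a continuous variable for a fixed $P_s$, this is a downward parabola maximized at $M_s^\star = (P_s+2)/4$, at which a short computation yields
\begin{align}
L_s^{\max} = \frac{(P_s+2)^2}{8} - 1.
\end{align}
Imposing \textbf{C4} on this optimum, i.e. solving $\tfrac{(P_s+2)^2}{8} - 1 > \tfrac{Q-1}{2}$, gives $(P_s+2)^2 > 4(Q+1)$ and therefore $P_s > 2\sqrt{Q+1} - 2$. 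The identical computation for the slow-time coarray, now with $K = N_t + 2M_t - 1$ and $L_t = M_t N_t + M_t - 1$, produces $K > 2\sqrt{Q+1} - 2$.

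The hardest part will be the passage from this continuous optimization back to the discrete, co-prime-constrained setting, since the maximizer $M_s^\star = (P_s+2)/4$ is generally not an integer and the pair $(M_s,N_s)$ must be co-prime with $M_s < N_s$. I would argue that $L_s^{\max}$ is an upper bound on any integer-attainable aperture, so $P_s > 2\sqrt{Q+1}-2$ is \emph{necessary} for \textbf{C4} to be satisfiable; for the matching sufficiency I would verify that rounding $M_s^\star$ to a nearest admissible co-prime pair still leaves $L_s$ above $(Q-1)/2$ at this $P_s$, thereby pinning down the stated lower bound. An analogous rounding check finishes the argument for $K$.
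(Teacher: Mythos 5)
Your proposal is correct and takes essentially the same route as the paper: exact recovery is delegated to Proposition~\ref{Pro_TL} together with the CCing steps, and the numerical bound comes from showing $L_s \leq (P_s+2)^2/8 - 1$ and then imposing \textbf{C4}/\textbf{C5} --- your maximization of the parabola $L_s = M_s P_s - 2M_s^2 + 2M_s - 1$ over $M_s$ is exactly the paper's arithmetic--geometric mean inequality $2M_s(N_s+1) \leq (2M_s+N_s+1)^2/4$ in disguise. The discrete, co-prime ``rounding'' issue you flag as the hardest part is not actually needed: the theorem only asserts $P_s > 2\sqrt{Q+1}-2$ and $K > 2\sqrt{Q+1}-2$ as necessary consequences of \textbf{C4} and \textbf{C5}, which your upper-bound (necessity) argument already delivers, and the paper itself does not address achievability of the bound by admissible co-prime pairs.
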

\begin{IEEEproof}
It follows from Proposition \ref{Pro_TL} that $\mathbf{U}_{111}$ and $\mathbf{U}_{121}$ are both full column rank and, therefore, left-invertible under \textbf{C1} and \textbf{C4}. So, we can recover $\mathbf{\Phi}$ and $\mathbf{\Psi}$ uniquely based on (\ref{eq:RA7_1}) and (\ref{eq:RA7_2}) using the same permutation matrix $\mathbf{T}_L$. Then, \textbf{C1} implies that both $\angle{\Phi_{qq}} = \pi \sin\theta_q \sin\varphi_q$ and $\angle{\Psi_{qq}}= \pi \sin\theta_q \cos\varphi_q, 1\leq q\leq Q$ are unique. Consequently, the parameters $(\theta_q, \varphi_q), 1\leq q\leq Q$ can also be recovered uniquely. Similarly, from Proposition \ref{Pro_TL}, $\mathbf{V}_{B_1}$ and $\mathbf{V}_{C_1}$ are both full column rank under \textbf{C2}, \textbf{C3}, and \textbf{C4}. Thus, $\mathbf{V}_{B_1}^H$ and $\mathbf{V}_{C_1}^H$ are right-invertible. So, $\mathbf{\Gamma}$ and $\mathbf{\Omega}$ can be recovered uniquely based on (\ref{eq:RA15}) and (\ref{eq:RA16}) using the permutation matrix $\mathbf{T}_R$. In addition, under \textbf{C2} and \textbf{C3}, both $\angle{\Gamma_{qq}} = 4\pi\nu_q T/{\lambda_b}$ and $\angle{\Omega_{qq}}= 4\pi r_q \Delta f /c, 1\leq q\leq Q$ are unique thereby guaranteeing unique recovery of the parameters $(r_q, \nu_q), 1\leq q\leq Q$. Using the procedure outline in (\ref{eq:RA26})-(\ref{eq:RA32}), $(r_q, \nu_q), 1\leq q\leq Q$ are auto-paired with $(\theta_q, \phi_q), 1\leq q\leq Q$. Therefore, $\bm{\gamma}_q=\{\theta_q, \varphi_q,r_q,\nu_q\}_{q=1}^{Q}$ can be perfectly recovered from $\mathbf{R}_{XZ}$ based on \textbf{C1}-\textbf{C5}.

Next, 
the inequality of arithmetic and geometric means \cite{ZCHao1990} yields 
$2M_sN_s+2M_s =2M_s(N_s+1) \leq \frac{(2M_s+N_s+1)^2}{4}   = \frac{(P_s+2)^2}{4}$, 
which gives 
$M_sN_s+M_s-1 \leq \frac{(P_s+2)^2}{8} -1$.
Combining this with \textbf{C4} 
gives
\begin{align}
P_s>2\sqrt{Q+1}-2
\end{align}
The lower bound on physical pulses is obtained \textit{mutatis mutandis} as
\begin{align}
K>2\sqrt{Q+1}-2.
\end{align}
\end{IEEEproof}

Theorem~\ref{Theo_rec} shows that, in C-CUBE radar, both antennas and pulses need to be at least only $2\sqrt{Q+1}-1$ to guarantee perfect recovery of $Q$. In contrast, the L-shaped \textit{U}LA with \textit{u}niform pulsing (U-U) 
\cite{TKishigami2016} requires only $Q+1$ antennas in the same aperture for perfect recovery while transmitting $Q+1$ pulses in the same CPI. The L-shaped \textit{c}o-prime array with \textit{u}niform pulsing (C-U) \cite{AMElbir2020} requires the same number of pulses but only $2\sqrt{Q+1}-1$ antennas. Similarly, U-C and C-U (non-FDA) arrays may also be hypothesised (see Table~\ref{tbl:summary}). As for L-shaped FDA that employs uniform array, FO, and PRI, i.e. U-Cube, the following Corollary~\ref{Coro_UUU} is a direct consequence of Theorem~\ref{Theo_rec}.

\begin{corollary}                \label{Coro_UUU}
  (U-Cube: L-shaped ULA with uniform FO and uniform pulsing) Consider an L-shaped ULA with $P_s=N_s+2M_s-1$ sensors along each axis and uniform linear FOs. Each sensor transmits at uniform PRI with a total of $K=N_t+2M_t-1$ pulses in a CPI. The fundamental spatial spacing and fundamental PRI are $d$  and $T$, respectively. If \upshape\textbf{C1}-\textbf{C3} \itshape hold and
\begin{description} 
\item[C6] $P_s \triangleq N_s+2M_s-1 >  Q$,\vspace{4pt}
\item[C7] $K \triangleq N_t +2M_t-1 > Q$,
\end{description}
the unknown parameter set $\bm{\gamma}_q=\{\theta_q, \varphi_q,r_q,\nu_q\}_{q=1}^{Q}$ of $Q$ far-field targets are perfectly recovered from $\mathbf{R}_{XZ}$ with the lower bounds on the number of physical sensor elements and the number of transmit pulses as, respectively,
\begin{align}
    P_s > Q \textrm{\upshape{ and }} K > Q.
\end{align}
\end{corollary}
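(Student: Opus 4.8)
The plan is to reuse the entire recovery-and-pairing machinery of Theorem~\ref{Theo_rec} and to re-derive only how the full-column-rank requirements translate into element and pulse counts for a \emph{filled} uniform geometry. First I would observe that nothing in the CCing pipeline depends on the array being co-prime: the rotational-invariance relations \eqref{eq:RA7_1}--\eqref{eq:RA9} that extract $\mathbf{\Phi},\mathbf{\Psi}$, the range-Doppler relations \eqref{eq:RA15}--\eqref{eq:RA16} that extract $\mathbf{\Gamma},\mathbf{\Omega}$, and the auto-pairing identity $\mathbf{T}_L\mathbf{\Lambda}\mathbf{T}_R=\mathbf{R}_{\bm{\rho}}$ in \eqref{eq:RA26} each require only the single-shift structure of the steering manifolds together with left-/right-invertibility of the relevant ESPRIT subarrays, the latter guaranteeing the existence and invertibility of $\mathbf{T}_L$ and $\mathbf{T}_R$. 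Hence, once those subarrays are shown to be full column rank, the unique recovery of the four diagonal rotation matrices and their \emph{ipso facto} pairing follow \emph{verbatim} from the proof of Theorem~\ref{Theo_rec}, while \textbf{C1}--\textbf{C3} (which, via \textbf{A3}--\textbf{A4}, make the phase maps injective over the unambiguous region $[0,R_{\max}]\times[0,\nu_{\max}]$ and separate the spatial frequencies) continue to ensure that the recovered phases map uniquely to $(\theta_q,\varphi_q,r_q,\nu_q)$.

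The only genuinely new step is the counting. For a filled ULA with uniform pulsing there is no co-prime coarray dilation in the angle and Doppler dimensions: the spatial manifolds $\mathbf{A}_x,\mathbf{A}_z\in\mathbb{C}^{P_s\times Q}$ and the Doppler manifold $\mathbf{B}\in\mathbb{C}^{K\times Q}$ are already Vandermonde, with pairwise-distinct generators guaranteed by \textbf{A3}, \textbf{A4}, and \textbf{C1}--\textbf{C2}. A single-element spatial shift and a single-pulse temporal shift leave ESPRIT subarrays with $P_s-1$ and $K-1$ rows, so full column rank $Q$ requires exactly $P_s-1\ge Q$ and $K-1\ge Q$, i.e. \textbf{C6} and \textbf{C7}. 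The range dimension, carried by the FDA's $\pm$ frequency offsets, supplies $2P_s-1$ virtual taps under \textbf{C3} and hence only the weaker requirement $2P_s-2\ge Q$, so it is never the binding constraint. This is precisely where the argument diverges from Theorem~\ref{Theo_rec}: because the uniform aperture grows \emph{linearly} rather than quadratically in the number of elements, the arithmetic-geometric-mean inequality that produced the $\mathcal{O}(\sqrt{Q})$ bound is replaced by a trivial linear identity, collapsing the guarantee to $P_s>Q$ and $K>Q$.

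I expect no serious obstacle, consistent with the statement that the result is a direct consequence of Theorem~\ref{Theo_rec}; the only points demanding care are identifying which of the three ESPRIT subarrays (spatial, Doppler, or range) is binding and confirming pairwise distinctness of the Vandermonde generators. The latter is immediate once \textbf{C1}--\textbf{C3} restrict $(r_q,\nu_q)$ to the unambiguous region and \textbf{A3} separates the spatial frequencies, after which the rank bookkeeping is routine. Thus the proof would consist of a short paragraph inheriting the conclusions of Proposition~\ref{Pro_TL} specialized to the uniform manifolds, followed by the one-line count that yields $P_s>Q$ and $K>Q$.
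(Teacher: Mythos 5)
Your proposal is correct and follows essentially the same route as the paper: the paper's proof simply replaces the virtual (coarray) manifold matrices $\widetilde{\mathbf{A}}_x$, $\widetilde{\mathbf{A}}_z$, $\widetilde{\mathbf{C}}$, $\widetilde{\mathbf{B}}$ by the physical manifolds $\mathbf{A}_x$, $\mathbf{A}_z$, $\mathbf{C}$, $\mathbf{B}$ and repeats the argument of Theorem~\ref{Theo_rec} \emph{ceteris paribus}, which is exactly your strategy of inheriting the ESPRIT/auto-pairing machinery and redoing only the rank counting (where the subarray ranks $P_s-1\ge Q$ and $K-1\ge Q$ yield \textbf{C6}--\textbf{C7}, and the coarray-based AM--GM step becomes unnecessary). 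Your additional observations --- that the range manifold of size $2P_s-1$ is never the binding constraint, and that the quadratic-to-linear aperture change is what collapses the bound --- are correct refinements that the paper leaves implicit.
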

\begin{IEEEproof}
  Replace the virtual manifold matrices $\widetilde{\mathbf{A}}_x$, $\widetilde{\mathbf{A}}_z$, $\widetilde{\mathbf{C}}$, and $\widetilde{\mathbf{B}}$ with the corresponding physical manifold matrices $\mathbf{A}_x$, $\mathbf{A}_z$, $\mathbf{C}$ and $\mathbf{B}$, respectively. Then, \textit{ceteris paribus}, the result follows from repeating the steps of the proof in Theorem \ref{Theo_rec}. 
\end{IEEEproof}

As mentioned in Section~\ref{Section_SM}, we are concerned with the scenario that all parameters of targets are distinct. It follows from Theorem~\ref{Theo_rec} and Corollary~\ref{Coro_UUU} that, if the number of antennas along $x(z)$-axis and pulses is fixed as $P_s$ and $K$, respectively, the DOFs of C-Cube may reach $\mathcal{O}(\min\{(P_s+2)^2/4-1,(K+2)^2/4-1\})$ while the DoFs of uniform counterparts could be at most only $\mathcal{O}(\min\{(P_s,K\})$. If the number of targets is fixed, for large $Q$, C-Cube clearly outperforms U-Cube in the required number of sensors and pulses. If the number of potential targets is no more than $Q$, our proposed C-Cube radar transmits only $2\sqrt{Q+1}-1$ pulses using $2\sqrt{Q+1}-1$ antennas compared to the U-Cube that emits $Q+1$ pulses within the same CPI and places $Q+1$ antennas in the same aperture. Obviously, our system has lower power and pulse transmission rate, which reduces its interception probability when compared to U-Cube.  For similar results for UUC, CCU, UCU, CUC, CUU, and UCC L-shaped FDAs, we refer the reader to Corollary \ref{Coro_otherFDAs} in the Appendix~\ref{app:perf_other_arrays}. Further, the guarantees of non-FDA L-shaped arrays such as U-U, U-C, C-U, and C-C also follow from Theorem~\ref{Theo_rec} as in Corollary \ref{Coro_UU} of the Appendix~\ref{app:perf_other_arrays}. 

\section{Relationship with Other Configurations}
\label{sec:adv}
Several sparse alternatives are possible for the L-shaped FDA, FOs, and pulsing. Although different sparse configurations could be adopted for the arrays, FOs and pulses, the number of array elements must be the same as that of FOs. Here, we compare a few related configurations with the basic co-prime structure of Fig.~\ref{L-shaped}. Some of these sparse structures have been reported for 1-D, usually non-FDA, arrays. But their suitability for L-shaped FDAs remains unexamined. 

\subsection{Alternative L-shaped FDA coarrays}
\label{subsec:alt_array}
\begin{figure}[p]
\includegraphics[width=1.0\columnwidth]{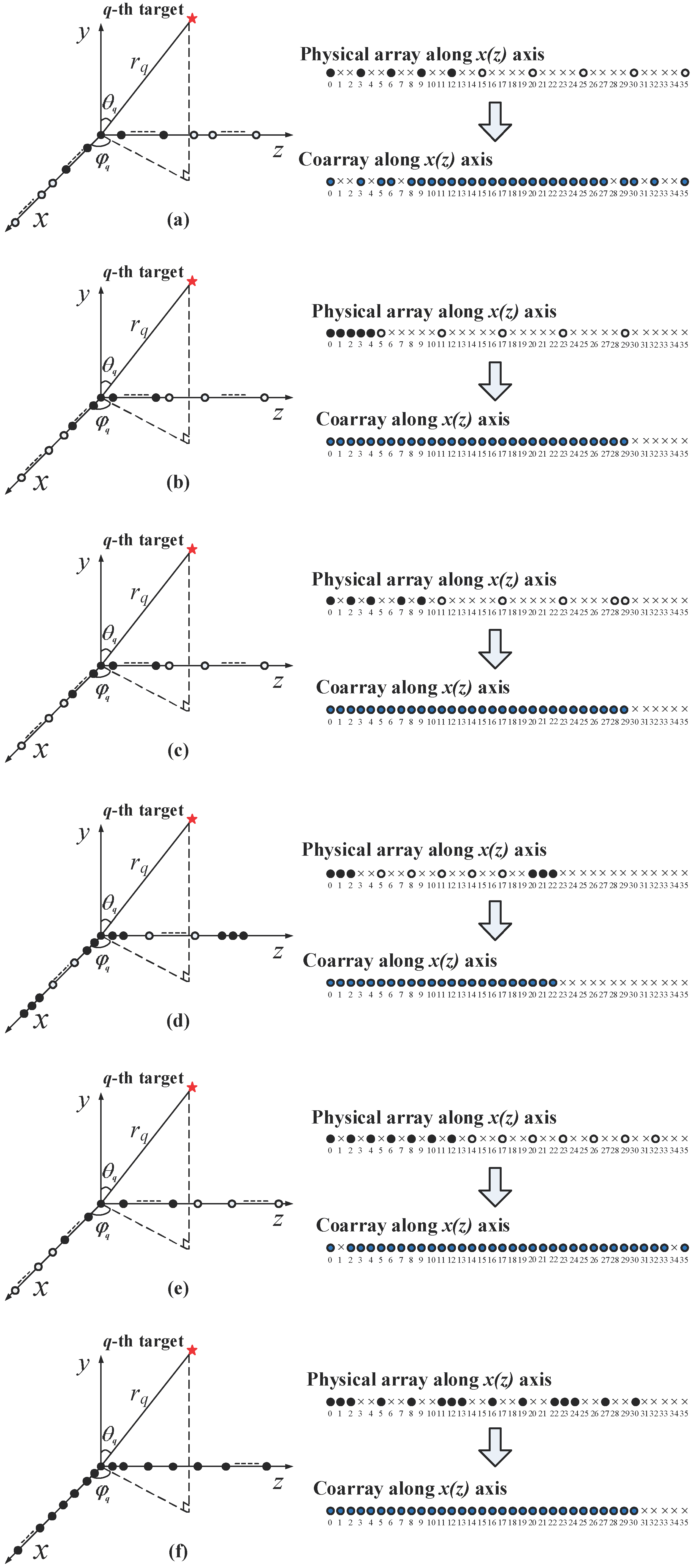}
\caption{Alternative sparse L-shaped FDA coarrays with the common physical aperture $35d$. Bullets denote sensors and crosses indicate empty locations. Black and grey bullets denote two uniform physical sparse subarrays. The virtual sensors of coarray are marked with blue bullets. (a) CADiS structure with $M=3$, $N=5$, $L=3$ (b) nested array with $N_1=N_2=N=5$ (c) super-nested array with $N_1=N_2=N=5$ (d) CNA structure with $N_1=N_2=N=3$ (e) GNA structure with $N_1=N_2=5$, $\alpha=2$, $\beta=3$; this reduces to a nested array when $\alpha=1$, $\beta=N$ (f) multi-coset array with coset pattern $\mathcal{P}=\{0,1,2,5 \}$ with block length of 7. }
\label{Other_L_FDA}
\end{figure}

A straightforward sparse array is achieved by randomly removing the elements from the ULA FDA \cite{sedighi2019optimum}. This may also be applied to an L-shaped FDA. It is capable of reducing the hardware costs and retaining a reasonable estimation performance. However, random arrays also lead to increased sidelobes \cite{lo1964mathematical}. Here, the DoFs are also restricted because a coarray is not exploited. To increase the spatial DoFs, it is pertinent to consider sparse coarray structures. 

The mutual coupling matrix of $2Q+1$-element filled L-shaped array, including FDA, is a $(2Q+1)\times (2Q+1)$ symmetric matrix 
$\mathbf{H} =
\left(
\begin{matrix}
\mathbf{H}_x & \mathbf{H}_{xz} \\
\mathbf{H}_{zx} & \mathbf{H}_{z} 
\end{matrix} \right)$,
where 
\begin{align}           
\mathbf{H}_x = \mathbf{H}_z=
\left(
\begin{matrix}
1 & h_1 & \cdots & h_{Q} \\
h_1 & 1 & \ddots  &   \vdots     \\
\vdots &  & \ddots  & h_1    \\
h_{Q} & \cdots & h_1 & 1
\end{matrix} \right)   \in \mathbb{C}^{(Q+1) \times (Q+1)},
\end{align}
is the self-coupling (symmetric and Toeplitz) matrix along each axis \cite{TSvantesson1999} and 
\begin{align}           
\mathbf{H}_{xz} = \mathbf{H}_{zx}=
\left(
\begin{matrix}
h_{z_1 x_Q} & h_{z_1 x_{Q-1}} & \cdots & h_{z_1 x_1} \\
h_{z_2 x_Q} & h_{z_2 x_{Q-1}} & \cdots  & h_{z_2 x_1}     \\
\vdots & \vdots & \ddots  & \vdots    \\
h_{z_Q x_Q} & h_{z_Q x_{Q-1}} & \cdots & h_{z_Q x_1}
\end{matrix} \right)   \in \mathbb{C}^{Q \times Q},
\end{align}
$\mathbf{H}_{xz}$ and $\mathbf{H}_{zx}$ are the cross-coupling matrices between the $x$- and $z$-axes. Recall the following definition. 
\begin{definition}[Coupling Leakage \cite{CLiu2016}]\label{eq:Def_CL} The coupling leakage of an L-shaped array with a mutual coupling matrix $H$ is defined as
    $\mathcal{L}_{\mathbf{H}} \triangleq \frac{\Vert \mathbf{H} -\mathrm{diag}(\mathbf{H}) \Vert_{\mathcal{F}}}{\Vert \mathbf{H}\Vert_{\mathcal{F}}}$.
\end{definition}
We set $h_1 = 0.3 e^{j\pi /3}$ and $h_i = \frac{h_1}{i}e^{-j(i-1)\pi/8}$. Then, the coupling leakage is computed using Definition~\ref{eq:Def_CL} for any given L-shaped array structure (see Table \ref{tbl:coupling}). 

Our adopted co-prime array is a special case of compressed inter-element spacing (CACIS) co-prime array \cite{SQin2015}, which doubles the number of sensors in a constituting subarray. To mitigate the mutual coupling among array elements, the coprime array with displaced subarrays (CADiS) L-shaped FDA structure is proposed \cite{SQin2015} (Fig.~\ref{Other_L_FDA}a), allowing the minimum inter-element spacing to be much larger than the typical half-wavelength requirement while it occupies larger space for the same number of physical elements.

The DoFs are also enhanced in a nested array \cite{pal2010nested}, which is a concatenation of
two ULAs: the inner and outer with $N_1$ and $N_2$ elements spaced at $d_1$ and $d_2$, respectively, such that $d_2=(N_1+1)d_1$. Fig.~\ref{Other_L_FDA}b shows nested array for an L-shaped structure. Note that setting $M=1$ and $L=N+1$ in the CADiS L-shaped FDA converts it to nested CADiS structure \cite{SQin2015}. The super-nested L-shaped array (Fig.~\ref{Other_L_FDA}c) retains the benefits of the nested array and also achieves reduced mutual coupling by redistributing the elements of the dense ULA portion of the nested array. Fig.~\ref{Other_L_FDA}d and e show the L-shaped concatenated nested array (CNA) \cite{RRobin2017} and generalized nested array (GNA) \cite{JShi2018}, respectively. While the former is a nested array concatenated with its mirror image, the latter enlarges the inter-element spacing of two concatenated uniform linear subarrays with two flexible co-prime factors. Both CNA and GNA enjoy the merit of higher DoFs using fewer elements. However, some of the physical elements are closely located in CNA leading to severe mutual coupling. The GNA alleviates this problem at the expense of space. This is also the case with the multi-coset sparse array \cite{KJames2014} (Fig.~\ref{Other_L_FDA}f), which is constructed by a collection of interleaved sparse uniform subarrays such that the elements are laid out in a periodic nonuniform pattern over the aperture.  

Our classical co-prime L-shaped structure is attractive because it offers a trade-off between high DoFs and mild mutual coupling arising from smaller aperture (see Table~\ref{tbl:coupling}). It may be viewed as the most fundamental coarray from which many arrays in Fig.~\ref{Other_L_FDA} are obtained. The receive processing steps of these alternative arrays are obtained by replacing $\mathbf{A}_x$ ($\mathbf{A}_z$) in (\ref{eq:SM8}) ((\ref{eq:SM24})) with the manifold matrix of the corresponding sparse array along each axis. 
\begin{table}[t]
    \caption{Coupling leakage level of different L-shaped FDAs 
    }
    \label{tbl:coupling}       
    \centering
    \begin{threeparttable}
    \begin{tabular}{p{4.5cm}P{3.0cm}}
    \hline\noalign{\smallskip}
    Array  &  $\mathcal{L}_{\mathbf{H}}$\tnote{a}
    \\
    \noalign{\smallskip}
    \hline
    \noalign{\smallskip}
    L-shaped uniform FDA (U-Cube) & 0.76     \\
    L-shaped CADiS FDA & 0.4302     \\
    L-shaped  nested FDA  & 0.6303  \\
    L-shaped super-nested FDA & 0.5342      \\
    L-shaped CNA FDA & 0.5859      \\
    L-shaped GNA FDA & 0.5969         \\
    L-shaped multi-coset FDA   &   0.5985    \\
    L-shaped Co-prime FDA (C-Cube) & 0.5340 \\
    \noalign{\smallskip}\hline\noalign{\smallskip}
    \end{tabular}
    \begin{tablenotes}[para]
     \item[a] Computed for common physical aperture of $35d$ along each axis.
    \end{tablenotes} 
    \end{threeparttable}
    \end{table}

\subsection{Alternatives to co-prime FOs}
\label{subsec:alt_FO}
Co-prime FO is a classical ``spectrum saving'' co-FO sequence. In Table.~\ref{tbl:summary}, when the number of FOs is same across all L-shaped structures, the co-FO has larger cumulative transmit bandwidth than the linear FO. Denote their common frequency occupancy by $B_a$. From another perspective, considering the operating bandwidth of a radar is usually fixed in practice, assume that the cumulative transmit bandwidth of L-shaped array is the same and given by $B_c=2(L_f-1) \Delta f$. 
Define the spectrum occupancy rate as 
    $\eta=\frac{B_a}{B_c}$. 
Smaller values of $\eta$ imply that more spectral resources are free to be used for other purposes \cite{na2018tendsur}.

The literature suggests alternatives such as logarithmic FOs  \cite{khan2014frequency,YLiao2019}, time-dependent FOs (TDFOs) \cite{WKhan2014} and random FOs \cite{liu2016random}. In each case, the co-prime FO has lower frequency occupancy under the same cumulative transmit bandwidth. For example,
the frequency fed to the $m$-th element in logarithmic FO configuration is 
    $f_m= f_b+\Delta f_m$, 
where the FO of the $m$-th element with reference to the carrier frequency $f_b$ is
\begin{align}
    \Delta f_m= 
    \begin{cases}
\log(m+1) \Delta f,~& m\ge 0, \\
-\log(-m+1) \Delta f,~& m<0.
\end{cases}
\end{align}
It follows that the inter-element FO decreases with the increase in the absolute value of $m$. If $\Delta f_{m+1}-\Delta f_m \ge B$ holds for all $m$, then the spectrum occupancy rate of logarithmic FOs is the same as that of U-Cube, namely the number of FOs is $2(L_f-1)$ and the spectrum occupancy rate is 
\begin{align}          \label{eq:ROC_3}
    \eta_{\mathrm{logFOs}} = \frac{2(L_f-1) B}{2(L_f-1)\Delta f}= \frac{B}{\Delta f}.
\end{align}

In case of TDFOs, the transmit frequency of the $m$-th element is 
    $f_m=f_b+m \Delta f(t)$. 
Here, the transmit frequency is flexible but the FOs must be controlled accurately in real-time leading to operational difficulties. The number of FOs is at least $\frac{2(L_f-1)\Delta f}{\max_t(\Delta f(t))}$ with the spectrum occupancy rate 
\begin{align}
    \eta_{\mathrm{TDFO}} = \frac{2(L_f-1)\Delta f B}{\max_t(\Delta f(t))2(L_f-1)\Delta f}= \frac{B}{\max_t(\Delta f(t))}.
\end{align}

With random sparse FOs \cite{liu2016random}, the carrier frequency of the $m$-th element is 
    $f_m=f_b + \xi_m \Delta f$, 
where $\xi_m$ is randomly distributed over the interval $[-(L_f-1),L_f-1]$. While random FOs may utilize the same spectrum, its recovery procedure is more complicated.

While not reported in the existing literature, analogous to some of the coarrays mentioned in Section~\ref{subsec:alt_array}, concatenated nested FOs (CNFOs) and generalized nested FOs (GNFOs) are also possible to gain more DoFs and lower frequency occupancy for the same cumulative bandwidth. For CNFOs, the frequency transmitted from the $m$-th element is 
    $f_m=f_b+ \Delta f_m$, 
where
\begin{align}
        \Delta f_m= 
    \begin{cases}
[m+2N(1-N)] \Delta f, ~& -(4N-2) \leq m \leq -(3N-1),          \\
[mN+(N-1)^2] \Delta f                     ,~ & -(3N-2) \leq m \leq -N,   \\
m\Delta f,~& -(N-1)\leq m \leq N-1, \\
[mN-(N-1)^2]\Delta f,~& N\leq m \leq 3N-2,  \\
[m+2N(N-1)]\Delta f      ,~ & 3N-1 \leq m \leq 4N-2,
\end{cases}
\end{align}
and $N=\lfloor\frac{-1+\sqrt{1+2(1+L_f)}}{2} \rfloor$. So, the number of FOs is at least $8N-4$ resulting in the occupancy rate
\begin{align}            \label{eq:ROC_6}
    \eta_{\mathrm{CNFO}} = \frac{(8N-4)B}{2(L_f-1)\Delta f} =\frac{(4N-2)B}{(L_f-1)\Delta f}.
\end{align}
Similarly, for GNFOs, we have the $m$-th element transmit frequency 
    $f_m=f_b+ \Delta f_m$, 
where
\begin{align}
        \Delta f_m= 
    \begin{cases}
-[N\alpha -(m+N)\beta] \Delta f,~ & -(2N-1) \leq m \leq -N,     \\
m\alpha \Delta f, ~& -(N-1) \leq m \leq N-1,          \\
[N\alpha + (m-N)\beta] \Delta f                     ,~ & N \leq m \leq 2N-1.   \\
\end{cases}
\end{align}
Here, $\alpha$ and $\beta$ are co-prime integers and $N=\lfloor \frac{L_f+\beta -1}{\alpha+\beta} \rfloor$. Then, the number of FOs is $4N-2$ and the spectrum occupancy rate is 
\begin{align}                 \label{eq:ROC_8}
    \eta_{\mathrm{GNFO}} = \frac{(4N-2)B}{2(L_f-1)\Delta f}= \frac{(2N-1)B}{(L_f-1)\Delta f}.
\end{align}

Fig.~\ref{FO_comp} compares the occupancy rates of co-prime FOs, logarithmic FOs, CNFOs, and GNFOs. The co-prime FO has the lowest frequency occupancy than other FOs. This excludes the probabilistic random FOs and TDFOs, which also depend on other radar parameters.
\begin{figure}[t!]
\centerline{\includegraphics[width=0.8\columnwidth]{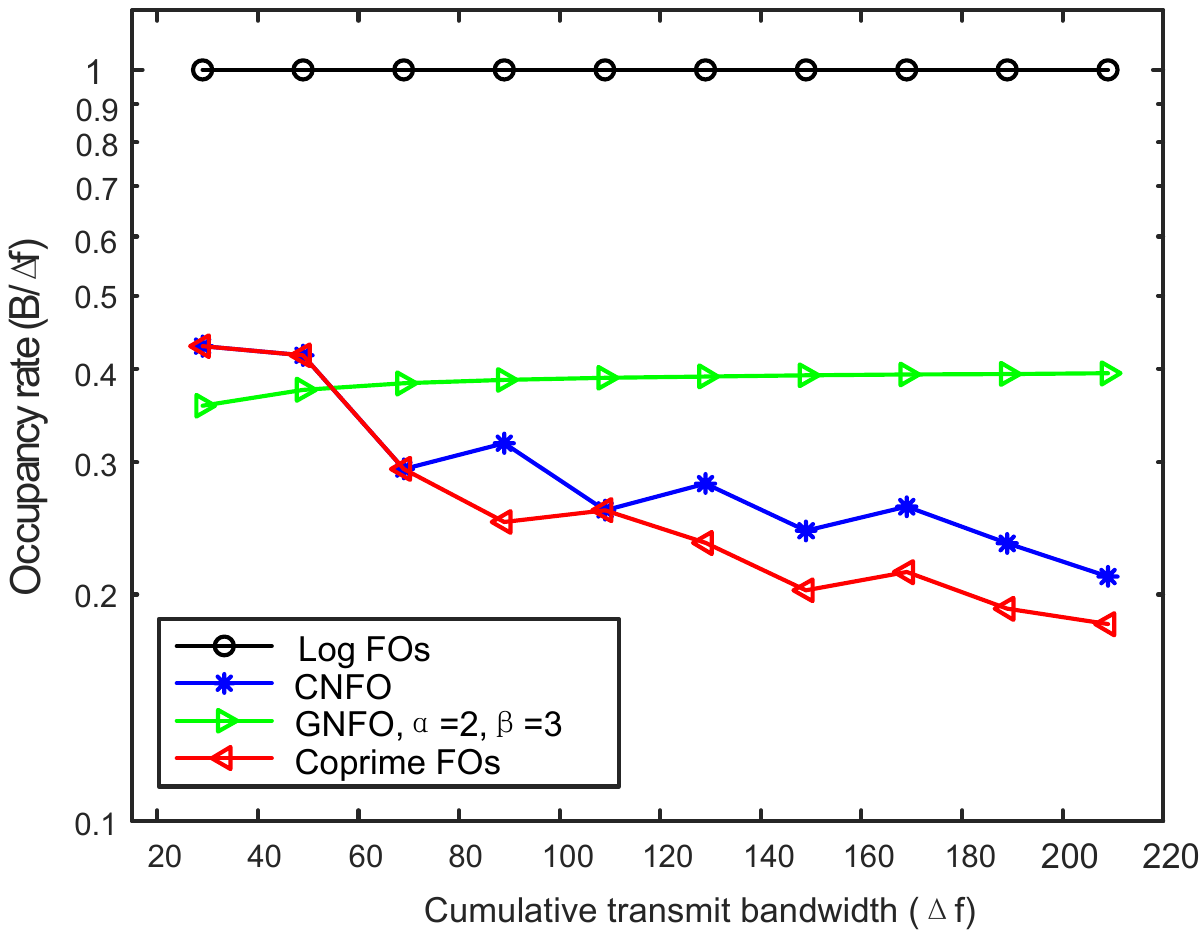}}
\caption{The occupancy rate of different FOs for the same cumulative transmit bandwidth in an L-shaped array. 
For each cumulative transmit bandwidth, the number of elements of L-shaped array is the same as the number of FOs computed using (\ref{eq:ROC_3}), (\ref{eq:ROC_6}) and (\ref{eq:ROC_8}).
}
\label{FO_comp}
\end{figure}

\subsection{Alternatives to co-prime pulsing}
Some other non-uniform PRF sequences have been developed in \cite{GQuan2016,xu2021difference,XWang2018}. The random sparse pulsing eliminates the Doppler ambiguity and enhances electronic counter-countermeasure (ECCM) capability. However, it leads to high sidelobes and restricted DoFs in the Doppler domain \cite{GQuan2016}. Co-pulsing mitigates false Doppler peaks and saves useful dwell time. The nested pulsing suggested in \cite{xu2021difference} may be extended to L-shaped arrays to obtain high Doppler resolution under the difference co-pulse concept. Here, a CPI comprises two sparse uniform pulse trains that have $N_1$ pulses with PRI $T$ and $N_2$ pulses with PRI $(N_1+1)T$, respectively. The CPI of L-shaped array remains the same, i.e., $T_c=(L_p -1)T$. Denote the dwell time by $T_a$ and define the dwell time occupancy rate as $\kappa= \frac{T_a}{T_c}$. From ECCM perspective, lesser the dwell time occupancy rate, better the pulsing scheme. Following a similar analysis as in the previous subsection for co-FOs, the co-pulsing achieves the same Doppler resolution but with lower dwell time than the uniform PRI. 

Note that co-prime pulsing is the most fundamental co-pulsing scheme and other sequences could be obtained from co-prime pulsing. 
For example, if the first and second uniform sparse pulse trains of co-prime pulsing have $N_1=N$ pulses with PRI $T_1=MT$ and $N_2=2M-1$ with PRI $T_2=NT$, respectively, then, \textit{ceteris paribus} nested pulsing is obtained by setting $T_1=T$. Nested pulsing yields super-nested sequence with a rearrangement of the positions of pulses as explained for spatial domain in \cite{CLiu2016}. The concatenated nested pulsing is derived from nested pulsing and its mirror image placed in succession. From co-prime pulsing, CADiS pulsing results when $L>0$. Finally, reducing the number of pulses in the second pulse train of co-prime pulsing yields CACIS pulsing.

\section{Numerical Experiments}          
\label{Section_simulation}
We validated our co-pulsing FDA model and methods through extensive numerical experiments. Unless otherwise noted, we set co-prime integers to $M_s=2$ and $N_s=3$ for co-prime arrays and co-prime FOs. Thus, the number of sensors along either $x$-axis or $z$-axis is $P_s= N_s+2M_s-1=6$. The total number of sensors in an L-shaped co-prime array is $2P_s-1=11$. For co-prime PRI, the co-prime numbers are set to $M_t=2$ and $N_t=3$. Hence, a total of $K=N_t+2M_t-1=6$ pulses are transmitted during the CPI with the fundamental PRI and pulse duration of $T=0.05$ ms and $T_p=0.5$ $\mu$s, respectively. Since the range periodicity is $R_u=\frac{c}{2\Delta f}$, the frequency increment $\Delta f$ needs to satisfy the boundary condition: $R_u \ge R_{\max}$, i.e $\Delta f \leq c/(2R_{\max})=20$ kHz \cite{XLi2018}. So we set the base carrier frequency $f_b=1$ GHz and 
$\Delta f=20$ kHz. Thus, the maximum unambiguous range becomes 7.5km and the maximum unambiguous velocity is $\nu_{\mathrm{max}}=\frac{\lambda_b}{2T}=\frac{c}{2f_bT}=3000$ m/s. The parameters of $Q$ far-field targets are assumed to be in the scope of $\theta_q \in (0,90^{\circ})$, $\varphi_q \in (-70^{\circ},70^{\circ})$, $r_q \in (100,5000)$ m and $\nu_q \in (10,400)$ m/s. The receive signal-to-noise ratio (SNR) was computed as 
\begin{align}
    \mathrm{SNR}=10\log_{10} \left( \frac{\Vert\mathbf{x}\Vert^2+\Vert\mathbf{z}\Vert^2}{2KP_s (2P_s-1) \sigma_n^2 } \right).
\end{align}
where $\sigma_n^2$ is the additive noise variance. Unless otherwise stated, the reflectivities $\sigma_s^2$ of targets are assumed to be equal. Throughout all experiments, we used our CCing algorithm for parameter recovery.

In the following, we present the target recovery during various experiments in the elevation-azimuth, elevation-range, elevation-Doppler, azimuth-range, azimuth-Doppler, and range-Doppler planes simultaneously. Here, a successful detection (blue cross) occurs when the estimated target is within one range cell, one azimuth bin and one Doppler bin of the ground truth (red circle); otherwise, the estimated target is labeled as a false alarm (circle with dark fill). Note that, for the purposes of clear illustration, the markers have been magnified; the exact location of the markers should be taken as their geometric centers. 

\noindent\textbf{DoFs enhancement}: 
We first examine the ability of our proposed C-Cube radar to enhance the DoFs in angle-range-Doppler domains by exploiting the difference coarray, frequency difference equivalence, and PRI difference equivalence. The proposed method is compared with the U-Cube. In this comparison, we assume that all the parameters of $Q$ targets are distinct, namely $\theta_q \neq \theta_p, \varphi_q \neq \varphi_p, r_q \neq r_q, \nu_q \neq \nu_p, 1\leq p \neq q \leq Q$. Without loss of generality, we set $M_s=M_t$ and $N_s=N_t$. From \textbf{C4}-\textbf{C7}, U-Cube and C-Cube can detect a maximum of $\mathcal{O}(\min\{P_s,K\})=6$ and $\mathcal{O}(\min\{2L_s,2L_t\})=14$ targets, respectively. When $Q=3$, Figs.~\ref{Detection_filled_3targets} and  \ref{Detection_coprime_3targets}, respectively, show that both C-Cube and U-Cube perfectly recover all target parameters. However, when the targets are nearly doubled with $Q=7$, U-Cube (Fig.~\ref{Detection_filled_7targets}) is unable to retrieve all the parameters while C-Cube (Fig.~\ref{Detection_coprime_7targets}) successfully estimates the entire set of target parameters, clearly demonstrating the superior performance of the latter FDA. Additionally, Fig. \ref{Detection_coprime_7targets_same} shows our proposed method also works when one or more target parameters are identical (as long as the conditions \textbf{C1}-\textbf{C5} of Theorem~\ref{Theo_rec} are satisfied). 
\begin{figure}[t!]
\centerline{\includegraphics[width=0.85\columnwidth]{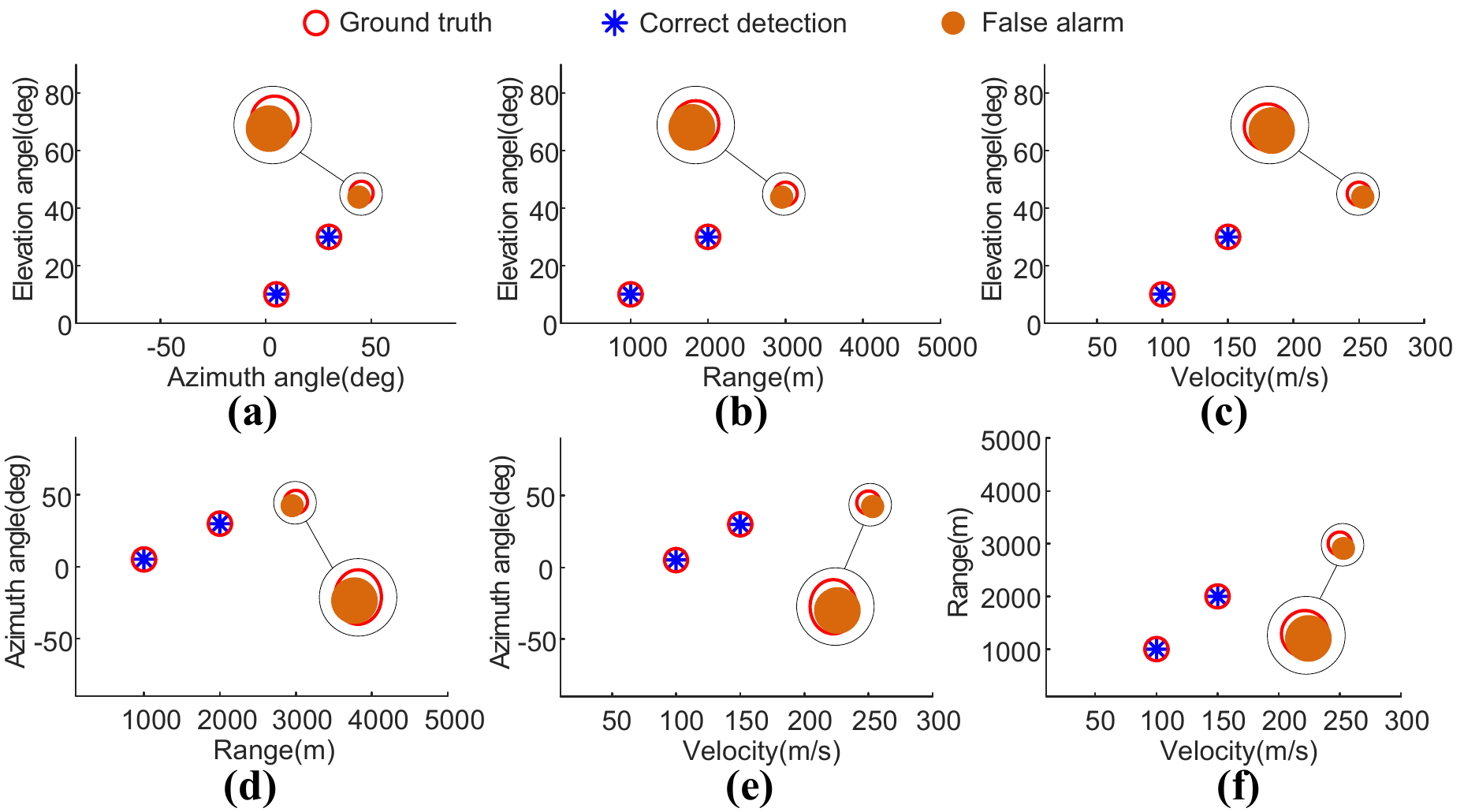}}
\caption{Target detection by U-Cube radar with $Q=3$ and SNR$=10$ dB in (a) elevation-azimuth, (b) elevation-range, (c) elevation-Doppler, (d) azimuth-range, (e) azimuth-Doppler, and (f) range-Doppler planes. The red circles (blue crosses) indicate ground truth (detected targets).}
\label{Detection_filled_3targets}
\end{figure}
\begin{figure}[t!]
\centerline{\includegraphics[width=0.85\columnwidth]{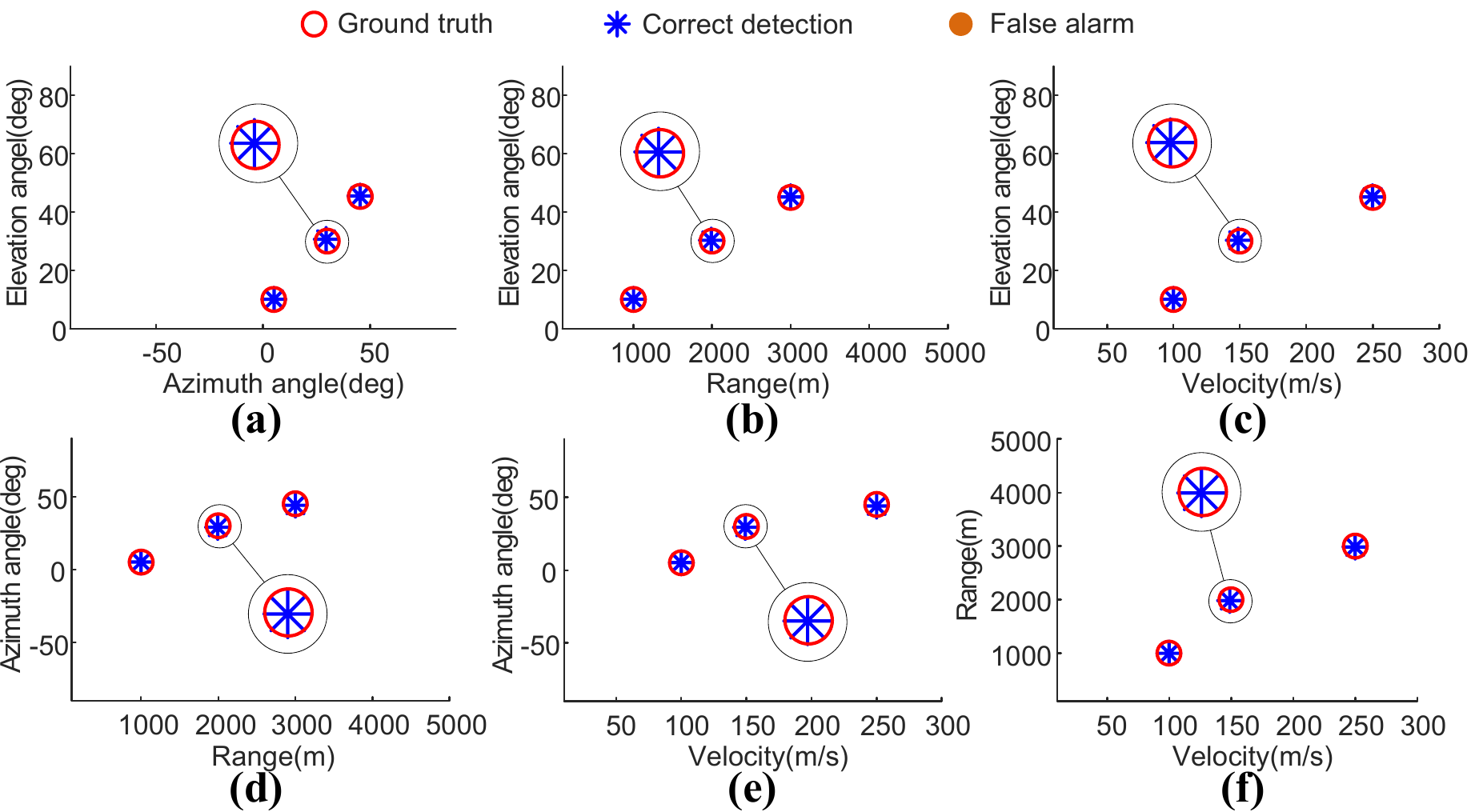}}
\caption{As in Fig.~\ref{Detection_filled_3targets} but for C-Cube radar.}
\label{Detection_coprime_3targets}
\end{figure}
\begin{figure}[t!]
\centerline{\includegraphics[width=0.85\columnwidth]{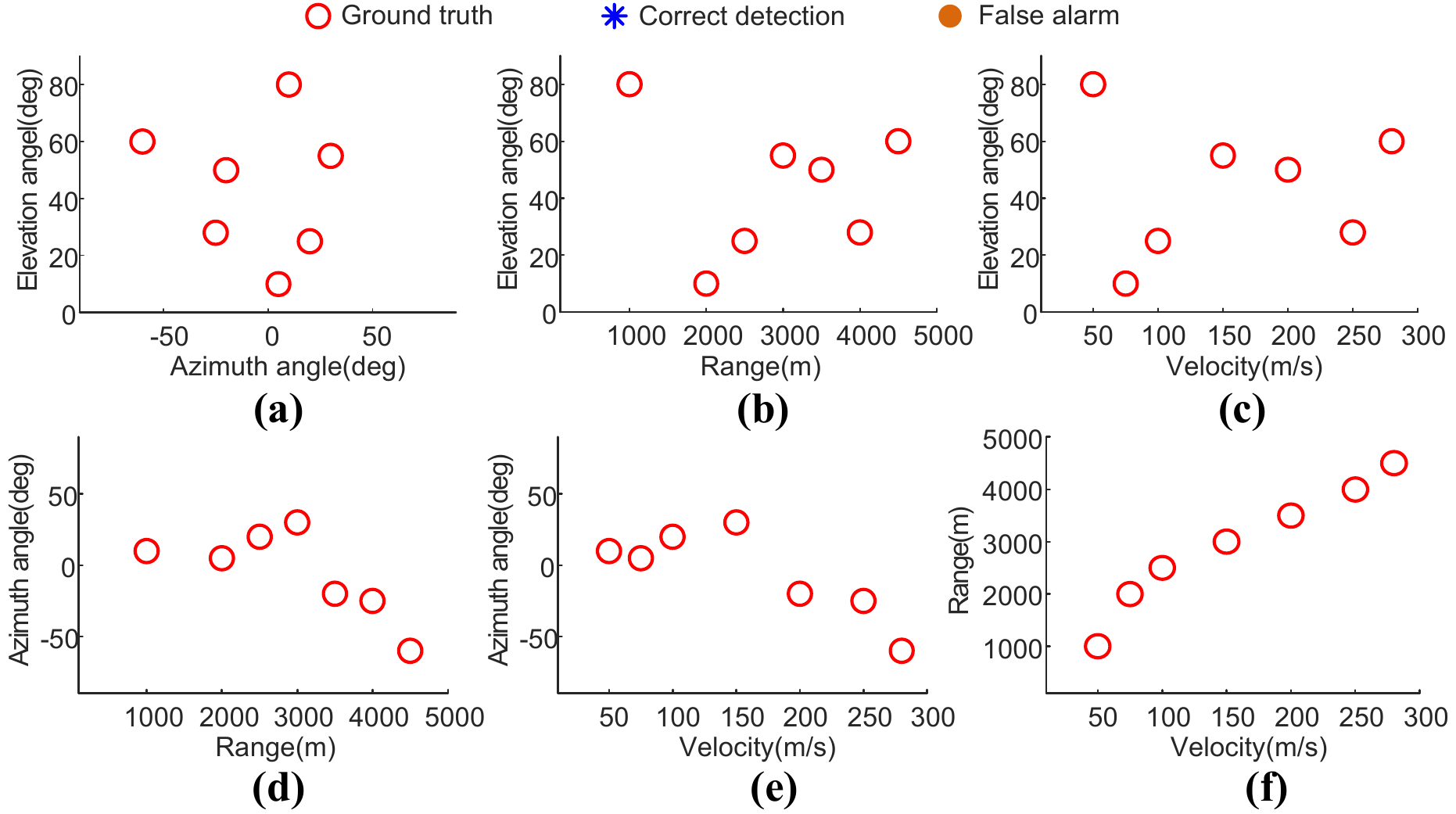}}
\caption{As in Fig.~\ref{Detection_filled_3targets} but for $Q=7$, all of which could not be detected.
\label{Detection_filled_7targets}}
\end{figure}

Define
$\textrm{Hit Rate}= \frac{1}{Q}\vert \{q \in [1,Q]: \vert \theta_l - \hat{\theta}_q \vert \leq \epsilon_{\theta}, \vert \varphi_l - \hat{\phi}_q \vert \leq \epsilon_{\varphi},  
     \vert r_l - \hat{r}_q \vert \leq \epsilon_{r} , \vert \nu_l - \hat{\nu}_q \vert \leq \epsilon_{\nu}, l\in [1,K]\}  \vert$,
where $\epsilon_{\theta},\epsilon_{\varphi},\epsilon_{r}$ and $\epsilon_{\nu}$ are single-bin tolerances in elevation, azimuth, range and velocity direction, respectively. We averaged hit rates over all Monte Carlo realizations at SNR $=10$ dB;  Fig.~\ref{Detection_filled_3targets} and Fig.~\ref{Detection_coprime_3targets} show a single such realization for U-Cube and C-Cube. The hit rate for C-Cube was $15\%$ higher than U-Cube. 

\noindent\textbf{Statistical performance}: 
We analyse the statistical performance of our proposed C-Cube FDA with UUU, UUC, UCU, UCC, CUU, CUC, and CCU. Since the number of uncorrelated targets that L-shaped ULA-based FDA can detect is restricted by the number of physical sensors, for a fair comparison we set $Q=2$ with $\gamma_1=\{10^{\circ}, 5^{\circ}, 1000\textrm{ m}, 100\textrm{ m/s}\}$ and $\gamma_2=\{45^{\circ}, 45^{\circ}, 3000\textrm{ m}, 250\textrm{ m/s}\}$, respectively. The number of samples in each PRI were $L_r=\lfloor T/T_p \rfloor=100$. We benchmark the performance of various arrays using the root mean square error (RMSE) 
$=\frac{1}{J}\sum_{j=1}^J \sqrt{\frac{1}{Q}\sum_{q=1}^Q(\eta'_{q,j}-\eta_q)^2}$, 
where $J$ (set to $200$) is the number of Monte Carlo simulations, $\eta'_{q,j}$ is the estimated parameter (elevation angle, azimuth angle, range or Doppler velocity) of the $q$-th target in the $j$-th Monte Carlo simulation, $\eta_q$ is the true value of the same parameter. 
 
Figs.~\ref{RMSEvsSNR_AllParams}a and b show the RMSE of, respectively, elevation and azimuth angles versus the received SNR that was varied from $-15$ to $15$ dB in the steps of $3$ dB. For comparison, we also plot the corresponding root of CRB (RCRB) values. The DoA estimation performance of all methods improves with SNR. The C-Cube structure benefits from the larger aperture in the coarray domain than, say, U-Cube. Similarly, other structures such as UUC, UCU, UCC, CUU, 
and CCU achieve improved estimation of either DoA (Figs.~\ref{RMSEvsSNR_AllParams}a-b), range (Fig.~\ref{RMSEvsSNR_AllParams}c) or Doppler (Fig.~\ref{RMSEvsSNR_AllParams}d) over U-Cube because of the co-prime structure in their sensor positions, FOs, and/or PRIs. Clearly, the C-Cube with a simultaneous co-prime structure in its sensor positions, FOs and PRIs outperforms all other arrays. 
While both of our proposed C-Cube and CUC methods are more robust to noise than other configurations, the C-Cube also offers savings in the spectrum usage when compared with CUC. The gap between the RCRB and RMSEs in Fig.~\ref{RMSEvsSNR_AllParams} arises from the fact that the RCRB is based on the whole space-time-frequency coarray of Fig.~\ref{Space_time_frequency_coarray}. But the CCing algorithm adopts a more practical approach of the contiguous space-time-frequency coarray. In order to reduce the gap, approaches similar to \cite{ZMao2022} may be employed to fill the holes in the coarray of Fig.~\ref{Space_time_frequency_coarray}.

\begin{figure}[t!]
\centerline{\includegraphics[width=0.85\columnwidth]{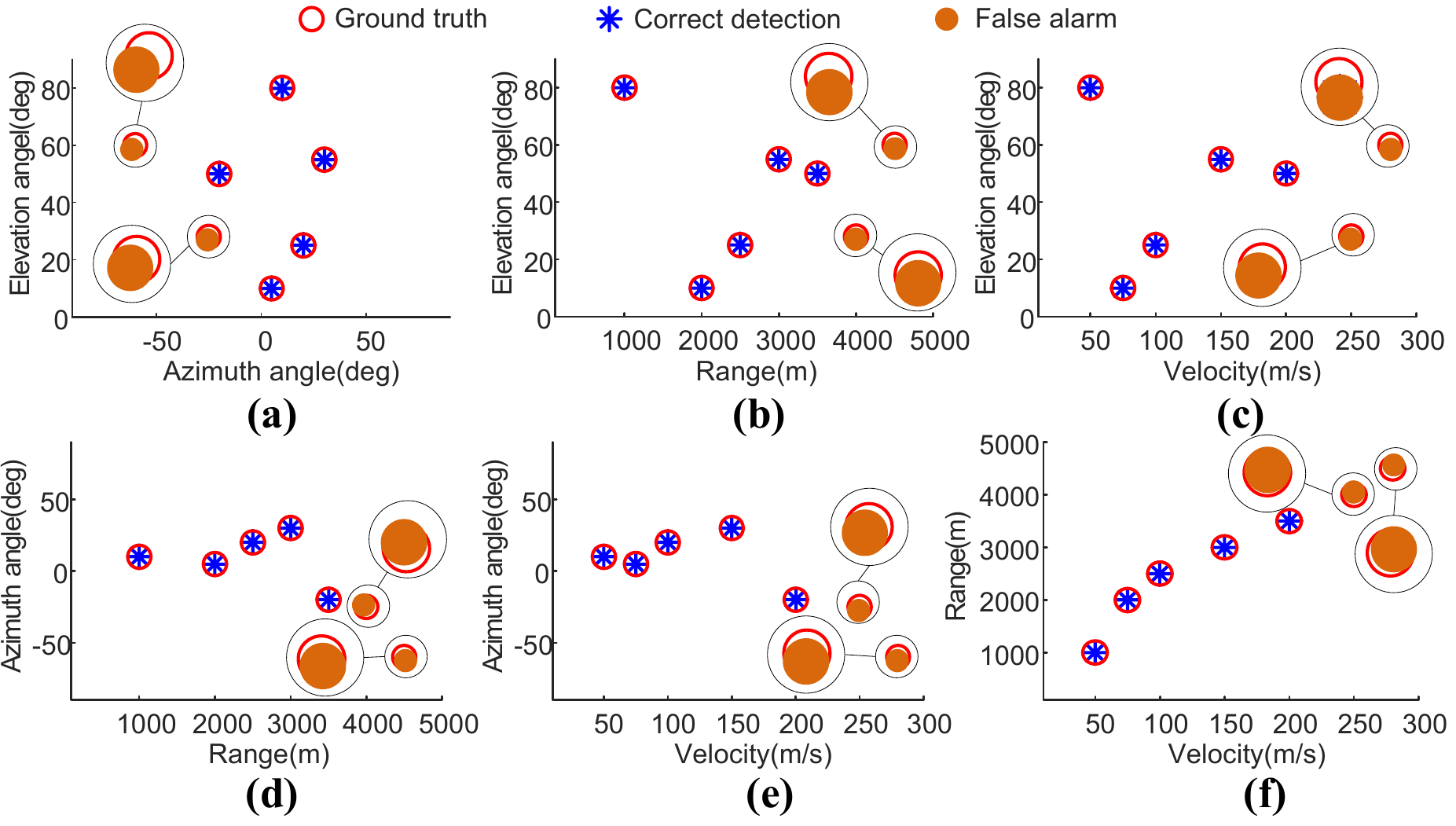}}
\caption{As in Fig.~\ref{Detection_filled_7targets} but for C-Cube radar.
\label{Detection_coprime_7targets}}
\end{figure}
\begin{figure}[t!]
\centerline{\includegraphics[width=0.85\columnwidth]{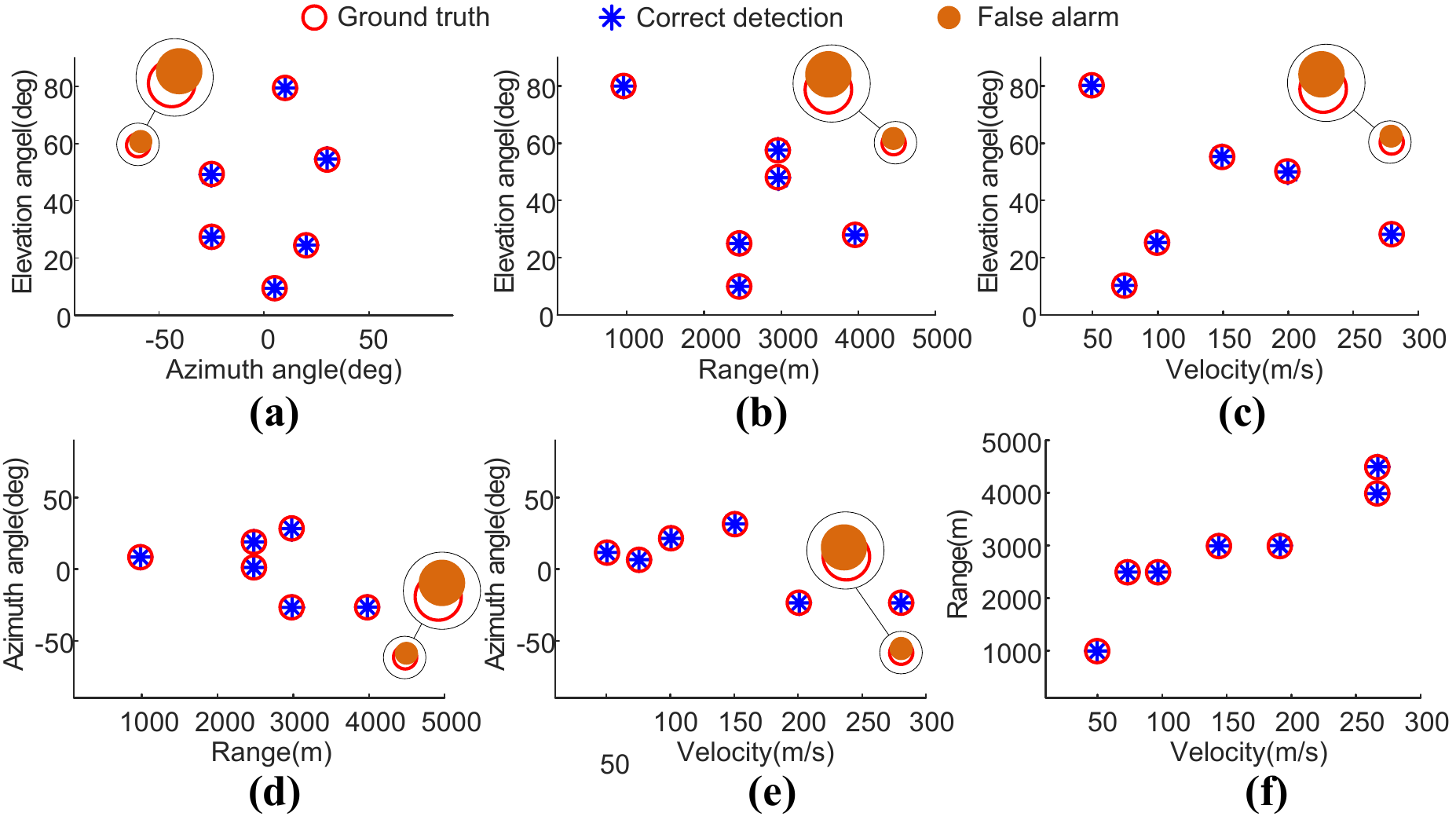}}
\caption{As in Fig.~\ref{Detection_coprime_7targets} but two targets have identical ranges and Doppler velocities.
\label{Detection_coprime_7targets_same}}
\end{figure}

\noindent\textbf{Higher co-prime integers and unequal reflectivities}: 
Keeping other parameters the same as in Fig.~\ref{RMSEvsSNR_AllParams}, assume that there are $Q=5$ targets with $\gamma_1=\{10^{\circ}, 5^{\circ}, 2000\textrm{ m}, 75\textrm{ m/s}\}$, $\gamma_2=\{25^{\circ}, 20^{\circ}, 2500\textrm{ m}, 100\textrm{ m/s}\}$, $\gamma_3=\{55^{\circ}, 30^{\circ}, 3000\textrm{ m}, 150\textrm{ m/s}\}$, $\gamma_4=\{50^{\circ}, -20^{\circ}, 3500\textrm{ m}, 200\textrm{ m/s}\}$ and $\gamma_5=\{60^{\circ}, -60^{\circ}, 4500\textrm{ m}, 280\textrm{ m/s}\}$, respectively. We choose the co-prime integers as $M_s=3$ and $N_s=5$. We analysed the statistical performance of our proposed C-Cube FDA for unequal target reflectivities $\{\sigma_q^2 \}_{q=1}^{Q}$. The mean value of target power sequence was fixed to $\sigma_s^2$, namely $\frac{1}{Q}\sum_{q=1}^{Q}\sigma_q^2=\sigma_s^2$. The difference among target powers is reflected by the standard deviation (s.d.) of target power sequence, i.e.,
\begin{align}
\textrm{s.d.} = \sqrt{\sum_{q=1}^Q (\sigma_q^2-\sigma_s^2)^2/Q}.
\end{align}
It follows from Fig. \ref{RMSE_Power_Differ} that the parameter estimation performance of different s.d. values improves with increase in SNR. In addition, a larger s.d. results in performance degradation, especially in low SNR regime because some weak targets get buried in noise or occluded by strong targets.

\noindent\textbf{Performance/complexity comparison}: 
For U-Cube, we compared the performance  of our proposed CCing algorithm with the joint ESPRIT \cite{SSIoushua2017} and sparse-reconstruction-based PARAFAC (SR-PARAFAC) tensor decomposition 
\cite{LXu2018} algorithms. 
The target parameters and FDA radar settings were as in the first experiment 
but with $Q=3$ and fixed SNR of $20$ dB. The number of sensors along each axis is the same as the number of pulses; we varied them simultaneously 
in increments of $2$. The CCing had superior parameter estimation than the other two algorithms (Fig.~\ref{RMSEvsSensorNum_AllParams}). 
 Further, the CCing algorithm is an order faster than joint ESPRIT and SR-PARAFAC algorithms (Fig.~\ref{runtime}); note that, in addition, the latter are not applicable to C-Cube. The joint ESPRIT and SR-PARAFAC estimate all parameters simultaneously and are, therefore, computationally more complex. The CCing algorithm reduces the run-time by splitting the process into two separate 2-D DoA and range-Doppler estimation steps followed by auto-pairing. 

\section{Summary}           
\label{Section_conclusion}
We investigated several new L-shaped co-pulsing FDA radar systems with the goal of reducing resources such as sensors, spectrum, and dwell time. Drawing on the advances in coarrays, our proposed C-Cube configuration adopts co-prime structure across all three design parameters -- array geometry, FOs, and PRIs -- and hence, achieves a high number of DoFs compared to L-shaped ULA, L-shaped co-prime array, and L-shaped FDAs. These complex systems invariably present challenges in the joint estimation of target parameters. To this end, our CCing retrieval algorithm allows for automatic pairing of 2D-DoA with range and Doppler velocity while also exhibiting faster execution times. Both analytical and numerical results validate the effectiveness and superiority of our C-CUBE system over various other L-shaped co-pulsing FDAs.
\begin{figure*}
\centerline{\includegraphics[width=0.95\textwidth]{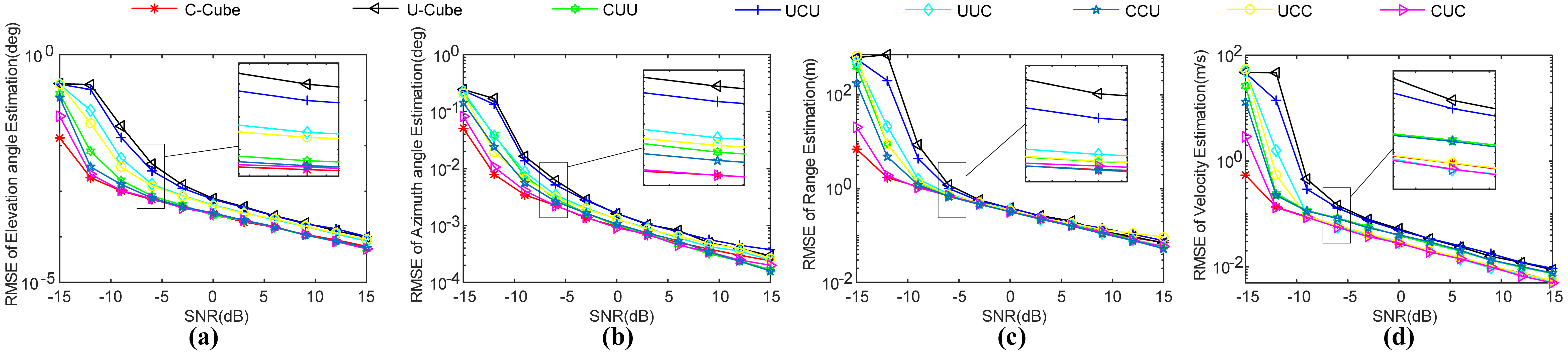}}
\caption{RMSE of (a) elevation, (b) azimuth, (c) range, and (d) Doppler velocity versus SNR for various L-shaped FDAs with $M_s=2$, $N_s=3$, and $Q=2$.
}
\label{RMSEvsSNR_AllParams}
\end{figure*}
\begin{figure*}
\centerline{\includegraphics[width=0.95\textwidth]{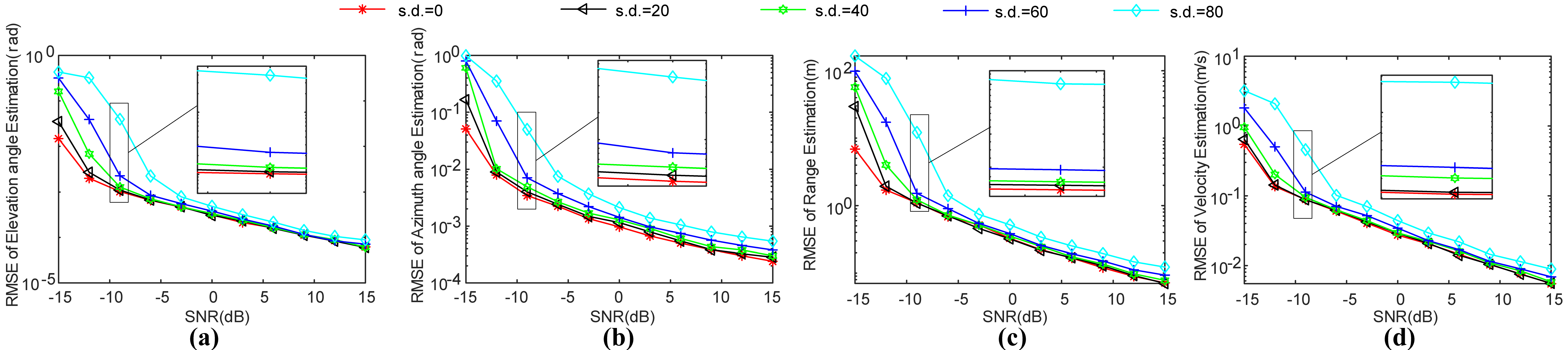}}
\caption{RMSE of (a) elevation, (b) azimuth, (c) range, and (d) Doppler velocity versus SNR for various s.d. values in case of C-Cube configuration.
}
\label{RMSE_Power_Differ}
\end{figure*}

\begin{figure*}
\centerline{\includegraphics[width=0.95\textwidth]{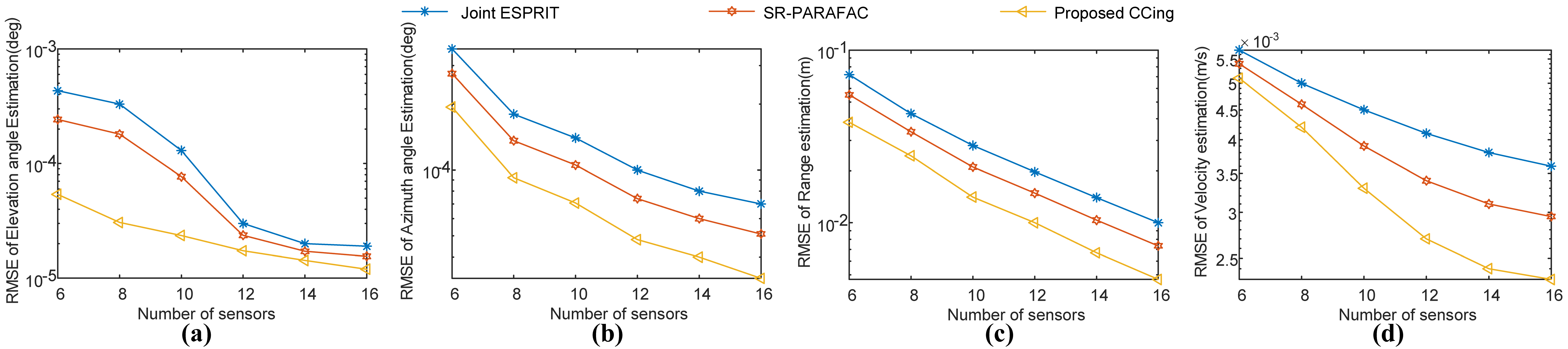}}
\caption{RMSE of (a) elevation, (b) azimuth, (c) range, and (d) Doppler velocity for various retrieval algorithms at SNR $=20$ dB in U-Cube radar.
}
\label{RMSEvsSensorNum_AllParams}
\end{figure*}
\begin{figure}[t]
\centerline{\includegraphics[width=0.8\columnwidth]{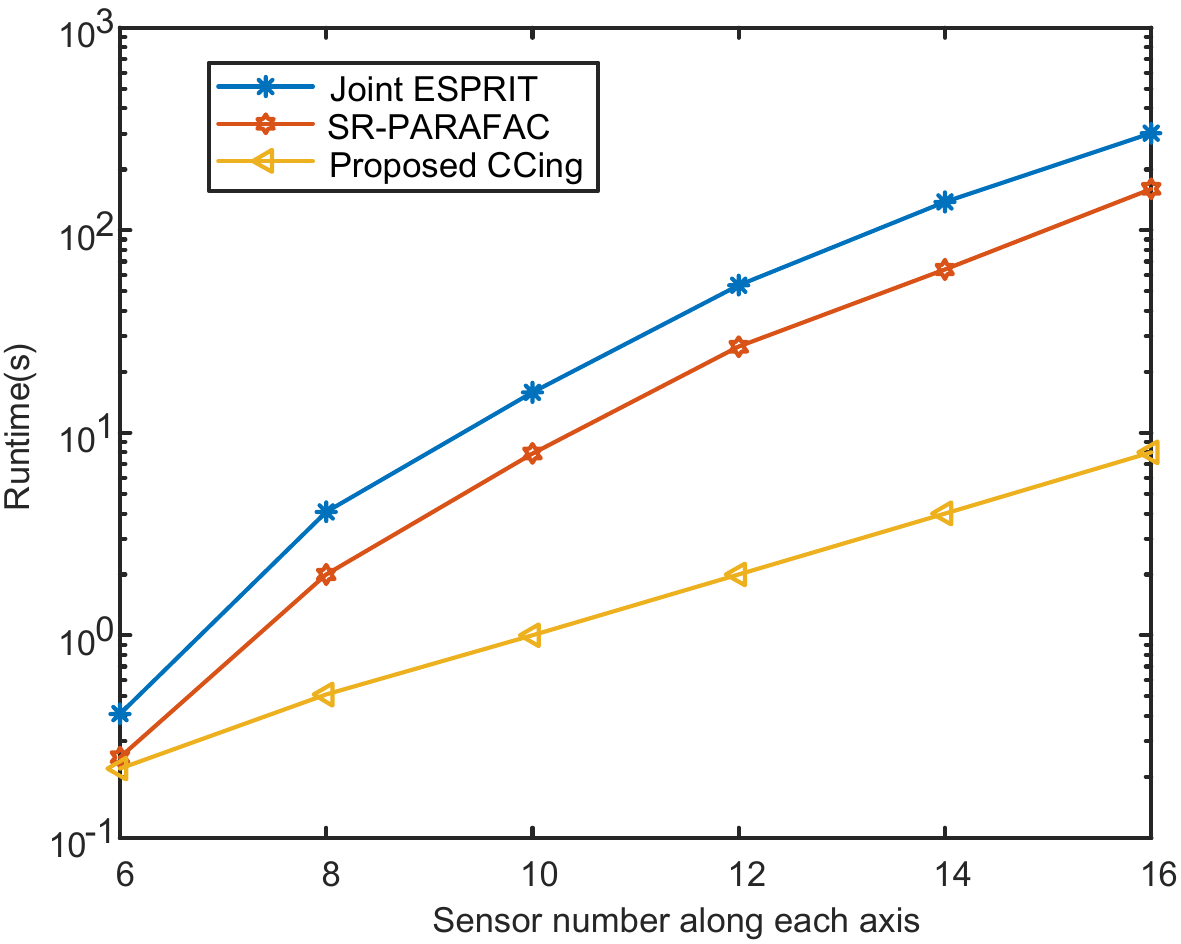}}
\caption{The run-time of our proposed CCing compared with the joint ESPRIT and SR-PARAFAC algorithms with respect to the number of physical array elements for U-Cube radar at fixed SNR of $20$ dB. 
}
\label{runtime}
\end{figure}

While more complex 2-D coarrays exist as in circular and elliptical geometries \cite{kozick1992coarray}, the L-shaped arrays are simpler to implement. There is a rich heritage of sparse arrays where parameter recovery algorithms does not require sparse reconstruction, such as zero redundancy arrays (ZRAs) \cite{arsac1955nouveau}, minimum redundancy arrays (MRAs), and low redundancy arrays (LRAs) \cite{camps2001synthesis}, including array patterns that approach Leech’s bounds \cite{leech1956representation}. In particular, the MRA \cite{moffet1968minimum} removes array elements from ULA such that the resulting sparse array has all possible inter-element spacings of the full array. The ZRA, MRA, and LRA minimize aliasing from the grating lobes while maintaining a reasonably constant integrated sidelobe level. However, they are not feasible for sparsifying large arrays. In this context, our approach may be viewed as obtaining low complexity for arbitrarily large sparse arrays. As a future research direction, 
this work may be extended to develop L-shaped processing for higher DoF structures such as mirrored arrays \cite{chen2012one},  
which comprise an ordinary linear array and a reflector.

\appendices
\section{Proof of Proposition~\ref{Pro_TL}}
\label{app:proof_prop2}
\subsection{Preliminaries to the Proof}
\label{app:defn}
In order to prove the Proposition, recall the definition of Kruskal rank.
\begin{definition}\cite{JBKruskal1977}
The Kruskal rank $\mathrm{Krank}(\mathbf{A} )$ of the matrix $\mathbf{A}$ is defined as the largest integer $\kappa$ such that every $\kappa$ columns of $\mathbf{A}$ are linearly independent.
\end{definition}
\subsection{Proof of the Proposition}
\label{app:proof}
We transform the L-shaped co-prime array with $P_s=N_s+2M_s-1$ physical sensors along each axis into a virtual filled L-shaped array in the difference coarray domain using (\ref{eq:SM20}), (\ref{eq:SM21}), and (\ref{eq:SM22}), where each axis has $2M_sN_s+2M_s-1$ consecutive lags with the virtual sensor positions from $-(M_sN_s+M_s-1)d$ to $(M_sN_s+M_s-1)d$. Both $\widetilde{\mathbf{A}}_x$ and $\widetilde{\mathbf{A}}_z$ are Vandermonde matrices with the size $(2L_s+1)\times Q$. Using (\ref{eq:SM3}) and \textbf{C1}, these matrices have distinct columns. Hence, if \textbf{C4} is satisfied, both $\widetilde{\mathbf{A}}_x$ and $\widetilde{\mathbf{A}}_z$ have full column rank equal to $Q$. Ergo, the stacked version $[\widetilde{\mathbf{A}}_x^T ~ \widetilde{\mathbf{A}}_z^T]^T$ also has the full column rank $Q$, namely
\begin{align}            \label{eq:Pro1}
    \mathrm{rank}\left( \left[\begin{matrix}
 \widetilde{\mathbf{A}}_x  \\
 \widetilde{\mathbf{A}}_z
\end{matrix} \right] \right) = Q.
\end{align}

Similarly, if \textbf{C2} and \textbf{C3} hold, $\widetilde{\mathbf{C}}$ and $\widetilde{\mathbf{B}}$ are Vandermonde matrices with the size $(2L_s+1)\times Q$ and $(2L_t+1)\times Q$, respectively. Then, when \textbf{C4} and \textbf{C5} are satisfied, both $\widetilde{\mathbf{C}}$ and $\widetilde{\mathbf{B}}$ have full column rank, namely 
    $\mathrm{rank}(\widetilde{\mathbf{C}}) = \mathrm{rank}(\widetilde{\mathbf{B}})=Q$. 
Using this result and following the definition of Kruskal, we have
\begin{align}            \label{eq:Pro4}
\mathrm{Krank}(\widetilde{\mathbf{C}}) = \mathrm{Krank}(\widetilde{\mathbf{B}})=Q.
\end{align}
From \cite{NDSidiropoulos2000}, it follows that
\begin{align}            \label{eq:Pro5}
\mathrm{Krank}(\widetilde{\mathbf{C}} \odot \widetilde{\mathbf{B}}) \geq \min (Q,\mathrm{Krank}(\widetilde{\mathbf{C}})+\mathrm{Krank}(\widetilde{\mathbf{B}})-1 ).
\end{align}
Substituting (\ref{eq:Pro4}) into (\ref{eq:Pro5}) yields $\mathrm{Krank}(\widetilde{\mathbf{C}} \odot \widetilde{\mathbf{B}}) \geq Q$. The number of columns of $\widetilde{\mathbf{C}} \odot \widetilde{\mathbf{B}}$ is $Q$. So
\begin{align}             \label{eq:Pro7}
\mathrm{Krank}(\widetilde{\mathbf{C}} \odot \widetilde{\mathbf{B}})=\mathrm{rank}(\widetilde{\mathbf{C}} \odot \widetilde{\mathbf{B}})=Q.
\end{align}
In addition, $\mathbf{R}_{\rho}$ in (\ref{eq:RA2}) is an $Q\times Q$ diagonal matrix which is invertible, namely
\begin{align}              \label{eq:Pro8}
\mathrm{rank}(\mathbf{R}_{\rho}) =Q.
\end{align}
Combining (\ref{eq:Pro1}), (\ref{eq:Pro7}) and (\ref{eq:Pro8}), it follows that
\begin{align}             \label{eq:Pro9}
    \mathrm{rank}((\widetilde{\mathbf{C}}\odot \widetilde{\mathbf{B}})^*\mathbf{R}_{\bm{\rho}}) =Q,~~ \mathrm{rank}\left( \left[\begin{matrix}
 \widetilde{\mathbf{A}}_x  \\
 \widetilde{\mathbf{A}}_z
\end{matrix} \right]\mathbf{R}_{\bm{\rho}}  \right) =Q.
\end{align}

After SVD decomposition in (\ref{eq:RA3}), we obtain
\begin{align}
\mathbf{R}_{XZ}^H \mathbf{U}_2 = (\widetilde{\mathbf{C}}\odot \widetilde{\mathbf{B}})^*\mathbf{R}_{\bm{\rho}} \left[\begin{matrix}
 \widetilde{\mathbf{A}}_x  \\
 \widetilde{\mathbf{A}}_z
\end{matrix} \right]^H \mathbf{U}_2 =0.
\end{align}
Since $(\widetilde{\mathbf{C}}\odot \widetilde{\mathbf{B}})^*\mathbf{R}_{\bm{\rho}}$ is full column rank, it follows that
\begin{align}
\left[\begin{matrix}
 \widetilde{\mathbf{A}}_x  \\
 \widetilde{\mathbf{A}}_z
\end{matrix} \right]^H \mathbf{U}_2 =0.
\end{align}
Because $[\widetilde{\mathbf{A}}_x^H~\widetilde{\mathbf{A}}_z^H]$ has the same size with $\mathbf{U}_1^H$, $\mathbf{U}_2$ is the null space of $[\widetilde{\mathbf{A}}_x^H~\widetilde{\mathbf{A}}_z^H]$ and $\mathbf{U}_1^H$. This implies
that
\begin{align}
\mathrm{range}\left(\begin{matrix}
 \widetilde{\mathbf{A}}_x  \\
 \widetilde{\mathbf{A}}_z
\end{matrix}\right)=\mathrm{range}(\mathbf{U}_1).
\end{align}
Ergo, there exists an invertible $Q\times Q$ matrix $\mathbf{T}_L$ such that (\ref{eq:RA5}) holds.

Similarly, following the conjugate transpose version of (\ref{eq:RA3})
\begin{eqnarray}
\mathbf{R}_{XZ}^H &=& [\mathbf{V}_1~  \mathbf{V}_2]
\left[\begin{matrix}
 \mathbf{\Lambda} & \mathbf{0}  \\
 \mathbf{0}  &  \mathbf{0}
\end{matrix} \right]^H  [\mathbf{U}_1~ \mathbf{U}_2]^H   = \mathbf{V}_1 \mathbf{\Lambda} \mathbf{U}_1^H     \notag \\
&=& (\widetilde{\mathbf{C}}\odot \widetilde{\mathbf{B}})^*
 \mathbf{R}_{\bm{\rho}}[\widetilde{\mathbf{A}}_x^H ~ \widetilde{\mathbf{A}}_z^H]^H,
\end{eqnarray}
we have
\begin{align}
    \mathbf{R}_{XZ} \mathbf{V}_2=\left[\begin{matrix}
 \widetilde{\mathbf{A}}_x  \\
 \widetilde{\mathbf{A}}_z
\end{matrix} \right] \mathbf{R}_{\bm{\rho}} (\widetilde{\mathbf{C}}\odot \widetilde{\mathbf{B}})^T \mathbf{V}_2=0.
\end{align}
Invoking (\ref{eq:Pro9}) gives 
    $(\widetilde{\mathbf{C}}\odot \widetilde{\mathbf{B}})^T \mathbf{V}_2=0$. 
This implies that $\mathbf{V}_2$ is the null space of $(\widetilde{\mathbf{C}}\odot \widetilde{\mathbf{B}})^T$ and $\mathbf{V}_1^H$ so that
\begin{align}
    \mathrm{range}((\widetilde{\mathbf{C}}\odot \widetilde{\mathbf{B}})^*)= \mathrm{range}(\mathbf{V}_1).
\end{align}
Thus, there exists an invertible $Q\times Q$ matrix $\mathbf{T}_R$ such that (\ref{eq:RA5_1}) holds.

\section{Proof of Theorem~\ref{Theo_CRB}}
\label{app:proof_theo_CRB}
From (\ref{eq:PA_CRB4}), we have
\begin{align}
\frac{\mathbf{J}(\bm{\gamma})}{L_r} &= [\mathbf{V}_{\bm{\theta}}~ \mathbf{V}_{\bm{\varphi}}~ \mathbf{V}_{\bm{r}} ~ \mathbf{V}_{\bm{\nu}} ]^H \mathbf{W} [\mathbf{V}_{\bm{\theta}}~ \mathbf{V}_{\bm{\varphi}}~ \mathbf{V}_{\bm{r}} ~ \mathbf{V}_{\bm{\nu}} ]        \notag \\
&= [\mathbf{D}_L ~ \mathbf{D}_R]^H \mathbf{W} [\mathbf{D}_L ~ \mathbf{D}_R]                   \notag \\
&= \left[  \begin{matrix}
            \mathbf{D}_L^H \mathbf{W} \mathbf{D}_L       &    \mathbf{D}_L^H \mathbf{W} \mathbf{D}_R   \\
            \mathbf{D}_R^H \mathbf{W} \mathbf{D}_L       &    \mathbf{D}_R^H \mathbf{W} \mathbf{D}_R
             \end{matrix} 
    \right],
\end{align}
Using the matrix inversion lemma \cite{HornRA1985}, we have
\begin{align}
\mathrm{CRB}(\bm{\theta},\bm{\varphi}) &= \frac{1}{L_r} [\mathbf{D}_L^H \mathbf{W} \mathbf{D}_L                  \notag \\
&~~ -\mathbf{D}_L^H \mathbf{W} \mathbf{D}_R (\mathbf{D}_R^H \mathbf{W} \mathbf{D}_R)^{-1} \mathbf{D}_R^H \mathbf{W} \mathbf{D}_L]   \notag \\
&= \frac{1}{L_r}\{[\mathbf{V}_{\bm{\theta}}~ \mathbf{V}_{\bm{\varphi}}]^H\mathbf{W}^{1/2} \mathbf{\Pi}^{\bot}_{\mathbf{W}^{1/2}\mathbf{D}_R} \mathbf{W}^{1/2} [\mathbf{V}_{\bm{\theta}}~ \mathbf{V}_{\bm{\varphi}}]\}^{-1}               \notag \\
&= \frac{1}{L_r}  \left[   
 \begin{matrix}
            \mathbf{M}_{11}       &    \mathbf{M}_{12}   \\
            \mathbf{M}_{21}      &    \mathbf{M}_{22}
             \end{matrix} 
\right]^{-1},
\end{align}
where 
\begin{subequations}
\begin{align}
\mathbf{M}_{11} &= \mathbf{V}_{\bm{\theta}}^H \mathbf{W}^{1/2} \mathbf{\Pi}^{\bot}_{\mathbf{W}^{1/2}\mathbf{D}_R} \mathbf{W}^{1/2} \mathbf{V}_{\bm{\theta}},  \\
\mathbf{M}_{12} &= \mathbf{V}_{\bm{\theta}}^H \mathbf{W}^{1/2} \mathbf{\Pi}^{\bot}_{\mathbf{W}^{1/2}\mathbf{D}_R} \mathbf{W}^{1/2} \mathbf{V}_{\bm{\varphi}},  \\
\mathbf{M}_{21} &= \mathbf{V}_{\bm{\varphi}}^H \mathbf{W}^{1/2} \mathbf{\Pi}^{\bot}_{\mathbf{W}^{1/2}\mathbf{D}_R} \mathbf{W}^{1/2} \mathbf{V}_{\bm{\theta}},   
\end{align}
and
\begin{align}
\mathbf{M}_{22} &= \mathbf{V}_{\bm{\varphi}}^H \mathbf{W}^{1/2} \mathbf{\Pi}^{\bot}_{\mathbf{W}^{1/2}\mathbf{D}_R} \mathbf{W}^{1/2} \mathbf{V}_{\bm{\varphi}}.
\end{align}
\end{subequations}
The CRBs of $\bm{\theta}$ and $\bm{\varphi}$ are, respectively,
\begin{align}
\mathrm{CRB}(\bm{\theta}) &= \frac{1}{L_r} (\mathbf{M}_{11}-\mathbf{M}_{12}\mathbf{M}_{22}^{-1}\mathbf{M}_{21})^{-1}                 \notag \\
&=\frac{1}{L_r} \left\{ \mathbf{V}_{\bm{\theta}}^H \mathbf{W}^{1/2} \mathbf{\Pi}_{\mathbf{W}^{1/2}\mathbf{D}_R}^{\bot}  \left(\mathbf{\Pi}_{\mathbf{\Pi}_{\mathbf{W}^{1/2}\mathbf{D}_R}^{\bot}  \mathbf{W}^{1/2}\mathbf{V}_{\bm{\varphi}}}^{\bot}\right) \right. \notag  \\
&~~ \times \left. \mathbf{\Pi}_{\mathbf{W}^{1/2}\mathbf{D}_R}^{\bot}  \mathbf{W}^{1/2}\mathbf{V}_{\bm{\theta}} \right\}^{-1},
\end{align}
and
\begin{align}
\mathrm{CRB}(\bm{\varphi}) &=   \frac{1}{L_r} (\mathbf{M}_{22}-\mathbf{M}_{21}\mathbf{M}_{11}^{-1}\mathbf{M}_{12})^{-1}              \notag \\
&= \frac{1}{L_r} \left\{ \mathbf{V}_{\bm{\varphi}}^H \mathbf{W}^{1/2} \mathbf{\Pi}_{\mathbf{W}^{1/2}\mathbf{D}_R}^{\bot}  \left(\mathbf{\Pi}_{\mathbf{\Pi}_{\mathbf{W}^{1/2}\mathbf{D}_R}^{\bot}  \mathbf{W}^{1/2}\mathbf{V}_{\bm{\theta}}}^{\bot}\right) \right. \notag  \\
&~~ \times \left. \mathbf{\Pi}_{\mathbf{W}^{1/2}\mathbf{D}_R}^{\bot}  \mathbf{W}^{1/2}\mathbf{V}_{\bm{\varphi}} \right\}^{-1}.
\end{align}

Similarly,
\begin{align}
\mathrm{CRB}(\bm{r},\bm{\nu}) &= \frac{1}{L_r} [\mathbf{D}_R^H \mathbf{W} \mathbf{D}_R                  \notag \\
&~~ -\mathbf{D}_R^H \mathbf{W} \mathbf{D}_L (\mathbf{D}_L^H \mathbf{W} \mathbf{D}_L)^{-1} \mathbf{D}_L^H \mathbf{W} \mathbf{D}_R]   \notag \\
&=  \frac{1}{L_r}\{[\mathbf{V}_{\bm{r}}~ \mathbf{V}_{\bm{\nu}}]^H\mathbf{W}^{1/2} \mathbf{\Pi}^{\bot}_{\mathbf{W}^{1/2}\mathbf{D}_L} \mathbf{W}^{1/2} [\mathbf{V}_{\bm{r}}~ \mathbf{V}_{\bm{\nu}}]\}^{-1}               \notag \\
&= \frac{1}{L_r}  \left[   
 \begin{matrix}
            \mathbf{N}_{11}       &    \mathbf{N}_{12}   \\
            \mathbf{N}_{21}      &    \mathbf{N}_{22}
             \end{matrix} 
\right]^{-1},
\end{align}
where 
\begin{subequations}
\begin{align}
\mathbf{N}_{11}& =\mathbf{V}_{\bm{r}}^H \mathbf{W}^{1/2} \mathbf{\Pi}^{\bot}_{\mathbf{W}^{1/2}\mathbf{D}_L} \mathbf{W}^{1/2} \mathbf{V}_{\bm{r}},   \\
\mathbf{N}_{12}& =\mathbf{V}_{\bm{r}}^H \mathbf{W}^{1/2} \mathbf{\Pi}^{\bot}_{\mathbf{W}^{1/2}\mathbf{D}_L} \mathbf{W}^{1/2} \mathbf{V}_{\bm{\nu}},  \\
\mathbf{N}_{21}& =\mathbf{V}_{\bm{\nu}}^H \mathbf{W}^{1/2} \mathbf{\Pi}^{\bot}_{\mathbf{W}^{1/2}\mathbf{D}_L} \mathbf{W}^{1/2} \mathbf{V}_{\bm{r}}, 
\end{align}
and
\begin{align}
\mathbf{N}_{22}& =\mathbf{V}_{\bm{\nu}}^H \mathbf{W}^{1/2} \mathbf{\Pi}^{\bot}_{\mathbf{W}^{1/2}\mathbf{D}_L} \mathbf{W}^{1/2} \mathbf{V}_{\bm{\nu}}.
\end{align}
\end{subequations}
Then, the CRBs of $\bm{r}$ and $\bm{\nu}$ are, respectively,
\begin{align}
\mathrm{CRB}(\bm{r}) &= \frac{1}{L_r} (\mathbf{N}_{11}-\mathbf{N}_{12}\mathbf{N}_{22}^{-1}\mathbf{N}_{21})^{-1}                 \notag \\
&=\frac{1}{L_r} \left\{ \mathbf{V}_{\bm{r}}^H \mathbf{W}^{1/2} \mathbf{\Pi}_{\mathbf{W}^{1/2}\mathbf{D}_L}^{\bot}  \left(\mathbf{\Pi}_{\mathbf{\Pi}_{\mathbf{W}^{1/2}\mathbf{D}_L}^{\bot}  \mathbf{W}^{1/2}\mathbf{V}_{\bm{\nu}}}^{\bot}\right) \right. \notag  \\
&~~ \times \left. \mathbf{\Pi}_{\mathbf{W}^{1/2}\mathbf{D}_L}^{\bot}  \mathbf{W}^{1/2}\mathbf{V}_{\bm{r}} \right\}^{-1},
\end{align}
and
\begin{align}
\mathrm{CRB}(\bm{\nu}) &= \frac{1}{L_r} (\mathbf{N}_{22}-\mathbf{N}_{21}\mathbf{N}_{11}^{-1}\mathbf{N}_{12})^{-1}                 \notag \\
&=\frac{1}{L_r} \left\{ \mathbf{V}_{\bm{\nu}}^H \mathbf{W}^{1/2} \mathbf{\Pi}_{\mathbf{W}^{1/2}\mathbf{D}_L}^{\bot}  \left(\mathbf{\Pi}_{\mathbf{\Pi}_{\mathbf{W}^{1/2}\mathbf{D}_L}^{\bot}  \mathbf{W}^{1/2}\mathbf{V}_{\bm{r}}}^{\bot}\right) \right. \notag  \\
&~~ \times \left. \mathbf{\Pi}_{\mathbf{W}^{1/2}\mathbf{D}_L}^{\bot}  \mathbf{W}^{1/2}\mathbf{V}_{\bm{\nu}} \right\}^{-1}.
\end{align}

The joint CRBs exist if and only if the FIM $\mathbf{J}(\bm{\gamma})$ is nonsingular.  From (\ref{eq:PA_CRB2}), the (positive definite) matrix $\mathbf{R}(\bm{\gamma}) \succ 0$, leading to $\mathbf{R}^{-1}(\bm{\gamma}) \succ 0  $. Then, we have $\mathbf{W}= \mathbf{R}^{-T}(\bm{\gamma}) \otimes \mathbf{R}^{-1}(\bm{\gamma}) \succ 0$, implying that $\mathbf{W}$ is nonsingular. 

For the necessary condition, i.e. if the CRBs exist, namely the $\mathbf{J}(\bm{\gamma})$ is nonsingular, we have
\begin{align}
\mathrm{rank}([\mathbf{D}_L~ \mathbf{D}_R])  &= \mathrm{rank}([\mathbf{D}_L ~ \mathbf{D}_R]^H \mathbf{W} [\mathbf{D}_L ~ \mathbf{D}_R])         \notag \\
&=\mathrm{rank}(\mathbf{J}(\bm{\gamma})/L_r)                                 \notag \\
&=4Q  .  
\end{align}

For the sufficient condition, i.e. if $\mathrm{rank}([\mathbf{D}_L~ \mathbf{D}_R]) = 4Q$, then
\begin{align}
\mathrm{rank}(\mathbf{J}(\bm{\gamma})/L_r) &= \mathrm{rank}([\mathbf{D}_L ~ \mathbf{D}_R]^H \mathbf{W} [\mathbf{D}_L ~ \mathbf{D}_R])        \notag \\
&=  \mathrm{rank}([\mathbf{D}_L ~ \mathbf{D}_R]                      \notag \\
&= 4Q,
\end{align}
which implies that $\mathbf{J}(\bm{\gamma})$ is nonsingular. This concludes the proof.

\section{Recovery Guarantees for Other L-shaped Arrays}
\label{app:perf_other_arrays}

\begin{corollary}                \label{Coro_otherFDAs}
  (UUC, UCU, UCC, CUU, CCU, and CUC FDAs) Consider an L-shaped FDA with $P_s=N_s+2M_s-1$ sensors along each axis. Each sensor transmits a total of $K=N_t+2M_t-1$ pulses in a CPI. The fundamental spatial spacing and fundamental PRI are $d$ and $T$, respectively. If\\
  for UUC: \upshape\textbf{C1}-\textbf{C3} \itshape and \upshape\textbf{C5}-\textbf{C6} \itshape hold,\\ 
  for UCU: \upshape\textbf{C1}-\textbf{C3} \itshape and \upshape\textbf{C6}-\textbf{C7} \itshape hold,\\
  for UCC: \upshape\textbf{C1}-\textbf{C3} \itshape and \upshape\textbf{C5}-\textbf{C6} \itshape hold,\\
  for CUU: \upshape\textbf{C1}-\textbf{C3} \itshape and \upshape\textbf{C6}-\textbf{C7} \itshape hold,\\ 
  for CCU: \upshape\textbf{C1}-\textbf{C4} \itshape and \upshape\textbf{C6} \itshape hold,\\
  for CUC: \upshape\textbf{C1}-\textbf{C3} \itshape and \upshape\textbf{C5}-\textbf{C6} \itshape hold,\\
  then the unknown parameter set $\bm{\gamma}_q=\{\theta_q, \varphi_q,r_q,\nu_q\}_{q=1}^{Q}$ of $Q$ far-field targets are perfectly recovered from $\mathbf{R}_{XZ}$ with the lower bounds on the number of physical sensor elements and the number of transmit pulses as, respectively,\\
    $\textrm{for UUC: } P_s > Q \textrm{\upshape{ and }} K > 2\sqrt{Q+1}-2$,\\
    $\textrm{for UCU: } P_s > Q \textrm{\upshape{ and }} K > Q$,\\
    $\textrm{for UCC: } P_s > Q \textrm{\upshape{ and }} K > 2\sqrt{Q+1}-2$,\\
    $\textrm{for CUU: } P_s > Q \textrm{\upshape{ and }} K > Q$,\\
    $\textrm{for CCU: } P_s > 2\sqrt{Q+1}-2 \textrm{\upshape{ and }} K > Q$,\\
    $\textrm{for CUC: } P_s > Q \textrm{\upshape{ and }} K > 2\sqrt{Q+1}-2$.
\end{corollary}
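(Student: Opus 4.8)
The plan is to treat all six acronyms uniformly as hybrids of the fully co-prime C-Cube of Theorem~\ref{Theo_rec} and the fully uniform U-Cube of Corollary~\ref{Coro_UUU}, reusing the factorization machinery of Proposition~\ref{Pro_TL} and the auto-pairing chain (\ref{eq:RA26})--(\ref{eq:RA32}) essentially verbatim. The three letters select, independently, whether the spatial domain (array geometry), the spectral/range domain (FOs), and the slow-time/Doppler domain (PRIs) are sampled co-prime or uniform. For a co-prime domain I would retain the virtual coarray manifold ($\widetilde{\mathbf{A}}_x,\widetilde{\mathbf{A}}_z$; $\widetilde{\mathbf{C}}$; or $\widetilde{\mathbf{B}}$, with $2L_s+1$ or $2L_t+1$ rows); for a uniform domain I would substitute the physical manifold ($\mathbf{A}_x,\mathbf{A}_z$; $\mathbf{C}$; or $\mathbf{B}$) exactly as in the proof of Corollary~\ref{Coro_UUU}. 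In every case \textbf{C1}--\textbf{C3} ensure, as before, that the manifolds are Vandermonde with distinct columns and that $\angle\Phi_{qq},\angle\Psi_{qq},\angle\Gamma_{qq},\angle\Omega_{qq}$ recover the physical parameters unambiguously.

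First I would establish that the left block (the stacked spatial manifold) has full column rank $Q$, copying (\ref{eq:Pro1}): this holds under \textbf{C4} when the array is co-prime and under \textbf{C6} when it is uniform. Next, for the right block I would show that the Kruskal rank of the Khatri--Rao product of the range and Doppler manifolds equals $Q$ through the Kruskal bound (\ref{eq:Pro5}); the range manifold attains Kruskal rank $Q$ under \textbf{C4} or \textbf{C6} according as the FO is co-prime or uniform, and the Doppler manifold under \textbf{C5} or \textbf{C7} according as the PRI is co-prime or uniform. With both blocks full rank, the argument of Proposition~\ref{Pro_TL} furnishes invertible $\mathbf{T}_L$ and $\mathbf{T}_R$, so $\mathbf{\Phi},\mathbf{\Psi}$ and $\mathbf{\Gamma},\mathbf{\Omega}$ are recovered up to one shared permutation and the pairing step (\ref{eq:RA26})--(\ref{eq:RA32}) carries over without change.

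To turn rank conditions into element and pulse counts I would reuse the arithmetic--geometric-mean step of Theorem~\ref{Theo_rec}: for a co-prime domain, $M_sN_s+M_s-1\le(P_s+2)^2/8-1$ converts $L_s>(Q-1)/2$ into $P_s>2\sqrt{Q+1}-2$ (and similarly for $L_t$ and $K$), whereas for a uniform domain the physical manifold delivers $P_s>Q$ (resp. $K>Q$) directly. Because $P_s$ is constrained simultaneously by the array and the FO, the operative bound is the larger of the two: $P_s>Q$ whenever either the array or the FO is uniform (UUC, UCU, UCC, CUU, CUC), and $P_s>2\sqrt{Q+1}-2$ only when both are co-prime (CCU); the bound on $K$ is fixed solely by the PRI, giving $K>2\sqrt{Q+1}-2$ for co-prime PRI and $K>Q$ for uniform PRI.

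The delicate point --- and the step I expect to be the main obstacle --- is precisely this coupling on $P_s$, where one must verify that a uniform condition subsumes the co-prime condition it displaces so that the abbreviated hypothesis lists remain sufficient. This reduces to the elementary identities $L_s-(P_s-1)=(M_s-1)(N_s-1)\ge 0$ and $L_t-(K-1)=(M_t-1)(N_t-1)\ge 0$, which give $L_s\ge P_s-1$ and $L_t\ge K-1$; together with $Q-(2\sqrt{Q+1}-2)=(\sqrt{Q+1}-1)^2>0$ for $Q\ge 1$, they show that \textbf{C6} implies \textbf{C4} and \textbf{C7} implies \textbf{C5}. Hence in, say, UCU the uniform-array bound $P_s>Q$ already guarantees the co-prime-FO requirement $L_s>(Q-1)/2$, so listing \textbf{C6} alone suffices, and analogously for the remaining five cases.
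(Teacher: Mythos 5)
Your proposal is correct and follows essentially the same route as the paper: the paper's proof of this corollary is a one-liner that replaces each virtual manifold by its physical counterpart in the uniform domains (e.g., for UCU it keeps $\widetilde{\mathbf{C}}$ but substitutes $\mathbf{A}_x$, $\mathbf{A}_z$, $\mathbf{B}$) and then repeats the steps of Theorem~\ref{Theo_rec} \emph{ceteris paribus}, exactly your hybrid substitution scheme.

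Two remarks. First, what you add beyond the paper is the explicit verification that the abbreviated hypothesis lists still cover the virtual manifolds that are retained: your identities $L_s-(P_s-1)=(M_s-1)(N_s-1)\ge 0$ and $L_t-(K-1)=(M_t-1)(N_t-1)\ge 0$, giving \textbf{C6}$\Rightarrow$\textbf{C4} and \textbf{C7}$\Rightarrow$\textbf{C5}, are needed for UCU, UCC, CUU, and CUC (where a co-prime domain appears but only the uniform conditions are listed), and the paper's proof leaves this step entirely implicit; making it explicit is a genuine improvement. Second, a caveat on CCU: your argument (correctly) invokes \textbf{C7} so that the physical Doppler manifold $\mathbf{B}$ has full column rank, which is what produces the bound $K>Q$, whereas the statement lists \textbf{C1}--\textbf{C4} and \textbf{C6}. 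Since \textbf{C6} constrains $P_s$ and says nothing about $K$, the CCU case as printed cannot deliver $K>Q$ (a $K\times Q$ Vandermonde matrix with $K\le Q$ cannot have rank $Q$, so recovery of $\mathbf{\Gamma}$ fails); this is evidently a typo in the paper --- \textbf{C6} should read \textbf{C7} --- and the paper's own proof, which replaces $\widetilde{\mathbf{B}}$ by $\mathbf{B}$ for CCU, tacitly requires \textbf{C7} just as yours does. So your proof establishes the intended statement; it simply does not (and cannot) establish the literal CCU hypothesis list.
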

\begin{IEEEproof}
    For UUC, replace the virtual manifold matrices $\widetilde{\mathbf{A}}_x$, $\widetilde{\mathbf{A}}_z$, and $\widetilde{\mathbf{C}}$ with the corresponding physical manifold matrices $\mathbf{A}_x$, $\mathbf{A}_z$, and $\mathbf{C}$, respectively. For UCU, replace the virtual manifold matrices $\widetilde{\mathbf{A}}_x$, $\widetilde{\mathbf{A}}_z$, and $\widetilde{\mathbf{B}}$ with the corresponding physical manifold matrices $\mathbf{A}_x$, $\mathbf{A}_z$, and $\mathbf{B}$, respectively. For UCC, replace the virtual manifold matrices $\widetilde{\mathbf{A}}_x$ and $\widetilde{\mathbf{A}}_z$ with the corresponding physical manifold matrices $\mathbf{A}_x$ and $\mathbf{A}_z$, respectively.
    
    For CUU, replace the virtual manifold matrices $\widetilde{\mathbf{C}}$ and $\widetilde{\mathbf{B}}$ with the corresponding physical manifold matrices $\mathbf{C}$ and $\mathbf{B}$, respectively. For CCU, replace the virtual manifold matrix $\widetilde{\mathbf{B}}$ with the corresponding physical manifold matrix $\mathbf{B}$. For CUC, replace the virtual manifold matrix $\widetilde{\mathbf{C}}$ with the corresponding physical manifold matrix $\mathbf{C}$. 
    
    Then, \textit{ceteris paribus}, the result follows from repeating the steps of the proof in Theorem \ref{Theo_rec}. 
\end{IEEEproof}

\begin{corollary}                \label{Coro_UU}
  (U-U, U-C, C-U, and C-C) Consider an L-shaped array  with $P_s=N_s+2M_s-1$ sensors along each axis. Each sensor transmits a total of $K=N_t+2M_t-1$ pulses in a CPI. The fundamental spatial spacing and fundamental PRI are $d$ and $T$, respectively. If\\
  for U-U: \upshape\textbf{C1}-\textbf{C2} \itshape and \upshape\textbf{C6}-\textbf{C7} \itshape hold,\\ 
  for U-C: \upshape\textbf{C1}-\textbf{C2} \itshape and \upshape\textbf{C5}-\textbf{C6} \itshape hold,\\ 
  for C-U: \upshape\textbf{C1}-\textbf{C2} \itshape and \upshape\textbf{C4}-\textbf{C7} \itshape hold,\\
  for C-C: \upshape\textbf{C1}-\textbf{C2} \itshape and \upshape\textbf{C4}-\textbf{C5} \itshape hold,\\
  then the unknown parameter set $\bm{\gamma}_q=\{\theta_q, \varphi_q,\nu_q\}_{q=1}^{Q}$ of $Q$ far-field targets are perfectly recovered from $\mathbf{R}_{XZ}$ with the lower bounds on the number of physical sensor elements and the number of transmit pulses as, respectively,\\
    $\textrm{for U-U: } P_s > Q \textrm{\upshape{ and }} K > Q$,\\
    $\textrm{for U-C: } P_s > Q \textrm{\upshape{ and }} K > 2\sqrt{Q+1}-2$,\\
    $\textrm{for C-U: } P_s > 2\sqrt{Q+1}-2 \textrm{\upshape{ and }} K > Q$,\\
    $\textrm{for C-C: } P_s > 2\sqrt{Q+1}-2 \textrm{\upshape{ and }} K > 2\sqrt{Q+1}-2$.
\end{corollary}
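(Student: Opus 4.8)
The plan is to obtain Corollary~\ref{Coro_UU} as a specialization of Theorem~\ref{Theo_rec}, exploiting the fact that the four listed arrays carry no frequency offsets. When $\Delta f = 0$, the range steering vector $\mathbf{c}(r_q)$ in (\ref{eq:SM8}) degenerates to an all-ones vector and the range coarray factor $\widetilde{\mathbf{C}}$ drops out of the model. Consequently the beampattern is purely angle-dependent, the range parameter $r_q$ is unobservable, and the unknown set collapses to $\bm{\gamma}_q = \{\theta_q, \varphi_q, \nu_q\}$. The concatenated coarray covariance then factors as $\mathbf{R}_{XZ} = [\widetilde{\mathbf{A}}_x^T~\widetilde{\mathbf{A}}_z^T]^T \mathbf{R}_{\bm{\rho}} \widetilde{\mathbf{B}}^T$ in the C-C case, i.e. the Khatri--Rao product $\widetilde{\mathbf{C}} \odot \widetilde{\mathbf{B}}$ appearing in (\ref{eq:RA2}) is replaced by the single Doppler factor $\widetilde{\mathbf{B}}$.

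First I would instantiate the correct manifold matrices for each of the four configurations, exactly as in the proofs of Corollary~\ref{Coro_UUU} and Corollary~\ref{Coro_otherFDAs}: a leading \textit{U} replaces the virtual angle manifolds $\widetilde{\mathbf{A}}_x, \widetilde{\mathbf{A}}_z$ by the physical $\mathbf{A}_x, \mathbf{A}_z$ (invoking \textbf{C6} in place of \textbf{C4} for full column rank), while a trailing \textit{U} replaces $\widetilde{\mathbf{B}}$ by the physical $\mathbf{B}$ (invoking \textbf{C7} in place of \textbf{C5}); the \textit{C} entries retain the coarray manifolds and the associated conditions \textbf{C4}, \textbf{C5}. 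In every case \textbf{C1} and \textbf{C2} are the Nyquist-type conditions on $d$ and $T$ that keep, respectively, the angular phases $\pi\sin\theta_q\sin\varphi_q$, $\pi\sin\theta_q\cos\varphi_q$ and the Doppler phase $4\pi\nu_q T/\lambda_b$ within an unambiguous interval.

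Next I would replay the argument of Proposition~\ref{Pro_TL} in this reduced setting. The relevant Vandermonde factors have full column rank $Q$ under the stated conditions, so the stacked angle manifold $[\widetilde{\mathbf{A}}_x^T~\widetilde{\mathbf{A}}_z^T]^T$ (or its physical counterpart) and the Doppler factor $\widetilde{\mathbf{B}}$ (or $\mathbf{B}$) both have rank $Q$; here the simplification is that the right factor is a single Vandermonde matrix, so the Khatri--Rao Kruskal-rank step (\ref{eq:Pro5}) is unnecessary. This yields invertible $\mathbf{T}_L, \mathbf{T}_R$ with $\mathrm{range}([\widetilde{\mathbf{A}}_x^T~\widetilde{\mathbf{A}}_z^T]^T) = \mathrm{range}(\mathbf{U}_1)$ and $\mathrm{range}(\widetilde{\mathbf{B}}^*) = \mathrm{range}(\mathbf{V}_1)$. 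The ESPRIT rotational-invariance steps (\ref{eq:RA7_1})--(\ref{eq:RA10}) then recover $(\theta_q,\varphi_q)$ from the left singular vectors and (\ref{eq:RA15}) recovers $\nu_q$ from the right singular vectors, while the identity $\mathbf{T}_L \mathbf{\Lambda} \mathbf{T}_R = \mathbf{R}_{\bm{\rho}}$ of (\ref{eq:RA26}) still holds and auto-pairs Doppler with DoA. Finally, the element and pulse bounds follow as in Theorem~\ref{Theo_rec}: the arithmetic-geometric-mean inequality converts \textbf{C4} into $P_s > 2\sqrt{Q+1}-2$ and \textbf{C5} into $K > 2\sqrt{Q+1}-2$ for the co-prime dimensions, whereas \textbf{C6}, \textbf{C7} give directly $P_s > Q$, $K > Q$ for the uniform dimensions.

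The main obstacle I anticipate is conceptual rather than computational: I must verify that the auto-pairing mechanism, which in the full C-Cube was motivated by the FDA's range-and-angle-dependent beampattern, still functions when range is absent. The resolution is that the pairing in (\ref{eq:RA26})--(\ref{eq:RA32}) never used range per se; it used only the common diagonal $\mathbf{R}_{\bm{\rho}}$ shared by the left and right factorizations. Since that diagonal persists after $\widetilde{\mathbf{C}}$ is dropped, the Doppler estimates remain automatically paired with the 2-D DoA estimates, and the recovery guarantee goes through \emph{ceteris paribus}.
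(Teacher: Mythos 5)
Your proposal is correct and takes essentially the same route as the paper: Corollary~\ref{Coro_UU} is proved there by specializing Theorem~\ref{Theo_rec} together with the manifold-substitution device of Corollaries~\ref{Coro_UUU} and~\ref{Coro_otherFDAs}, i.e.\ dropping the FO-induced factor from $\mathbf{R}_{XZ}$, replacing virtual coarray manifolds by physical ones wherever the configuration is uniform (so that \textbf{C4}/\textbf{C5} versus \textbf{C6}/\textbf{C7} supply the rank conditions), and letting the AM--GM step convert the coarray conditions into the $2\sqrt{Q+1}-2$ bounds, exactly as you describe. One point in your favor: the paper's own one-paragraph proof asserts that the matrix $\mathbf{B}$ ``does not exist'' when FOs are absent, but since the retained parameter is Doppler $\nu_q$ and the discarded one is range $r_q$, it is the range/FO matrix $\mathbf{C}$ (equivalently $\widetilde{\mathbf{C}}$) that drops out, as you correctly identify --- and your observation that the auto-pairing mechanism survives because it rests only on the shared diagonal $\mathbf{R}_{\bm{\rho}}$, not on the range factor, is the right justification for why the argument goes through unchanged.
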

\begin{IEEEproof}
    Since there are no FOs in the L-shaped arrays U-U, U-C, C-U and C-C, the matrix $\mathbf{B}$ does not exist in these cases. Hence, \textit{ceteris paribus}, the result follows from repeating the steps of the proofs of Theorem~\ref{Theo_rec}, Corollary~\ref{Coro_UUU}, and Corollary~\ref{Coro_otherFDAs}, without considering any matrix $\mathbf{B}$ in the computations. The results of U-U, U-C, C-U, and C-C then follow from that of U-Cube, UUC, CCU, and C-Cube, respectively. 
\end{IEEEproof}

\bibliographystyle{IEEEtran}
\bibliography{main}

\end{document}